\begin{document}

\title[Perturbations of random Schr\"odinger operators]{Perturbations of 
	continuum random Schr\"odinger operators\\ 
	with applications to Anderson orthogonality \\
	and the spectral shift function
	}

\author[A.\ Dietlein]{Adrian Dietlein}
\address[A.\ Dietlein]{Institute of Science and Technology Austria,
  Am Campus 1,
  3400 Klosterneuburg, Austria}
\email{adrian.dietlein@ist.ac.at}

\author[M.\ Gebert]{Martin Gebert}
\address[M.\ Gebert]{Department of Mathematics, University of California, Davis, Davis, CA 95616, USA}
\thanks{M.G. was  supported by the DFG under grant GE 2871/1-1.}
\email{mgebert@math.ucdavis.edu}

\author[P.\ M\"uller]{Peter M\"uller}
\address[P.\ M\"uller]{Mathematisches Institut,
  Ludwig-Maximilians-Universit\"at M\"unchen,
  Theresienstra\ss{e} 39,
  80333 M\"unchen, Germany}
\email{mueller@lmu.de}

\begin{abstract}
	We study effects of a bounded and compactly supported perturbation on multi-dimensional continuum random 
	Schr\"odinger operators in the region of complete localisation. Our main emphasis is on Anderson orthogonality 
	for random Schr\"odinger operators. Among others, we prove that Anderson orthogonality does occur for Fermi 
	energies in the region of complete localisation with a non-zero probability.  
	This partially confirms recent non-rigorous findings
	[V.\ Khemani et al., Nature Phys. \textbf{11}, 560--565 (2015)].
	The spectral shift function plays an important role in our analysis of Anderson orthogonality. We identify it
	with the index of the corresponding pair of spectral projections and explore the consequences thereof. 	
	All our results rely on the main technical estimate of this paper which guarantees
	separate exponential decay of the disorder-averaged Schatten $p$-norm
	of $\chi_{a}(f(H) - f(H^{\tau})) \chi_{b}$ in $a$ and $b$. Here, $H^{\tau}$ is a perturbation of the random 
	Schr\"odinger operator $H$, $\chi_{a}$ is the multiplication operator corresponding
	to the indicator function of a unit cube centred about $a\in\R^{d}$, and
	$f$ is in a suitable class of functions of bounded variation with
	distributional derivative supported in the region of complete localisation
	for $H$.
\end{abstract}

\maketitle

\tableofcontents


\section{Introduction}

In this paper we consider an alloy-type random Schr\"odinger operator $H$ in multi-dimensional Euclidean space $\R^{d}$ and its perturbation $H^{\tau}$ by a bounded and compactly supported potential $W$ with coupling constant $\tau$.
Our main goal is to investigate Anderson orthogonality for random Schr\"odinger operators in the region of complete localisation. These results are compiled in Section~\ref{subsec:aoc}. Anderson orthogonality refers to the vanishing of the ground-state overlap \eqref{eq:DefOverlap} of the unperturbed and the perturbed system in the thermodynamic limit and constitutes a fundamental property of fermionic systems \cite{artAOC1967And, RevModPhys.62.929, mahan-book00}. It was believed in physics for many years that this phenomenon was absent for localised systems \cite{PhysRevB.65.081106}. However, the quest for an understanding of many-body localisation has led to a revival of such questions in physics, and 
the very recent non-rigorous studies \cite{Khemani-NaturePhys15, PhysRevB.92.220201}
revealed interesting phenomena arising in this situation. They suggest: \; (i)~\,~Anderson orthogonality \emph{does} occur for random Schr\"odinger operators when the Fermi energy is in the region of complete localisation, but with a probability strictly between 0 and 1.  \; (ii)~\,~If the overlap vanishes in the thermodynamic limit, it vanishes even exponentially fast in the system size (named ``statistical'' Anderson orthogonality). 
In this paper, we provide the first partial mathematical understanding of these findings: 
for a Fermi energy $E$ in the region of complete localisation, we prove in Theorem~\ref{th:IntMainRes} 
convergence of the expectations and almost-sure convergence of the finite-volume overlap to a random variable $S(E)$, which is given by a Fredholm determinant related to the difference of Fermi projections $\id_{(-\infty,E]}(H) - \id_{(-\infty,E]}(H^{\tau})$.
Anderson orthogonality occurs if and only if this difference has an eigenvalue of modulus 1. If $S(E) \neq 0$, we prove exponential speed of convergence in the thermodynamic limit. We also identify conditions where both presence and absence of Anderson orthogonality occur with positive probability. 

As an aside we obtain new criteria which offer a possibility to detect spectral phases of random Schr\"odinger operators other than the region of complete localisation.  They are  formulated in Corollaries~\ref{phase-transition} and ~\ref{rm:notFMB}.

A main tool in our analysis of Anderson orthogonality is the spectral shift function of the 
pair $(H, H^{\tau})$ for energies in the region of complete localisation. We obtain several new results on this subject which are compiled in Section~\ref{subsec:ssf}. In Theorem~\ref{th:SSFandTransOp} we establish convergence of the finite-volume spectral shift functions towards the infinite-volume spectral shift function and identify the latter with the index of the corresponding pair of spectral projections. We show H\"older continuity of the averaged spectral shift function in Lemma~\ref{lem:HoelderCont} and find a \emph{lower} bound in Theorem~\ref{lem:SSFcont} which establishes strict positivity of the averaged spectral shift function under appropriate conditions. 

All of the aforementioned results are applications of the main technical estimates of this paper, which 
are summarised in Section~\ref{sect:TraceClassEst} and are of their own interest.
Theorem~\ref{th:TrClBounds1} states that the probabilistic expectation of fractional moments of the Schatten $p$-norm of $\chi_{a}\big(f(H_G) - f(H_G^{\tau})\big) \chi_{b}$ exhibits \emph{separate} exponential decay in $a$ and $b$. This means that we have decay in both the distance of $a$ to the support of $W$ and the distance of $b$ to the support of $W$. Here, $H_G^{(\tau)}$ is the Dirichlet restriction of $H^{(\tau)}$ to an open region $G \subseteq \R^{d}$,
$\chi_{a}$ is the multiplication operator corresponding to the indicator function of a unit cube centred about $a\in\R^{d}$ and $f$ is a function of bounded variation 
on $\R$ whose support is bounded from the right and which is constant except for energies in a compact 
interval inside the region of complete localisation for $H$. (A non-trivial example in this class of functions is the indicator function $f=\id_{(-\infty,E]}$, where $E$ is in the region of complete localisation.)
An analogous estimate holds for perturbations by boundary conditions and is stated in Theorems~\ref{th:TrClBounds3}. Taken together, they allow to control macroscopic limits of differences 
$f(H_{L}) - f(H^{\tau}_L)$ in Schatten $p$-norms for finite-volume operators restricted to 
boxes of size $L$, see Theorem \ref{Thm:prod}.

Finally, we advertise two points in our proofs whose validity is not restricted to the region of complete localisation and which may be of independent interest. 
Firstly, in order to express $f(K)$, $K$ a self-adjoint operator, in terms of the resolvent we formulate a 
suitably adapted version of the Helffer--Sj\"ostrand formula in Lemma~\ref{lem:LocHelfSjo} which applies to compactly supported functions of bounded variation. In order to treat such general functions $f$ we trade regularity of $f$ against boundedness of (spatially localised) resolvents. 
The second result is a scale-free unique continuation principle for averaged local traces of non-negative functions of \emph{infinite-volume} ergodic random Schr\"odinger operators in Lemma~\ref{lem:UCP}. It follows from a scale-free unique continuation principle for spectral projections of \emph{finite-volume} random Schr\"odinger operators \cite{MR3106507, Nakic:2015is} by a suitable limit procedure. 
We are not aware that this consequence has been noticed elsewhere.


\section{The model}

We consider a random Schr\"odinger operator \cite{carlac1990random,pastfig1992random, MR1935594, AizWarBook}
\begin{equation}
	\label{eq:TheOperator} 
	\omega \mapsto H_{\omega} := H_0+V_\omega := H_0 +\lambda \sum_{k\in\Zd} \omega_k u_k
\end{equation}
acting on a dense domain in the Hilbert space $L^{2}(\Rn)$ for $d\in\N$ and $\lambda\geq 0$. Here, 
$H_0$ is a non-random self-adjoint operator and $\omega \mapsto V_{\omega}$ is a random alloy-type potential subject 
to the following assumptions.

\begin{MyDescription}
\item[(K)]{
	\label{assK}
	The non-random operator $H_0 := -\Delta+V_0$ consists of the non-negative Laplacian $-\Delta$ on 
	$d$-dimensional Euclidean space $\Rn$ and of a deterministic, $\Zd$-periodic and bounded background potential
	$V_0 \in L^{\infty}(\R^{d})$.  
	}
\item[(V1)]{ 
	\label{assV1}	
	The family of canonically realised random coupling constants $\omega:=(\omega_k)_{k\in\Zd} 
	\in \R^{\Z^{d}}$ is distributed independently and identically according to the Borel probability measure 
	$\PP := \bigotimes_{\Z^{d}}P_{0}$ on $\R^{\Z^{d}}$. We write $\E$ for the corresponding expectation. 
	The single-site distribution $P_0$ is absolutely continuous with respect to Lebesgue measure on $\R$ 
	and the corresponding Lebesgue density $\rho$ obeys $0 \le \rho\in L^\infty_{c}(\R)$ and 
	$\supp(\rho) \subseteq [0,1]$.
  }
\item[(V2)]{
	\label{assV2}	
	The single-site potentials $u_k(\,\cdot\,):=u(\,\cdot\,-k)$, $k\in\Zd$, are translates 
	of a non-negative function $0 \le u\in L^\infty_{c}(\R^d)$. Moreover, there exists a constant
	$C_{u,-}$ such that the covering condition
	\begin{equation}
		\label{uk-bounds}
		0< C_{u,-} \leq \sum_{k\in\Zd} u_k  
	\end{equation}
	holds.
	}
\end{MyDescription}

\noindent
The assumption $\supp(\rho) \subseteq [0,1]$ on the support of the single-site probability density is 
made for convenience only and is not stronger than requiring $\supp(\rho)$ to be compact. 
The regularity of $P_0$ as well as the covering condition \eqref{uk-bounds}  
guarantee that the fractional-moment bounds \eqref{eq:DefFMB} and the a priori bounds
\eqref{eq:AppAPriori} below hold in fair generality for our model, see also Remarks~\ref{rem:FMBS} below. 
	
Throughout the paper we drop the subscript $\omega$ from $H$ and other quantities when we think of 
these quantities as random variables (as opposed to their particular realisations). We write 
\textit{almost surely} if an event occurs  $\mathbb{P}$-almost surely, and, given an $E$-dependent statement for
$E \in B \in \Borel(\R)$,  we write \textit{for a.e.\ $E\in B$}, if this statement holds for 
Lebesgue-almost every $E \in B$. 

The alloy-type random Schr\"odinger operator \eqref{eq:TheOperator} is $\Zd$-ergodic with respect to 
lattice translations. It follows that there exists a closed set $\Sigma\subset\R$, the non-random spectrum 
of $H$, such that $\Sigma=\sigma(H)$ holds almost surely \cite{pastfig1992random}. Here, $\sigma(H)$ is 
the spectrum of $H$.  

Given an open subset $G\subseteq \Rn$, we write $H_G$ for the Dirichlet restriction of $H$ to $G$. 
We define the random finite-volume eigenvalue counting function
\beq
	\label{eq:ev-count}
	\R \ni E \mapsto N_L(E):= \Tr \left(\id_{(-\infty,E]}(H_L)\right)
\eeq
for $L>0$, where $\id_{B}$ stands for the indicator function of a set $B\in\Borel(\R)$, $H_L:=H_{\Lambda_L}$ 
and $\Lambda_L:=(-L/2,L/2)^{d}$ for the open cube about the origin of side-length $L$.
We write $|\Lambda_{L}| = L^{d}$ for the Lebesgue volume of the latter. 
The regularity assumption \ref{assV1} on the single-site distribution ensures a (standard) Wegner 
estimate \cite{MR2362242,Kirsch:2007il,MR2378428}. The Wegner estimate and ergodicity imply that the limit
\beq
	\label{eq:DOSFinVol}
	\cN(E):= \lim_{L\to\infty} \frac 1 {|\Lambda_L|} N_L(E)
\eeq
exists for all $E\in\R$ almost surely. The limit function $\cN$ is called the 
\emph{integrated density of states} of $H$. 
Moreover, the Wegner estimate implies that it is an absolutely continuous 
function under our assumptions. Its Lebesgue derivative $\cN'$ is called the \emph{density of states}.

This paper is concerned with local perturbations of the random Schr\"odinger operator $H$.
For a bounded and compactly supported function $W \in L^{\infty}_c(\Rn)$, not necessarily of definite sign, we set 
\beq
	\label{def:perturbedoperator}
	H^{\tau}:=H+\tau W,
\eeq
where $\tau\in [0,1]$ is the tunable strength of the perturbation and $H^0=H$ is the unperturbed operator. 
We assume, from now on, that $W$ is fixed. The dependence of our results on the strength of the perturbation 
will be analysed through the dependence on $\tau$. 
We note that $H^{\tau}$ has the same integrated density of states as $H$.

All results in this paper refer to energies in the \emph{region of complete localisation} -- following the terminology of \cite{Germinet2006}. We characterise it in terms of fractional-moment bounds 
\cite{artRSO2006AizEtAl2}, see also \cite{MR2303305}. To this end we need to introduce some more notation. 
Given a self-adjoint operator $A$, we denote by $R_z(A):=(A-z)^{-1}$ its resolvent at 
$z\in \C\setminus \sigma(A)$. We write $Q_a:=a+[-1/2,1/2]^d$ for the (closed) unit cube centred at 
$a\in\R^d$ and $\chi_a:=\id_{Q_a}$ for its indicator function.
We denote the supremum norm on $\R^d$ by $|x|:= \sup_{j=1,\ldots,d} |x_j|$, where 
$x= (x_{1}, \ldots, x_{d}) \in\Rn$.
Finally, $\dist(U,V):=\inf \{  |x-y|:\ x\in U,\ y\in V\}$ stands for the distance of two subsets 
$U,V\subset \R^d$ with respect to the supremum norm.

\begin{definition}[Fractional-moment bounds]
	\label{DefFMB}
	We write $E\in \FMB:=\FMB(H)$ if there exists a neighbourhood $U:=U_E$ of $E$ in $\R$ and an exponent 
	$0<s<1$	such that the following holds: there exist constants $C,\mu>0$ such that for all open subsets 
	$G\subseteq \Rn$ and all $a,b \in \Rn$ the bound
	\begin{equation}
		\label{eq:DefFMB}
		\sup_{\substack{E'\in U_E\\ \eta\neq 0}} \mathbb{E}\left[\|\chi_a R_{E'+\i\eta}(H_G) \chi_b\|^s \right] 
		\leq C \e^{-\mu|a-b|}
	\end{equation}
	holds true. Here, $\|\boldsymbol\cdot\|$ stands for the operator norm.
\end{definition}

\begin{remarks}
\label{rem:FMBS}
\item 
	\label{FMBalways}
	Suppose that \eqref{eq:DefFMB} holds for \emph{some} exponent
	$0<s<1$, then it holds for \emph{all} exponents $0<s<1$ with constants $C$ and $\mu$ depending on $s$. 
	The proof  for discrete models in \cite[Lemma B.2]{artRSO2001AizEtAl} extends to 
	the continuum setting. 
\item
	\label{FMBalways2}
	We show in Lemma~\ref{Lem:PertPersPot} that the stability of the regime of complete localisation 
	$\FMB(H)=\FMB(H^\tau)$ holds for perturbations 
	$\tau W$ as considered in this paper. Hence, even though our proofs require mostly fractional moment bounds 
	for both $H$ and $H^\tau$, it suffices to postulate \eqref{eq:DefFMB} for the unperturbed operator 
	$H$ only and -- according to the previous remark -- with a fixed exponent $s$. 
\item 
	\label{rem:aPrioriholds}
	The proof of \eqref{eq:DefFMB} in \cite{artRSO2006AizEtAl2} for alloy-type random Schr\"odinger operators relies 
	-- as in the discrete case -- on a priori estimates for fractional moments of the resolvent. 
	In fact, the following holds 
	for our model: for every $0<s<1$ and 
	every bounded interval $I \subset \R$ there exists a finite constant $C_{s}>0$ such that
	\begin{equation}
		\label{eq:AppAPriori}
		\sup_{\tau\in[0,1]}\sup_{\substack{E\in I,\eta\neq 0\\G\subseteq\Rn,\, a,b\in\Rn}} \mathbb{E} 
		\left[ \Vert \chi_{a}R_{E+\i\eta}(H^{\tau}_G)\chi_{b} \Vert^{s} \right] \leq C_{s}.
	\end{equation}
	Except for the supremum in $\tau$, this is the content of Lemma 3.3 and the subsequent Remark (4) in \cite{artRSO2006AizEtAl2}. The relevant $\tau$-dependence of the constant in that lemma enters in 
	(3.31) upon replacing $z$ by $z - \tau W$. This leads to a polynomial $\tau$-dependence of the following norm estimates
	and implies uniformity in	$\tau\in[0,1]$. 
\end{remarks}


\section{Results}
\label{sect:results}

The results in Subsection \ref{sect:TraceClassEst} are at the heart of the proofs for the applications on the 
spectral shift function and Anderson's orthogonality in the following two subsections. 

\subsection{Schatten-class estimates}
\label{sect:TraceClassEst}

First, we fix our notation. Given $p>0$ and a compact operator $A$ on a separable Hilbert space, 
we say that $A\in \S^p$, the \emph{Schatten $p$-class}, if 
$	\|A\|_p := \l(\Tr (|A|^p)\r)^{1/p}$,  
where $\Tr(\pmb\cdot)$ stands for the trace. 
This induces a complete norm on the linear space $\mathcal{S}^p$ for $p\geq 1$,
while for $0<p<1$ it induces a complete quasi-norm for which the adapted triangle inequality 
\beq\label{lem:conkav}
	\|A+B\|_p^p \leq \|A\|_p^p+\|B\|_p^p
\eeq
holds \cite[Thm. 2.8]{McCarthy}.
We state our results for right-continuous functions $f$ on $\R$ which are of bounded variation
\beq
\TV(\R):=\big\{ f:\R\to\R\ \text{measurable, right-continuous}, \ \NTV(f)<\infty\big\}.
\eeq
Here, the \emph{total variation} is given by $\NTV(f):= \sup_{(x_{p})_{p} \in \mathcal P}\sum_{p} 
|f(x_{p+1}) - f(x_p)|$ with the supremum being taken over the set $\mathcal P$ of all finite partitions of $\R$. 
We write $\TV_c(\R)$ for the subspace of all functions in $\TV(\R)$ with compact support. 
Finally, given a bounded interval $I=[I_-,I_+]\subset\R$, we define the vector space of functions 
\begin{equation}
	\mathcal{F}_I := \Big\{f\in \TV(\R):\  f\big|_{(-\infty,I_-]}\equiv \text{ const.} , \, 
	f\big|_{[I_+,\infty)}\equiv 0 \, \Big\} \subset L^{\infty}(\R),
\end{equation}
which is a normed space with respect to the total variation.

\begin{theorem}
	\label{th:TrClBounds1}
	Fix $p>0$, $0<s<1$ and let $I \subset \FMB$ be a compact interval. Then there exist finite constants 
	$C,\mu>0$ such that the following holds: for all $G\subseteq\Rn$ open, $a,b\in \R^d$, $\tau \in [0,1]$ 
	and $f \in \mathcal{F}_I$ we have 
	\begin{equation}
		\label{eq:TrClBounds1Stat}
		\mathbb{E} \big[ \big\| \chi_a\big( f(H_G) - f(H^{\tau}_G) \big) \chi_b \big\|_p  \big] 
		\leq C|\tau|^{s/2}			\NTV(f) \e^{-\mu(|a|+|b|)}.
	\end{equation} 
\end{theorem}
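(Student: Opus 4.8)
The plan is to combine the localised Helffer--Sj\"ostrand formula (Lemma~\ref{lem:LocHelfSjo}), the resolvent identity, a Combes--Thomas/Schatten-class smoothing step, and the fractional-moment bounds of Definition~\ref{DefFMB}. After replacing $f$ by $f-f(I_-)\zeta$, where $\zeta\in C^\infty(\R)$ is a fixed monotone cutoff with $\zeta\equiv1$ on $(-\infty,I_-]$ and $\zeta\equiv0$ on $[I_+,\infty)$ --- the subtracted piece $f(I_-)\zeta$ being handled analogously (indeed more easily) and contributing $\lesssim\NTV(f)$ --- we may assume $f\in\TV_c(\R)$ with $\supp(Df)\subseteq I$, where $Df$ is the finite signed distributional derivative, $|Df|(\R)=\NTV(f)$. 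Fix a real $z_0$ strictly below $\inf\sigma(H^\tau_G)$ for all admissible $G$, $\tau\in[0,1]$ and $\omega$ (possible since $V_0$, $V_\omega$ and $W$ are bounded below, uniformly in $\omega$), and fix $M\in\N$ sufficiently large; in particular $2Mp>d$. We also record the deterministic Schatten a priori bound from Appendix~\ref{app:Schatten}: $\|\chi_a h(H_G^{(\tau)})\chi_b\|_p\le C\|h\|_\infty$ for every bounded $h$ supported in a fixed bounded energy window, uniformly in $a,b\in\R^d$, $G$, $\tau$ and $\omega$; applied to $h=f\,\id_{[z_0,\infty)}$ (which coincides with $f$ on $\sigma(H_G^{(\tau)})$) it yields the crude deterministic bound
\begin{equation*}
	\big\| \chi_a\big( f(H_G) - f(H^{\tau}_G) \big) \chi_b \big\|_p \le C\,\NTV(f),
\end{equation*}
uniformly in everything.

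Applying Lemma~\ref{lem:LocHelfSjo} to $H_G$ and to $H^\tau_G$ and inserting, for $K\in\{H_G,H^\tau_G\}$, the iterated resolvent identity $R_z(K)=\sum_{l=1}^{M}(z-z_0)^{l-1}R_{z_0}(K)^l+(z-z_0)^M R_{z_0}(K)^M R_z(K)$, we note that the polynomial-in-$(z-z_0)$ contributions are $z$-analytic while $\widetilde f$ has compact support, so they integrate to $0$ against $\bar\partial\widetilde f$. There remains
\begin{equation*}
	\chi_a\big( f(H_G) - f(H^{\tau}_G) \big)\chi_b = \frac1\pi\int_{\C}\bar\partial\widetilde f(z)\,(z-z_0)^M\,\chi_a\Big[ R_{z_0}(H_G)^M R_z(H_G) - R_{z_0}(H^\tau_G)^M R_z(H^\tau_G)\Big]\chi_b\,\d^{2}z ,
\end{equation*}
with $\bar\partial\widetilde f$ supported in $\{\Re z\in I,\ |\Im z|\le1\}$ and the positive measure $|\bar\partial\widetilde f(z)|\,|z-z_0|^M\,\d^{2}z$ of finite mass $\lesssim\NTV(f)$. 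Expanding the bracket by $AB-A'B'=(A-A')B+A'(B-B')$ and by $R_w(H_G)-R_w(H^\tau_G)=\tau R_w(H_G)WR_w(H^\tau_G)$ (for $w=z_0$ and $w=z$), every resulting term carries \emph{one} factor $\tau$, \emph{one} copy of the compactly supported $W$, a product $R_{z_0}(H_G^{(\tau)})^{\bullet}$ of resolvents at $z_0$ of total power $\ge M$ (one of its blocks having, for $M$ large, enough powers to lie in $\S^p$), and \emph{at most two} ``dangerous'' resolvents $R_z(H_G^{(\tau)})$ at the near-real point $z$.

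Each such term, sandwiched between $\chi_a$ and $\chi_b$, is estimated by inserting partitions of unity $\sum_c\chi_c=\one$ around $W$ (finitely many cubes $\chi_{c_j}$ near $\supp W$) and between the remaining resolvents. The blocks formed from $z_0$-resolvents are controlled \emph{deterministically}: by the Combes--Thomas estimate they decay exponentially off the diagonal, so, combined with the Schatten bounds of Appendix~\ref{app:Schatten} ($2Mp>d$, $z_0$ below the spectrum), the block adjacent to $\chi_a$ obeys $\big\|\chi_a R_{z_0}(H_G^{(\tau)})^{\bullet}\!\cdots\chi_y\big\|_p\le C\,\e^{-\mu_0|a-y|}$, uniformly in $G,\tau,\omega$ --- this produces the factor $\e^{-\mu|a|}$. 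The dangerous resolvents enter only through their operator norm and are controlled by Definition~\ref{DefFMB}: since $\Re z\in I\subseteq\FMB$ and, by stability of the localisation regime (Remark~\ref{rem:FMBS}\,\itemref{FMBalways2}, Lemma~\ref{Lem:PertPersPot}), also $I\subseteq\FMB(H^\tau)$, one has $\E\big[\|\chi_y R_z(H_G^{(\tau)})\chi_{y'}\|^s\big]\le C\,\e^{-\mu|y-y'|}$ \emph{uniformly in $\Im z\ne0$}, supplying the factor $\e^{-\mu|b|}$ and removing the would-be $|\Im z|^{-1}$-singularity near the real axis. As at most two correlated random factors occur, one works with $(s/2)$-th powers of the norms: using the quasi-triangle inequality \eqref{lem:conkav} and the Schatten--H\"older inequalities to reduce to products of norms of the individual blocks, subadditivity of $t\mapsto t^{s/2}$ for the finite cube sums, a Cauchy--Schwarz step in $\PP$ splitting the two dangerous factors (each then carrying exponent $s<1$, so Definition~\ref{DefFMB} applies), lattice summation of exponentials, and Jensen's inequality to pass the $(s/2)$-th power through the $z$-integral against the finite measure above, one arrives at
\begin{equation*}
	\E\Big[\big\| \chi_a\big( f(H_G) - f(H^{\tau}_G) \big) \chi_b \big\|_p^{s/2}\Big] \le C\,|\tau|^{s/2}\,\NTV(f)^{s/2}\,\e^{-\mu'(|a|+|b|)} .
\end{equation*}
Interpolating against the crude deterministic bound above, $\|\,\cdot\,\|_p=\|\,\cdot\,\|_p^{1-s/2}\|\,\cdot\,\|_p^{s/2}\le\big(C\NTV(f)\big)^{1-s/2}\|\,\cdot\,\|_p^{s/2}$, and taking expectations yields \eqref{eq:TrClBounds1Stat}; the power $|\tau|^{s/2}$ is exactly what survives this balancing.

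I expect the main obstacle to be the estimate just sketched: organising the telescoping expansion so that each term isolates a single factor of $\tau$ and at most two near-real-axis resolvents, and then bounding it by the correct hybrid of deterministic (Combes--Thomas $+$ Schatten, for the blocks that must lie in $\S^p$) and probabilistic (fractional-moment, for the blocks needing only operator-norm control) estimates, keeping every constant uniform in the open set $G\subseteq\R^d$, in $\tau\in[0,1]$ and as $\Im z\to0$, and securing convergence of the $z$-integral for merely $BV$ functions $f$ --- which is where Lemma~\ref{lem:LocHelfSjo} and the a priori bounds do their essential work. Everything else --- \eqref{lem:conkav}, the Schatten--H\"older inequalities, and geometric summation over the cube lattice --- is routine.
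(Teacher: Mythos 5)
Your proposal takes a genuinely different route from the paper's. The paper proves an operator-norm version first (Theorem~\ref{th:TrClBounds1:OpNorm}): in the Helffer--Sj\"ostrand integral it factors the integrand as $\|\chi_a(R_z-R_z^\tau)\chi_b\| \le \|R_z-R_z^\tau\|^{1-s/2}\,\|\chi_a(R_z-R_z^\tau)\chi_b\|^{s/2} \le (2/|\Im z|)^{1-s/2}\,\|\cdots\|^{s/2}$, applies Fubini, and then Cauchy--Schwarz~$+$~FMB/Combes--Thomas on the $s/2$-power piece; the Schatten version then follows by the Schatten--H\"older interpolation $\|\cdot\|_p\le\|\cdot\|^{r/p}\|\cdot\|_{p-r}^{1-r/p}$ against the deterministic a~priori Schatten bound and Jensen $\E[X^{s_1}]\le(\E X)^{s_1}$. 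You instead insert $R_{z_0}^M$ factors (with $z_0$ below the spectrum) directly into the Helffer--Sj\"ostrand integrand via the iterated resolvent identity so that Schatten control comes from the fixed, $z$-independent $R_{z_0}^M$ blocks, which is a legitimate and conceptually attractive alternative. However, two steps in your sketch do not go through as written.

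First, after the telescoping your last term is $\tau\,R_{z_0}(H_G^\tau)^M\,R_z(H_G)\,W\,R_z(H_G^\tau)$, which carries \emph{two} near-real-axis resolvents. Its operator norm is only controlled by $\|W\|_\infty/|\Im z|^2$, and you cannot trade this singularity against the fractional-moment bound: to apply Definition~\ref{DefFMB} after Cauchy--Schwarz you must put exponent $<1/2$ on each of the two dangerous factors, which leaves at best $|\Im z|^{-2(1-\theta)}$ with $\theta<1/2$, i.e.\ an exponent~$>1$, and $\int \d|\zeta_f|/|y|^{1+\varepsilon}$ diverges because the measure $\d f(x)\,\d y\,\Xi$ puts Lebesgue $\d y$-mass arbitrarily close to the real axis. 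The paper sidesteps exactly this by bounding the \emph{difference} $\|R_z(H_G)-R_z(H_G^\tau)\|\le 2/|\Im z|$ at the $(1-s/2)$-power level and expanding $R_z-R_z^\tau=\tau R_z W R_z^\tau$ only inside the remaining $s/2$-power to extract~$\tau^{s/2}$; fully expanding at the operator level, as you propose in order to get one whole factor $\tau$, destroys the $|\Im z|$-integrability of that term.

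Second, your ``Jensen's inequality to pass the $(s/2)$-th power through the $z$-integral'' goes the wrong way: for the concave function $t\mapsto t^{s/2}$ and a (normalised) finite measure $\nu$, Jensen gives $\bigl(\int g\,\d\nu\bigr)^{s/2}\ge\int g^{s/2}\,\d\nu$, a \emph{lower} bound, so it cannot be used to bound $\bigl\|\int A\,\d\nu\bigr\|_p^{s/2}$ by $\int\|A\|_p^{s/2}\,\d|\nu|$. (For $p<1$ even $\|\int A\,\d\nu\|_p\le\int\|A\|_p\,\d|\nu|$ fails — $\S^p$ is only a quasi-Banach space.) The workable structure, which your decomposition does admit once the Schatten block is pulled out of the $z$-integral, is: bound the remaining operator-norm Bochner integral by $\int\d|\zeta_f|\,\|O_e(z)\|$, factor $\|O_e(z)\|\le|\Im z|^{-(1-s')}\|O_e(z)\|^{s'}$ deterministically \emph{before} taking expectations, then Fubini, then FMB — and apply Jensen only as $\E[X^{s_1}]\le(\E X)^{s_1}$ on the final random variable, exactly as in the paper. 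With those two repairs (fractional rather than full expansion of $R_z-R_z^\tau$, and the deterministic pre-factoring of the singularity rather than a Jensen step inside the $z$-integral), your approach would work.
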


\begin{remark}
On the right hand side of \eqref{eq:TrClBounds1Stat} a linear $\tau$-dependence seems to be optimal. However, our proof does not allow for this.
\end{remark}

\noindent
The method we use in the proof of Theorem \ref{th:TrClBounds1} can also be applied to treat 
a perturbation by a boundary condition.  

\begin{theorem}
	\label{th:TrClBounds3}
	Fix $p>0$ and let $I \subset \FMB$ be a compact interval. Then there exist finite constants $C,\mu>0$ 
	such that the following holds: for all open subsets $G\subset \wtilde{G}\subseteq \R^d$ with 
	$\dist(\partial\wtilde{G},\partial G)\geq 1$, for all $a,b\in \R^d$ such that 
	$Q_a\cap {G}\ne \emptyset$ or $Q_b\cap {G}\ne \emptyset$, for all $\tau\in[0,1]$ and 
	all $f \in \mathcal{F}_I$ we have 
	\begin{equation}
		\label{eq:TrClBounds3Stat}
		\mathbb{E} \big[\big\| \chi_a\big(f(H_{G}^{\tau}) - f(H^{\tau}_{\wtilde{G}})\big)\chi_b \big\|_p  \big] 
		\leq C  \, \NTV(f)
		\e^{-\mu [\dist(a, \partial {G}) + \dist(b, \partial {G})]}.
	\end{equation} 
\end{theorem}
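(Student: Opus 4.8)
The plan is to follow the proof of Theorem~\ref{th:TrClBounds1}, with the compactly supported potential perturbation $\tau W$ replaced by a commutator that is localised in a collar of width one around $\partial G$. Since $f\in\mathcal{F}_I$ is real-valued, $f(H_G^{\tau})-f(H_{\wtilde G}^{\tau})$ is self-adjoint, so $\|X\|_p=\|X^*\|_p$ together with the symmetry of \eqref{eq:TrClBounds3Stat} in $a\leftrightarrow b$ lets us assume $Q_a\cap G\neq\emptyset$ throughout. The first step is to apply the local Helffer--Sj\"ostrand formula of Lemma~\ref{lem:LocHelfSjo} to the pair $(H_G^{\tau},H_{\wtilde G}^{\tau})$; this reduces the claim to a bound on $\E\big[\,\|\chi_a\big(R_z(H_G^{\tau})-R_z(H_{\wtilde G}^{\tau})\big)\chi_b\|_p\,\big]$ for $\Re z$ in a fixed neighbourhood of the compact set $I\subset\FMB$ and $\Im z\neq 0$, integrated against a weight of total mass at most a constant times $\NTV(f)$ and with at most polynomial growth in $1/|\Im z|$, exactly as in the proof of Theorem~\ref{th:TrClBounds1}.

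Second, I would split into cases. If $Q_b\cap G=\emptyset$, then $\chi_b R_z(H_G^{\tau})=0$ and $\chi_b f(H_G^{\tau})=0$, so the operator difference equals $-\chi_a f(H_{\wtilde G}^{\tau})\chi_b$; since any straight segment joining a point of $Q_a\cap G$ to $b$ crosses $\partial G$, one has $|a-b|\geq\dist(a,\partial G)+\dist(b,\partial G)$ up to an additive constant, and the asserted bound follows from the off-diagonal estimate $\E[\|\chi_a f(H_{\wtilde G}^{\tau})\chi_b\|_p]\leq C\,\NTV(f)\,\e^{-\mu|a-b|}$, which is the one-operator version of the mechanism behind Theorem~\ref{th:TrClBounds1} (Helffer--Sj\"ostrand, the fractional-moment bounds \eqref{eq:DefFMB}, the a priori bounds \eqref{eq:AppAPriori}, a Combes--Thomas estimate and the Schatten-class bounds of Appendix~\ref{app:Schatten}). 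Hence we may assume $Q_a\cap G\neq\emptyset$ and $Q_b\cap G\neq\emptyset$; if in addition $\dist(a,\partial G)\leq 1$ or $\dist(b,\partial G)\leq 1$, the corresponding exponential factor is bounded below and a one-sided version of the argument below (using the single-commutator identity) applies, so we may further assume $\dist(a,\partial G),\dist(b,\partial G)\geq 1$.

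Third, I would set up a geometric resolvent identity localised at $\partial G$. Fix a smooth $\Phi$ with $0\leq\Phi\leq 1$, equal to $1$ on $\{x\in G:\dist(x,\partial G)\geq 1/2\}$, with $\overline{\supp\Phi}$ a compact subset of $\{x\in G:\dist(x,\partial G)>1/4\}$, and with bounded first and second derivatives (obtained by mollifying $x\mapsto\theta(\dist(x,\partial G))$ for a one-dimensional smooth cutoff $\theta$). Then $\chi_a\Phi=\chi_a$, $\Phi\chi_b=\chi_b$, and $[-\Del,\Phi]=-(\Del\Phi)-2\nabla\Phi\cdot\nabla$ is a first-order differential operator supported in the collar $\{x:1/4<\dist(x,\partial G)<1/2\}$. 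Since $\overline{\supp\Phi}$ is a compact subset of $G\subset\wtilde G$, interior elliptic regularity makes $[-\Del,\Phi]$ applicable to $R_z(H_G^{\tau})$ and to $R_z(H_{\wtilde G}^{\tau})$, and the usual geometric resolvent computation gives $\Phi D_z=R_z(H_{\wtilde G}^{\tau})\,[-\Del,\Phi]\,D_z$ with $D_z:=R_z(H_G^{\tau})-R_z(H_{\wtilde G}^{\tau})$; taking adjoints gives $D_z\Phi=-D_z\,[-\Del,\Phi]\,R_z(H_{\wtilde G}^{\tau})$. Combining the two and using $\chi_a\Phi=\chi_a$, $\Phi\chi_b=\chi_b$ yields $\chi_a D_z\chi_b=-\chi_a R_z(H_{\wtilde G}^{\tau})\,[-\Del,\Phi]\,D_z\,[-\Del,\Phi]\,R_z(H_{\wtilde G}^{\tau})\,\chi_b$. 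Inserting the unit-cube partition of unity $\{\chi_k\}_{k\in\Zd}$ around each commutator and using that $[-\Del,\Phi]$ is local, the right-hand side becomes a sum over lattice points $k,k'$ with $\dist(k,\partial G)$ and $\dist(k',\partial G)$ bounded, of terms $\chi_a R_z(H_{\wtilde G}^{\tau})\chi_k\cdot B_k\cdot D_z\cdot B_{k'}\cdot\chi_{k'}R_z(H_{\wtilde G}^{\tau})\chi_b$, with $B_k$ a bounded operator supported near $Q_k$; the gradient inside $[-\Del,\Phi]$ is absorbed by writing $[-\Del,\Phi]R_z(H^{\tau}_\bullet)=[-\Del,\Phi](H^{\tau}_\bullet-z_0)^{-1}\big(1+(z-z_0)R_z(H^{\tau}_\bullet)\big)$ for a fixed $z_0<\inf\Sigma-\|W\|_{\infty}$ (here $\bullet$ is $G$ or $\wtilde G$), and noting that $[-\Del,\Phi](H^{\tau}_\bullet-z_0)^{-1}$ is bounded and quasi-local by a Combes--Thomas estimate (also for first derivatives), at the cost of a factor polynomial in $|z|$ and $1/|\Im z|$.

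Finally, I would estimate $\E[\|\cdot\|_p]$ of each term, which proceeds as in Theorem~\ref{th:TrClBounds1}: one resolvent factor is placed in $\S^p$ via the Schatten-class estimates of Appendix~\ref{app:Schatten} (expanding it, up to a remainder, as a high power of $R_{z_0}(H^{\tau}_\bullet)$ times localised operators, with $\S^p$-quasi-norm controlled deterministically and decaying exponentially by a Combes--Thomas estimate), the a priori bounds \eqref{eq:AppAPriori}, uniform in $\tau\in[0,1]$, control the remaining $L^s(\PP)$-factors, and the operator-norm factors $\chi_a R_z(H_{\wtilde G}^{\tau})\chi_k$ and $\chi_{k'}R_z(H_{\wtilde G}^{\tau})\chi_b$ are handled by the fractional-moment bounds \eqref{eq:DefFMB} (valid since $I\subset\FMB$, and, by Remark~\ref{FMBalways2}, needed only for the unperturbed operator), after upgrading the $L^s$-control to the $L^r$-control required by H\"older's inequality for Schatten quasi-norms and for expectations by means of the trivial bound $\|\chi R_z(H^{\tau}_\bullet)\chi\|\leq 1/|\Im z|$, at the cost of a further polynomial factor in $1/|\Im z|$. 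Each term then decays like $\e^{-\mu_0(|a-k|+|k'-b|)}$, and since $\dist(k,\partial G)$ and $\dist(k',\partial G)$ are bounded, $|a-k|\geq\dist(a,\partial G)$ and $|k'-b|\geq\dist(b,\partial G)$ up to additive constants, so the double sum over the two codimension-one collars of lattice points converges and produces $\e^{-\mu[\dist(a,\partial G)+\dist(b,\partial G)]}$ with a slightly smaller $\mu$; the $z$-integration against the Helffer--Sj\"ostrand weight then absorbs the accumulated powers of $1/|\Im z|$, and all constants come out uniform in $G\subset\wtilde G$, $a$, $b$, $\tau$ and $f$. I expect the main obstacle to be the presence of the first-order differential operator $[-\Del,\Phi]$: the fractional-moment input \eqref{eq:DefFMB} only controls $\chi R_z\chi$ in operator norm, so the $\nabla R_z$ factors must first be rewritten through the fixed resolvent $R_{z_0}(H^{\tau}_\bullet)$ and a Combes--Thomas estimate before the Schatten and fractional-moment machinery of Theorem~\ref{th:TrClBounds1} can be applied, and one must verify that $[-\Del,\Phi]$ remains confined to a collar of bounded width so that the exponential decay factors as claimed; the case distinctions forced by the disjunctive hypothesis on $a,b$ and the merely Lipschitz regularity of $x\mapsto\dist(x,\partial G)$ for an arbitrary open $G$ are additional but routine points.
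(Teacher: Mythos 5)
Your overall architecture is right (Helffer--Sj\"ostrand reduction, a geometric resolvent localisation at $\partial G$, then fractional moments on the resolvent factors, with Case distinctions for small $\dist(\cdot,\partial G)$), and this is the same skeleton as the paper's. But there are two genuine deviations, one of which leaves a gap.

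First, the Schatten-class bookkeeping. The paper never puts a resolvent factor into $\S^p$ inside the Helffer--Sj\"ostrand integral. Instead it first interpolates at the function level: using Lemma~\ref{lem:schatten-hoelder}, $\|X\|_p\le\|X\|^{r/p}\,\|X\|_{p-r}^{1-r/p}$ for $X=\chi_a\big(h(H^\tau_G)-h(H^\tau_{\wtilde G})\big)\chi_b$, and the Schatten quasi-norm $\|X\|_{p-r}$ is controlled \emph{deterministically} by the a priori bound of Lemma~\ref{lem:aPriori}. Only the operator-norm factor $\|X\|^{r/p}$ is carried into the Helffer--Sj\"ostrand integral. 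This keeps the $y$-power budget at exactly $|y|^{-(1-s/2)}$, which \eqref{eq:zeta-int} controls. Your plan of expanding one resolvent into a high power of $R_{z_0}$ to manufacture an $\S^p$ factor inside the integral is not impossible, but it is substantially heavier and interacts awkwardly with the need to then take a fractional power before applying the fractional-moment bounds. If you adopt the paper's interpolation-first reduction you remove this whole layer.

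Second, and this is the actual gap: your geometric-resolvent step is a \emph{double}-commutator identity with the difference operator $D_z=R_z(H^\tau_G)-R_z(H^\tau_{\wtilde G})$ stranded in the middle. Your identities $\Phi D_z=R_z(H^\tau_{\wtilde G})[-\Del,\Phi]D_z$ and $D_z\Phi=-D_z[-\Del,\Phi]R_z(H^\tau_{\wtilde G})$ are both correct, but combining them is wasteful: they imply (with $\chi_a\Phi=\chi_a$, $\Phi\chi_b=\chi_b$) the much simpler \emph{single}-commutator identity
\begin{equation*}
	\chi_a D_z\chi_b \;=\; \chi_a\,R_z(H^\tau_{\wtilde G})\,[-\Del,\Phi]\,R_z(H^\tau_G)\,\chi_b,
\end{equation*}
which has only two resolvent factors and no $D_z$ in the middle. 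In your version the middle $D_z$ can only be bounded by $\|D_z\|\le 2/|\Im z|$, and after raising the whole expression to the fractional power $\beta$ that the HS reduction forces, you pick up $|y|^{-(1-\beta)}$ from the weight and an extra $|y|^{-\beta}$ from $\|D_z\|^\beta$, for a total of $|y|^{-1}$, which is \emph{not} integrable against $\d|\zeta_f|$ for any $\beta\in(0,1)$. So as written the $z$-integration does not ``absorb the accumulated powers of $1/|\Im z|$''; it diverges. The fix is to use the single-commutator identity, which is precisely what the paper does via the geometric resolvent \emph{inequality} of Stollmann (\cite[Lemma\ 2.5.2, 2.5.3]{MR1935594}); the point of Stollmann's formulation is that the derivative in $[-\Del,\Phi]$ is controlled by a Caccioppoli-type interior estimate, $\|\chi_c\nabla R_z(H^\tau_G)\chi_b\|\lesssim\|\id_{\Lambda_2(c)}R_z(H^\tau_G)\chi_b\|$ for $\dist(c,b)>2$, so the constant in the two-resolvent bound is uniform in $z$ on compacts and no $|y|^{-1}$ powers appear at all. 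Your alternative of absorbing $[-\Del,\Phi]$ through $R_{z_0}$ does not create additional $|y|^{-1}$ powers if one is careful to push the $(z-z_0)R_z$ remainder onto an extra FMB'd resolvent factor, but the $D_z$ factor remains a genuine obstruction. Your Case~2 treatment ($Q_b\cap G=\emptyset$, or $\dist(a,\partial G)\leq 1$ or $\dist(b,\partial G)\leq1$) is fine in spirit; the paper handles the analogue more directly by the triangle inequality on the two resolvents together with the elementary bound $|a-b|\geq\max\{\dist(a,\partial G),\dist(b,\partial G)\}-1$, which avoids the need to invoke an unproved one-operator off-diagonal Schatten estimate.
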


\begin{remark}
	The separate exponential decay of \eqref{eq:TrClBounds1Stat} in $a$ and $b$ reflects that 
	the operator $f(H_G)-f(H_G^{\tau})$ is exponentially small away from the support of $H_G-H_G^{\tau}$. 
	An analogous comment applies to \eqref{eq:TrClBounds3Stat}.
\end{remark}

\noindent
A consequence of Theorem~\ref{th:TrClBounds1} is

\begin{corollary}
	\label{cor:TrClBounds2}
	Fix $p,q>0$ and $\tau \in [0,1]$. Let $I \subset \FMB$ be a compact interval. 
	Then there exists a finite constant $C_{\tau}>0$ such that for all $G\subseteq\Rn$ open and $f \in \mathcal{F}_I$ 
	\begin{align}
	\label{eq:TrClBounds2Stat}
	\big(\mathbb{E} \left[ \| f(H_G) - f(H^{\tau}_G) \|_p^q  \right]\big)^{1/q} &\leq C_{\tau}\,  \NTV(f).
	\end{align}
	The constant $C_{\tau}$ vanishes algebraically in $\tau$ as $\tau \downarrow 0$. In particular, for $f \in \mathcal{F}_I$ 
	\beq\label{supertrace}
	f(H_G) - f(H^{\tau}_G) \in \S^p
	\eeq
	for any $p>0$ almost surely. 
\end{corollary}

\begin{remark}
	To obtain trace class properties as in \eqref{supertrace} for the pair $(-\Delta,-\Delta+V)$, where $V$ is a short-range scattering potential, $f$ needs to be in an appropriate Besov space  \cite{MR3314510}. This implies that $f$ has to be more regular than being continuous. In the region of complete localisation for random Schr\"odinger operators, even discontinuous functions $f$ are allowed in \eqref{supertrace}, as we traded regularity of the function $f$ for regularity of the resolvent \eqref{eq:DefFMB}. 
\end{remark}

\noindent
Theorem \ref{th:TrClBounds1} and Theorem \ref{th:TrClBounds3} together imply

\begin{theorem}
	\label{Thm:prod}
	Fix $p,q>0$ and let $I \subset \FMB$ be a compact interval. 
	Then there exist finite constants $C,\mu>0$ such that for all $f\in \mathcal F_I$, $\tau\in[0,1]$  
	and $L>0$ 
	\begin{equation}
		\label{thm:prod:eq1}	
		\Big(\mathbb{E} \left[\,\big\|\big( f(H_L) - f(H_L^\tau)\big) - 
		\big(f(H) - f(H^\tau)\big)\big\|_{p}^q\right]\Big)^{1/q}
		\leq C\NTV(f) \e^{-\mu L}. 
	\end{equation}
	Moreover, given a sequence $(L_n)_{n\in\N} \subset (0,\infty)$ of lengths with $L_{n}/\ln n \to \infty$ 
	as $n \to\infty$, then for every $\gamma\in (0,\mu)$ the convergence 
	\begin{equation}
		\label{eq:ConvergencePi}
	  \lim_{n\to\infty} \e^{\gamma L_n} \big\|\big(f(H_{L_n}) - f(H_{L_n}^{\tau})\big) - 
	  \big(f(H) - f(H^\tau)\big)\big\|_{p} = 0
	\end{equation}
	holds almost surely. 
\end{theorem}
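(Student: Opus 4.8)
The plan is to localise the operator
\[
	D_L := \big(f(H_L)-f(H_L^{\tau})\big) - \big(f(H)-f(H^{\tau})\big)
\]
appearing in~\eqref{thm:prod:eq1} into blocks $\chi_a D_L\chi_b$, $a,b\in\Z^d$, and to estimate each block by playing two different groupings of its four terms against each other. Keeping the grouping displayed above and applying Theorem~\ref{th:TrClBounds1} to the two bracketed differences (with $G=\Lambda_L$ and with $G=\R^d$) produces, for all $a,b\in\Z^d$, the \emph{bulk} bound $\E[\|\chi_a D_L\chi_b\|_p]\le C\NTV(f)\,\e^{-\mu(|a|+|b|)}$, which reflects decay away from $\supp W$. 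Grouping the four terms instead as $D_L=\big(f(H_L)-f(H)\big)-\big(f(H_L^{\tau})-f(H^{\tau})\big)$ --- a difference of Dirichlet box operators and their macroscopic limits, at coupling strengths $0$ and $\tau$ --- and applying Theorem~\ref{th:TrClBounds3} with $\wtilde G=\R^d$, $G=\Lambda_L$ (admissible since $\partial\R^d=\emptyset$, so $\dist(\partial\R^d,\partial\Lambda_L)=+\infty\ge 1$) produces, whenever $Q_a\cap\Lambda_L\ne\emptyset$ or $Q_b\cap\Lambda_L\ne\emptyset$, the \emph{boundary} bound $\E[\|\chi_a D_L\chi_b\|_p]\le C\NTV(f)\,\e^{-\mu(\dist(a,\partial\Lambda_L)+\dist(b,\partial\Lambda_L))}$. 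In the remaining case, where $Q_a$ and $Q_b$ both avoid $\Lambda_L$, one has $\chi_a\big(f(H_L)-f(H_L^{\tau})\big)\chi_b=0$, since both $f(H_L)$ and $f(H_L^{\tau})$ act as the scalar operator $f(0)\,\id$ on $L^2(\R^d\setminus\Lambda_L)$; hence there $\chi_a D_L\chi_b=-\chi_a\big(f(H)-f(H^{\tau})\big)\chi_b$ and the bulk bound applies, now with $|a|,|b|\ge L/2$.

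The elementary inequality $|a|+\dist(a,\partial\Lambda_L)\ge L/2$ for every $a\in\R^d$ (and likewise for $b$) shows that the two exponents above sum to at least $\max(|a|+|b|,L)\ge\big(|a|+|b|+L\big)/2$, so taking the geometric mean of the bulk and boundary bounds yields, uniformly in $a,b\in\Z^d$, $L>0$ and $\tau\in[0,1]$,
\[
	\E\big[\|\chi_a D_L\chi_b\|_p\big]\le C\,\NTV(f)\,\e^{-\mu L}\,\e^{-\mu(|a|+|b|)},
\]
with $C,\mu>0$ depending only on $p$, on the interval $I$, and on the model. To reach the $q$-th moment required in~\eqref{thm:prod:eq1}, I would observe that this argument goes through verbatim for any positive Schatten index, in particular for $p/N$ with an integer $N\ge q$, and combine the resulting bound with the deterministic estimate $\|\chi_a D_L\chi_b\|\le\|D_L\|\le 4\NTV(f)$ (using that $f$ vanishes at $+\infty$, whence $\|f\|_\infty\le\NTV(f)$) through the Schatten interpolation inequality $\|T\|_p\le\|T\|_{p/N}^{1/N}\,\|T\|^{1-1/N}$ and Jensen's inequality; this yields $\big(\E[\|\chi_a D_L\chi_b\|_p^{q}]\big)^{1/q}\le C\NTV(f)\,\e^{-\mu(L+|a|+|b|)}$ for a possibly smaller $\mu$ (now also depending on $q$).

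Since $\sum_{a\in\Z^d}\chi_a=\id$ on $L^2(\R^d)$, summability of these block bounds shows $D_L\in\S^p$ almost surely, and, via the triangle inequality for $\S^p$ when $p\ge1$ and the quasi-triangle inequality~\eqref{lem:conkav} when $0<p<1$, together with the corresponding (quasi-)triangle inequality in $L^q(\PP)$ --- at worst at the cost of a bounded factor in the decay rate --- gives
\[
	\big(\E[\|D_L\|_p^{q}]\big)^{1/q}\le C\,\NTV(f)\,\e^{-\mu L}\sum_{a,b\in\Z^d}\e^{-\mu(|a|+|b|)},
\]
and the last sum is a finite constant depending only on $d$ and $\mu$; this is~\eqref{thm:prod:eq1}. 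For the almost-sure statement, fix $\gamma\in(0,\mu)$ and apply Markov's inequality to~\eqref{thm:prod:eq1} with $q=1$: $\PP\big(\e^{\gamma L_n}\|D_{L_n}\|_p>\eps\big)\le C\eps^{-1}\NTV(f)\,\e^{-(\mu-\gamma)L_n}$. Because $L_n/\ln n\to\infty$, the right-hand side is bounded by $n^{-2}$ for all large $n$, hence summable, so the Borel--Cantelli lemma gives $\limsup_{n\to\infty}\e^{\gamma L_n}\|D_{L_n}\|_p\le\eps$ almost surely; letting $\eps\downarrow 0$ along a sequence yields~\eqref{eq:ConvergencePi}.

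The step I expect to be the main obstacle is managing the interplay between the two block estimates: neither the bulk bound (decay from $\supp W$ near the origin) nor the boundary bound (decay from $\partial\Lambda_L$) is by itself exponentially small in $L$ uniformly over all $(a,b)$, since blocks near $\partial\Lambda_L$ are controlled only by the former, blocks near $\supp W$ only by the latter, and blocks lying entirely outside $\Lambda_L$ fall outside the scope of Theorem~\ref{th:TrClBounds3}. What reconciles them is the geometric identity $|a|+\dist(a,\partial\Lambda_L)\ge L/2$ together with the fact that $f(H_L)$ and $f(H_L^{\tau})$ coincide off $\Lambda_L$. The promotion of the first-moment bounds of Theorems~\ref{th:TrClBounds1} and~\ref{th:TrClBounds3} to arbitrary $q$-th moments, for which the uniform operator-norm bound on $D_L$ is indispensable, is a secondary technical point.
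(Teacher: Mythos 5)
Your proof is correct and takes essentially the same route as the paper: localise into blocks $\chi_a D_L\chi_b$, bound each block by playing the two decompositions against each other --- Theorem~\ref{th:TrClBounds1} for the coupling-constant comparison and Theorem~\ref{th:TrClBounds3} (with $\wtilde G=\R^d$) for the finite-/infinite-volume comparison --- using the geometric fact that $|a|+\dist(a,\partial\Lambda_L)\ge L/2$, and then sum. The only cosmetic differences are that the paper partitions the lattice sum into $\Lambda_{L/2}^2$ and its complement and applies exactly one decomposition in each region (rather than taking a geometric mean of the two block bounds), and it reduces the $q$th moment to a first-moment block estimate via the deterministic a priori Schatten bound of Lemma~\ref{lem:aPriori} (rather than via the uniform operator-norm bound together with the interpolation inequality of Lemma~\ref{lem:schatten-hoelder}); the Borel--Cantelli step is the content of Remark~\ref{fast-convergence}.
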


\begin{remark}
	\label{fast-convergence}
	If $(X_{n})_{n\in\N} \subset (0,\infty)$ is a sequence of random variables with summable expectation, 
	$(\E [X_{n}])_{n\in\N} \in \ell^{1}(\N)$, then $\lim_{n\to\infty} X_{n} =0$ holds almost surely. 
	This elementary fact shows that \eqref{eq:ConvergencePi} follows directly from \eqref{thm:prod:eq1}.
\end{remark}

\noindent
Corollary~\ref{cor:TrClBounds2} or Theorem~\ref{Thm:prod} offer a way to detect spectral phases 
of random Schr\"odinger operators other than the region of complete localisation.

\begin{corollary}
	\label{phase-transition}
	Let $E \in \Sigma$ and $p>0$, then
	 
	\begin{equation}
		\label{nonFMBcrit}
 		\sup_{L>0} \E \big[\|\id_{(-\infty,E]}({H}_{L}) - \id_{(-\infty,E]}({H}^\tau_{L})\|_{p} \big] = \infty \quad\; \Longrightarrow \quad\;
		E \notin \FMB.
	\end{equation}

\end{corollary}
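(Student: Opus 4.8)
The plan is to argue by contraposition: I will show that $E \in \FMB$ forces $\sup_{L>0}\E\big[\|T(E,H_L,H_L^\tau)\|_p\big] < \infty$, which contradicts the hypothesis on the left of \eqref{nonFMBcrit}. By the definition of the spectral shift operator one has $\|T(E,H_L,H_L^\tau)\|_p = \|\id_{(-\infty,E]}(H_L) - \id_{(-\infty,E]}(H_L^\tau)\|_p$, so the quantity to be bounded is exactly the left-hand side of \eqref{eq:TrClBounds2Stat} (with $q=1$) for the particular function $f = \id_{(-\infty,E]}$ and the open region $G = \Lambda_L = (-L/2,L/2)^d$, for which $H_G = H_L$ and $H_G^\tau = H_L^\tau$.

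First I would record that $\FMB$ is an open subset of $\R$: this is immediate from Definition~\ref{DefFMB}, since a neighbourhood $U_E$ witnessing $E \in \FMB$ simultaneously witnesses $E' \in \FMB$ for every $E' \in U_E$. Hence, under the assumption $E \in \FMB$, there is $\eps > 0$ such that the compact interval $I := [E-\eps, E+\eps]$ is contained in $\FMB$. Since $I_- = E-\eps < E < E+\eps = I_+$, the indicator function $f := \id_{(-\infty,E]}$ is constant (equal to $1$) on $(-\infty, I_-]$ and vanishes identically on $[I_+,\infty)$; thus $f \in \mathcal{F}_I$ and $\NTV(f) = 1$. Then I would invoke Corollary~\ref{cor:TrClBounds2} with this $I$, this $f$, $q=1$ and $G = \Lambda_L$: it yields a finite constant $C_\tau > 0$, independent of $L$, with $\E\big[\|\id_{(-\infty,E]}(H_L) - \id_{(-\infty,E]}(H_L^\tau)\|_p\big] \le C_\tau\,\NTV(f) = C_\tau$ for all $L > 0$. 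Taking the supremum over $L > 0$ gives $\sup_{L>0}\E\big[\|T(E,H_L,H_L^\tau)\|_p\big] \le C_\tau < \infty$, which is the desired contradiction.

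I do not expect any substantial obstacle: the statement is essentially a restatement, in terms of the spectral shift operator, of the uniform-in-$G$ bound already established in Corollary~\ref{cor:TrClBounds2} (compare also Remark~\ref{rem:T-transition}). The only two points deserving a line of care are the openness of $\FMB$ — used to place $E$ strictly inside a compact subinterval $I \subset \FMB$, which is precisely what makes $\id_{(-\infty,E]}$ lie in $\mathcal{F}_I$ — and the identification of $\|T(E,H_L,H_L^\tau)\|_p$ with the Schatten-$p$-norm of $\id_{(-\infty,E]}(H_L) - \id_{(-\infty,E]}(H_L^\tau)$. The assumption $E \in \Sigma$ plays no role in the deduction; it is imposed only so that the criterion is not vacuous.
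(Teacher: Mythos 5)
Your proof is correct and takes exactly the route the paper intends: the paper introduces Corollary~\ref{phase-transition} with the remark that Corollary~\ref{cor:TrClBounds2} (or Theorem~\ref{Thm:prod}) offers a way to detect other spectral phases, and gives no separate proof — the deduction is meant to be immediate from the uniform-in-$G$ bound of Corollary~\ref{cor:TrClBounds2} applied to $f=\id_{(-\infty,E]}$. Your write-up supplies the two small details that make this precise (openness of $\FMB$ so that $\id_{(-\infty,E]}\in\mathcal{F}_I$ for some compact $I\subset\FMB$, and the identification $T(E,H_L,H_L^\tau)=f(H_L)-f(H_L^\tau)$), and correctly notes that the hypothesis $E\in\Sigma$ is not used in the argument.
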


\begin{remark}
	Corollary~\ref{phase-transition}  provides a reasonable condition to conclude that $E\notin \FMB$ because it is met in the case of absolutely 			continuous spectrum and non-trivial scattering. 
	This can be seen as follows:
	assume $W\ge 0$ and observe that
	the finite-volume spectral shift operator $T(E,{H}_{L}, {H}^\tau_{L})$ is finite rank for all 
	$E\in\R$, hence even super trace class. 
	An iteration of \eqref{eq:square-products} then implies the representation
	\begin{align}
		\label{eq:shift-product}
		 \|\id_{(-\infty,E]}({H}_{L}) - \id_{(-\infty,E]}({H}^\tau_{L})\|_{2n}^{2n} & =  \|\id_{(-\infty,E]}({H}_{L})  
			\id_{(E,\infty)}(H^\tau_{L}) \id_{(-\infty,E]}({H}_{L})\|_{n}^{n} \notag\\
		& \quad\; + \|\id_{(E,\infty)}({H}_{L}) \id_{(-\infty,E]}({H}^\tau_{L}) \id_{(E,\infty)}({H}_{L})\|_{n}^{n}
	\end{align} 
	for every $n\in \N$.
	Now, suppose we knew there exists a 
	spectral  interval $J \subset\R$ such that $H$ has absolutely continuous spectrum in $J$ almost surely. 
	It follows from \eqref{eq:shift-product} and \cite[Thm.\ 3.4]{artAOC2015GKMO} that, almost surely given any $p>0$, 
	$\|\id_{(-\infty,E]}({H}_{L}) - \id_{(-\infty,E]}({H}^\tau_{L})\|_{p}$ diverges at least logarithmically in $L$ for a.e.\ $E\in J$ at which 
	there is non-trivial scattering, i.e.\  where the quantity in \eqref{gamma} below is strictly positive. 
\end{remark}

%
\subsection{Application 1: The spectral shift function (SSF)}\label{subsec:ssf}

In this subsection we apply the previous findings to prove convergence of the finite-volume
spectral shift functions for energies in $\FMB$. 
Moreover, we show that the infinite-volume spectral shift function, the trace of 
the spectral shift operator and the index of the corresponding pair of Fermi projections coincide. 
In addition, we establish H\"older continuity and positivity of the disorder-averaged spectral shift function 
in $\FMB$.

Let $A$ and $B$ be self-adjoint operators in a Hilbert space. 
For any given $E \in \R$, we call 
\begin{equation}
	\label{def:T(E)}
 	 T(E,A,B):= \id_{(-\infty,E]}(A) - \id_{(-\infty,E]}(B)
\end{equation}
\emph{spectral shift operator}. Being the difference of two projections, we recall that 
\begin{equation}
	\label{eq:T<1}
 	\|T(E,A,B) \| \le 1.
\end{equation}
Now, we assume further that $A$ and $B$ are bounded from below 
and that $\e^{-A}- \e^{-B}\in \S^1$. 
Then there exists a unique function $\xi := \xi(\,\pmb\cdot\,,A,B)\in L_{\text{loc}}^{1}(\mathbb{R})$, the 
\emph{spectral shift function (SSF)}, such that the equality 
\begin{equation}
	\label{eq:Definition1SSF}
	\Tr \big(f(A)-f(B)\big) = - \int_{\mathbb{R}}  \d\lambda\, f'(\lambda)\,\xi(\lambda,A,B) 
\end{equation}
holds for all test functions $f\in C^{\infty}(\mathbb{R})$ with $\lim_{\lambda\to\infty}f(\lambda)=0$ 
and $\supp (f')$ compact. We refer to, e.g., \cite{yafaev1992mathscattering} for details and further properties. 
Finally, given two self-adjoint projections $P$ and $Q$ such that 
$\pm1 \notin\sigma_{\text{ess}}(P-Q)$, the essential spectrum of $P-Q$, 
we define their \emph{index} according to \cite[Prop.\ 3.1]{artIND1994AvSeSi} as 
$\ind(P,Q) := \dim\ker (P-Q-1) - \dim\ker(P-Q+1)$. Further background on the index can be found in 
\cite{artIND1994AvSeSi}. 
In the particular case of the spectral projections $P=\id_{(-\infty,E]}(A)$ and $Q=\id_{(-\infty,E]}(B)$, we write
\begin{align}
	\label{def:index}
	\theta(E,A,B) &:= 	\ind \big\{ \id_{(-\infty,E]}(A), \id_{(-\infty,E]}(B) \big\} \notag\\
	& \phantom{:}= \dim \ker\big( T(E,A,B)-\id\big) - \dim \ker\big( T(E,A,B)+\id\big).
\end{align}
For $L>0$ and $A=H_{(L)}$ and $B=H_{(L)}^{\tau}$, we recall from 
Corollary \ref{cor:TrClBounds2} that 
\beq
	\label{eq:Ttraceclass}
	T(E,H_{(L)}, H^\tau_{(L)}) \in \S^{1} \qquad \text{almost surely for every $E\in\FMB$}.
\eeq
Hence, the index $\theta(E,H_{(L)},H^\tau_{(L)})$ is well defined almost surely for every $E\in\FMB$
and every $L>0$. On the other hand, we infer from \cite[Thm.~1]{MR2200269} that 
$\e^{-H_{(L)}} - \e^{-H^\tau_{(L)}} \in\S^{1}$ holds almost surely for all $L>0$. 
The SSF $\xi(\,\pmb\cdot\,, H_{(L)}, H^\tau_{(L)})$ is therefore well-defined 
almost surely as a function in $L_{\text{loc}}^{1}(\mathbb{R})$. 

For energies outside the essential spectrum of a pair of operators, 
it is known that all three quantities, the SSF, 
the index and the trace of the spectral shift operator, are well-defined and 
coincide, see \cite[Prop. 2.1]{art2009PushnIndex}. 
This extends to the region of complete localisation:

\begin{theorem}
	\label{th:SSFandTransOp}
	Let $\tau \in [0,1]$. Then,
	\begin{nummer}
	\item
		\label{th:SSFandTransOpi}
		the SSF, the trace of the shift operator and the index 
	  coincide, i.e.
		\begin{equation}
			\label{th:SSFandTransOpstat1}
			\xi(E,H,H^\tau)  = 	\Tr\big( T(E,H,H^\tau)\big)
			=\theta(E,H,H^\tau) \qquad \text{ for a.e. } E\in \FMB
		\end{equation}
		holds almost surely.
	\item
		\label{th:SSFandTransOpii}
		given a compact interval $I \subset \FMB$, there exist constants $C,\mu >0$ such that the bound
		\begin{equation}
			\label{th:SSFandTransOpstat3}
			\mathbb{E}\left[\l| \xi(E,H_{L},H^\tau_{L})-\xi(E,H,H^\tau)\r| \right]
			\leq C \e^{-\mu L} \qquad \text{ for a.e. } E\in I
		\end{equation}		
		holds for all $L>0$. Moreover, given a sequence of lengths with $L_{n}/\ln n \to \infty$ 
		as $n\to\infty$, the convergence
		\begin{equation}
			\label{th:SSFandTransOpstat2}
			\lim_{n\to\infty} \xi(E,H_{L_n},H^\tau_{L_n})=  \xi(E,H,H^\tau)\qquad \text{ for a.e. } E\in I
		\end{equation}
		holds almost surely.
	\end{nummer}
\end{theorem}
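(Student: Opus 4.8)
The plan is to derive all three parts from the Schatten-class estimates of the previous subsection together with the abstract identification results for the index, the trace of the spectral shift operator and the spectral shift function that are available outside the essential spectrum. The key structural input is that, by Corollary~\ref{cor:TrClBounds2}, the spectral shift operator $T(E,H,H^\tau) = \id_{(-\infty,E]}(H) - \id_{(-\infty,E]}(H^\tau)$ lies in $\mathcal S^1$ almost surely for every $E\in\FMB$, and likewise for the finite-volume operators; moreover $\e^{-H}-\e^{-H^\tau}\in\mathcal S^1$ by \cite{MR2200269}, so the spectral shift function $\xi(\,\pmb\cdot\,,H,H^\tau)$ is a well-defined $L^1_{\mathrm{loc}}$ function.

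For part~(i), I would first establish the identity $\Tr(f(H)-f(H^\tau)) = -\int \d\lambda\, f'(\lambda)\,\xi(\lambda,H,H^\tau)$ for $f\in\mathcal F_I$ with $I\subset\FMB$ a compact interval. This requires passing from the smooth test functions in \eqref{eq:Definition1SSF} to functions of bounded variation supported on (a neighbourhood of) $\FMB$: approximate $f$ by smooth functions $f_n$ with $\NTV(f_n)$ bounded and $f_n\to f$ pointwise, controlling $\Tr(f_n(H)-f_n(H^\tau))$ via the $\mathcal S^1$-bound from Corollary~\ref{cor:TrClBounds2} (which is proportional to $\NTV(f_n)$) and the right-hand side via dominated convergence since $\xi\in L^1_{\mathrm{loc}}$ and $f_n'\rightharpoonup f'$ as measures. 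Specialising to $f=\id_{(-\infty,E]}$ (shifted to be constant left of $I$ and zero right of $I$, which is legitimate because $\xi$ is only being tested against $f'$, supported in $I$), the left side becomes $\Tr T(E,H,H^\tau)$ and the right side becomes $\xi(E,H,H^\tau)$ for a.e.\ $E\in\FMB$, giving the first equality. The identification with $\theta(E,H,H^\tau)$ then follows because $T(E,H,H^\tau)\in\mathcal S^1$ forces $\pm1\notin\sigma_{\mathrm{ess}}(T)$, so the index is defined, and for a trace-class difference of spectral projections the general fact $\Tr(P-Q)=\ind(P,Q)$ holds — this is exactly the content of the index theory in \cite{artIND1994AvSeSi} and is collected in the paper's Appendix~\ref{App:SuppMater}; alternatively one invokes Prop.~2.1 of \cite{art2009PushnIndex}.

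For part~(ii), the idea is to run the same argument at finite volume and subtract. Since $f(H_L)-f(H_L^\tau)$ and $f(H)-f(H^\tau)$ are both trace class (almost surely, for $f\in\mathcal F_I$), one has
\begin{equation}
\Tr\big((f(H_L)-f(H_L^\tau)) - (f(H)-f(H^\tau))\big) = -\int_\R \d\lambda\, f'(\lambda)\,\big(\xi(\lambda,H_L,H_L^\tau) - \xi(\lambda,H,H^\tau)\big).
\end{equation}
Choosing $f=\id_{(-\infty,E]}$ suitably truncated, the left-hand side is the trace of $T(E,H_L,H_L^\tau)-T(E,H,H^\tau)$, whose expectation is bounded by $\mathbb E[\|(f(H_L)-f(H_L^\tau))-(f(H)-f(H^\tau))\|_1]\le C\NTV(f)\e^{-\mu L}$ by Theorem~\ref{Thm:prod} with $p=q=1$. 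To extract pointwise-in-$E$ control of $\xi(E,H_L,H_L^\tau)-\xi(E,H,H^\tau)$ from a statement about its integral against $f'$ over all admissible $f$, I would let $f$ range over $\id_{(-\infty,E]}$ for $E$ in a full-measure subset of $I$ and use that the map $E\mapsto$ (the truncated indicator) has derivative $-\delta_E$; a clean way is to note $\xi(E,H_L,H_L^\tau)-\xi(E,H,H^\tau) = -\Tr\big(T(E,H_L,H_L^\tau)-T(E,H,H^\tau)\big)$ for a.e.\ $E$ (from the a.e.-identification of $\xi$ with the trace of the shift operator, applied at finite and infinite volume), so that \eqref{th:SSFandTransOpstat3} is literally the $p=q=1$ case of \eqref{thm:prod:eq1}. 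Finally \eqref{th:SSFandTransOpstat2} follows from \eqref{th:SSFandTransOpstat3} by the Borel–Cantelli argument of Remark~\ref{fast-convergence}: summability of $\E[\e^{\gamma L_n}|\cdots|]$ along any $\gamma\in(0,\mu)$ when $L_n/\ln n\to\infty$ gives almost-sure convergence.

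The main obstacle is the interchange of limits in part~(i): justifying that the spectral-shift identity \eqref{eq:Definition1SSF}, a priori valid only for smooth $f$, persists for $f\in\mathcal F_I$ and in particular for the discontinuous $f=\id_{(-\infty,E]}$, for Lebesgue-almost every $E\in\FMB$. This hinges on the $\NTV(f)$-linear trace-norm bound of Corollary~\ref{cor:TrClBounds2}, which is precisely what makes the approximation uniform, combined with the fact that $\xi\in L^1_{\mathrm{loc}}$ so that the right-hand side is continuous in $f'$ in the appropriate weak sense; the "a.e.\ $E$" qualifier is unavoidable because $\xi$ is only defined up to a null set. A secondary technical point is to ensure all the "almost surely" statements can be arranged on a single full-measure event simultaneously valid for all $E$ in a countable dense set and then extended, which is routine given the a.e.-in-$E$ nature of the claims.
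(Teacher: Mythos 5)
Your global plan is right and matches the paper's: part~(ii) is deduced from part~(i) together with Theorem~\ref{Thm:prod} applied to $f=\id_{(-\infty,E]}$, the equality $\Tr T=\theta$ comes from the index theory of \cite{artIND1994AvSeSi} (Remark~\ref{rem:right-equality-clear}), and the almost-sure convergence \eqref{th:SSFandTransOpstat2} follows from \eqref{th:SSFandTransOpstat3} via Remark~\ref{fast-convergence}. The genuine gap is in the first equality of part~(i), i.e.\ the identification $\xi(E,H,H^\tau)=\Tr T(E,H,H^\tau)$ for a.e.\ $E\in\FMB$.

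You propose to extend the defining identity \eqref{eq:Definition1SSF} from smooth $f$ to $f\in\mathcal F_I$ by taking smooth $f_n\to f$ pointwise with $\NTV(f_n)$ bounded and passing to the limit on both sides. The left-hand side does not go through as stated. Corollary~\ref{cor:TrClBounds2} only gives a \emph{uniform bound} on $\|f_n(H)-f_n(H^\tau)\|_1$; it does not imply $\Tr\big(f_n(H)-f_n(H^\tau)\big)\to\Tr\big(f(H)-f(H^\tau)\big)$. (A sequence of rank-one projections onto an orthonormal system is uniformly bounded in $\mathcal S^1$ and converges weakly to $0$, yet the traces stay at $1$.) One would need actual trace-norm convergence of $f_n(H)-f_n(H^\tau)$, which is not established. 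The right-hand side has a parallel issue: weak-$\ast$ convergence $f_n'\rightharpoonup f'$ together with $\xi\in L^1_{\rm loc}$ does not justify passing to the limit when $f'$ has atoms, which is precisely the case $f=\id_{(-\infty,E]}$. You rightly observe that the a.e.\ qualifier is unavoidable, but no mechanism delivering the a.e.\ statement is supplied.

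The paper closes this gap with the deterministic Lemma~\ref{lem:SSFprep2ndproof}. Rather than enlarging the class of test functions, one shows that the candidate $E\mapsto\Tr T(E,A,B)$ itself satisfies the characterizing identity \eqref{eq:Definition1SSF} for \emph{smooth} $f$ with $\supp(f')\subseteq I$; uniqueness of $\xi$ as an $L^1_{\rm loc}$ class then gives $\xi=\Tr T(\,\pmb\cdot\,)$ a.e.\ on $I$. The technical pivot is the Bochner integral identity
\begin{equation}
f(B)-f(A)=\int_\R \d E\, f'(E)\, T(E,A,B)
\end{equation}
in $\mathcal S^1$, whose convergence is exactly what the input $\|T(\,\pmb\cdot\,,H,H^\tau)\|_1\in L^1(I)$ (which follows from Corollary~\ref{cor:TrClBounds2} and Fubini, applied pathwise) provides; the trace then interchanges with the Bochner integral. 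If you insist on the approximation picture, the same Bochner formula yields $\Tr\big(f_n(H)-f_n(H^\tau)\big)=-\int f_n'\,\Tr T$ for mollifiers $f_n$ of $\id_{(-\infty,E]}$, so both sides converge exactly at common Lebesgue points of $\Tr T(\,\pmb\cdot\,)$ and $\xi$ — but at that stage you have already proved Lemma~\ref{lem:SSFprep2ndproof} and the mollification is superfluous. So: right ingredients, but the identification step needs the Bochner-integral argument, not an unjustified limit passage under a uniform $\mathcal S^1$ bound.
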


\begin{remarks}
\item
	\label{rem:right-equality-clear}
	It follows from \cite[Thm.\ 4.1]{artIND1994AvSeSi} that \eqref{eq:Ttraceclass} implies
	\beq
		\label{eq:trShift-index}
		\Tr \big( T(E,H_{(L)}, H^\tau_{(L)}) \big) = \theta(E,H_{(L)},H^\tau_{(L)})
	\eeq
	almost surely for every $E\in\FMB$ and $L>0$. The news in Theorem~\ref{th:SSFandTransOpi} 
	is the left equality. The right equality shows that all quantities are integer-valued. 
\item
 	In general, one cannot expect the SSF and the index to coincide 
	as soon as they are both well defined. For example, if $E$ lies within the absolutely continuous 
	spectrum of two operators, both the index and the SSF may 
	be well defined. But in this case, the SSF cannot be expected to be integer-valued, 
	whereas the index is by definition. For the very same reason, the convergence \eqref{th:SSFandTransOpstat2} cannot hold in general. 
	 The finite-volume spectral shift function is 
	given by 
	\begin{equation}
		\label{eq:xi-T}
 		\xi(E,H_{L},H_{L}^{\tau}) = \Tr \big(T(E, H_{L},H_{L}^{\tau}) \big) 
		= \Tr \big(\id_{(-\infty,E]}(H_{L})\big) - 	\Tr \big(\id_{(-\infty,E]}(H_{L}^{\tau})\big)	
	\end{equation}
	for all $E\in\R$ and, thus, integer-valued by definition. The infinite-volume quantity need not be. 
  However, for general Schr\"odinger operators $-\Delta+V_0$ and perturbations by  
  bounded, compactly supported potentials, vague convergence of the finite-volume SSF 
  is known \cite{MR2596053}. 
\item 
	Theorem \ref{th:SSFandTransOpii} is formulated in terms of the SSF. 
	It holds for the index as well, even without having to exclude a Lebesgue null set of exceptional energies. 
	Under the assumptions of \eqref{th:SSFandTransOpstat2} we obtain
	\begin{equation}
		\label{eq:SSFandIndex}
		\lim_{n\to\infty}\theta(E,H_{L_n},H^\tau_{L_n}) = \theta(E,H,H^\tau) \quad\text{almost surely for \emph{every} $E\in \FMB$}.
	\end{equation}
	This follows directly from \eqref{eq:trShift-index} and applying Theorem~\ref{Thm:prod} to the function $f=\id_{(-\infty, E]}$.
\end{remarks}

\noindent
An immediate consequence of Theorem \ref{th:SSFandTransOpii} is

\begin{corollary}
	\label{cor:bbdfiniteSSF}
	Let $\tau\in [0,1]$. Given a sequence of lengths with $L_{n}/\ln n \to \infty$ as $n\to\infty$, then 
	\beq
		\label{bdd:finiteSSF}
		\sup_{n\in\N} |\xi(E,H_{L_n},H_{L_n}^{\tau})| <\infty \qquad \text{ for a.e. } E\in \FMB
	\eeq
	holds almost surely.
\end{corollary}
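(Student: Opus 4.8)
The plan is to read off the claim directly from the almost-sure convergence \eqref{th:SSFandTransOpstat2} of Theorem~\ref{th:SSFandTransOpii}, using nothing more than the elementary fact that a convergent sequence of real numbers is bounded. First I would fix a compact interval $I\subset\FMB$. By \eqref{th:SSFandTransOpstat2} there is an event $\Omega_I$ of full $\PP$-measure such that for every $\omega\in\Omega_I$ one has $\xi(E,H_{L_n},H^\tau_{L_n})\to\xi(E,H,H^\tau)\in\R$ for Lebesgue-a.e.\ $E\in I$; in particular each such sequence is bounded, so $\sup_{n\in\N}|\xi(E,H_{L_n},H^\tau_{L_n})|<\infty$ for a.e.\ $E\in I$. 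This is exactly \eqref{bdd:finiteSSF} with the compact interval $I$ in place of $\FMB$. Equivalently, one could apply Theorem~\ref{Thm:prod} to $f=\id_{(-\infty,E]}$ and combine it with the finiteness of the infinite-volume quantity, but the route through Theorem~\ref{th:SSFandTransOp} is the most economical.

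Next I would upgrade this from compact subintervals to all of $\FMB$. The set $\FMB$ is open in $\R$: by Definition~\ref{DefFMB}, if $E\in\FMB$ with witnessing neighbourhood $U_E$, then every $E'\in U_E$ again lies in $\FMB$, since a small neighbourhood of $E'$ contained in $U_E$ witnesses $E'\in\FMB$ with the same exponent and constants. Hence $\FMB=\bigcup_{j\in\N}I_j$ for a countable family of compact intervals $I_j\subset\FMB$. Put $\Omega_\infty:=\bigcap_{j\in\N}\Omega_{I_j}$, still an event of full measure. For $\omega\in\Omega_\infty$ and each $j$ there is a Lebesgue-null set $N_j(\omega)\subset I_j$ off which the supremum in \eqref{bdd:finiteSSF} is finite; then $N(\omega):=\bigcup_{j\in\N}N_j(\omega)$ is Lebesgue-null and the supremum is finite for every $E\in\FMB\setminus N(\omega)$. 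This proves the corollary.

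There is no real obstacle here; the argument is essentially bookkeeping. The only point deserving attention is the interplay between the two ``a.e.'' statements: the exceptional Lebesgue-null set of energies depends on the realisation $\omega$, and one must glue these realisation-dependent null sets together over the cover $\{I_j\}_{j\in\N}$ of $\FMB$ while staying inside a single full-measure event. This works precisely because the cover can be chosen countable, so that both the intersection of the events $\Omega_{I_j}$ and the union of the null sets $N_j(\omega)$ are harmless.
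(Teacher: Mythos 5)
Your proof is correct and follows exactly the route the paper intends: the corollary is stated as an immediate consequence of Theorem~\ref{th:SSFandTransOpii}, namely that almost-surely convergent sequences are bounded, combined with the observation that $\FMB$ is open (by Definition~\ref{DefFMB}) and can therefore be exhausted by countably many compact subintervals, so the exceptional null sets and events glue harmlessly.
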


\begin{remark}
	Bounds like \eqref{bdd:finiteSSF} are unknown for general deterministic continuum Schr\"odinger operators 
	in $d\ge 2$. 
	In the case of the pair $-\Delta$ and 
	$-\Delta+W$ with $0\leq W\in L^\infty_c(\R^d)$, a partial converse is shown in 
	\cite{artSSF1987Kir}: for any diverging sequence $(L_n)_{n\in\N}$ there exists a dense 
	subset $I\subseteq (0,\infty)$ such that for all $E\in I$ 
	\beq
		\sup_{n\in\N} \xi(E,-\Delta_{L_{n}},-\Delta_{L_{n}}+W) =\infty.
	\eeq
	In contrast, $L^{p}$-bounds for finite- or infinite-volume SSF of deterministic 
	continuum Schr\"odinger operators are known, see for example \cite{MR2200269, MR1824200, MR1945282}. Moreover, general upper 			bounds on the SSF are proved in \cite{MR1208792} assuming that a trace-class limiting absorption principle holds.
	In the case of random Schr\"odinger operators and for a particular choice of the perturbation, 
	\cite{MR2352262} show that the disorder-averaged finite- and infinite-volume SSF 
	are locally bounded uniformly in the system size. 
\end{remark}

\noindent
The next lemma implies H\"older continuity of the averaged trace of the shift operator in the region of complete localisation. 

\begin{lemma}
	\label{lem:HoelderCont}
	Let $I\subset\FMB$ be a compact interval and $\alpha \in (0,1)$.
	Then there exists a finite constant $C>0$ such 
	that for all $L>0$, all $\tau \in [0,1]$ and all $E,E'\in I$
	\begin{equation}
 		\E \big[\big\| T(E,H_{(L)},H_{(L)}^{\tau})- T(E',H_{(L)},H_{(L)}^{\tau}) \big\|_{1} \big] 
		\leq C |E-E'|^{\alpha}.
	\end{equation}
\end{lemma}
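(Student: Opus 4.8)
The plan is to reduce the claim to an estimate for $\E[\|\id_J(H_{(L)})-\id_J(H_{(L)}^{\tau})\|_1]$ with a short interval $J$, and then to prove that estimate by playing the off-diagonal exponential decay of Theorem~\ref{th:TrClBounds1} against the linear decay in $|J|$ coming from a Wegner estimate. To set this up, assume without loss of generality that $E'\le E$ and put $J:=(E',E]$; enlarging $I$ slightly inside the open set $\FMB$ if necessary, we may assume $\id_J\in\mathcal F_I$ for all $E,E'\in I$. Since $\id_{(-\infty,E]}-\id_{(-\infty,E']}=\id_J$ pointwise, the functional calculus gives
\[
	T(E,H_{(L)},H_{(L)}^{\tau})-T(E',H_{(L)},H_{(L)}^{\tau})=\id_J(H_{(L)})-\id_J(H_{(L)}^{\tau}),
\]
and $\NTV(\id_J)=2$. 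Corollary~\ref{cor:TrClBounds2} (with $p=q=1$) already bounds the expected trace norm on the right by $2\,C_{\tau}$, which is bounded uniformly in $\tau\in[0,1]$, so it suffices to prove the asserted H\"older bound when $|E-E'|$ is small.

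For the localisation step I would decompose with the partition of unity $(\chi_a)_{a\in\Z^d}$ and use the triangle inequality for $\|\cdot\|_1$ to get
\[
	\E\big[\big\|\id_J(H_{(L)})-\id_J(H_{(L)}^{\tau})\big\|_1\big]\le\sum_{a,b\in\Z^d}\E\big[\big\|\chi_a\big(\id_J(H_{(L)})-\id_J(H_{(L)}^{\tau})\big)\chi_b\big\|_1\big],
\]
and then control each summand in two complementary ways. First, Theorem~\ref{th:TrClBounds1} applied to $f=\id_J$ (with $p=1$ and $|\tau|^{s/2}\le1$) gives constants $C_1,\mu>0$, uniform in $L$, $\tau\in[0,1]$ and $J\subseteq I$, so that the summand is at most $C_1\,\e^{-\mu(|a|+|b|)}$. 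Second, since $\id_J(K)$ is an orthogonal projection one has $\|\chi_a\id_J(K)\chi_b\|_1\le(\Tr\chi_a\id_J(K)\chi_a)^{1/2}(\Tr\chi_b\id_J(K)\chi_b)^{1/2}$; combining this with a local Wegner estimate — a standard consequence of assumption~\ref{assV1}, cf.~\cite{MR2362242} — in the form $\sup_{a\in\R^d}\E[\Tr\chi_a\id_J(H_{(L)})]\le C_W\,|J|$ uniformly in $L$ (for $L=\infty$ this is just boundedness of the density of states), and likewise for $H_{(L)}^{\tau}=H_{(L)}+\tau W$ uniformly in $\tau\in[0,1]$, a Cauchy--Schwarz step in the disorder average shows that the summand is also at most $C_2\,|J|$ with $C_2$ uniform in $L$, $\tau$ and $a,b$.

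Finally I would split the double sum at a radius $R\ge1$, using the bound $C_2|J|$ for the at most $(2R+1)^{2d}$ pairs with $\max(|a|,|b|)\le R$ and the exponential bound for the remaining pairs, to obtain
\[
	\E\big[\big\|\id_J(H_{(L)})-\id_J(H_{(L)}^{\tau})\big\|_1\big]\le C_2\,(2R+1)^{2d}\,|J|+C_3\,\e^{-\mu R/2}.
\]
Choosing $R\sim\mu^{-1}\ln(1/|J|)$ balances the two contributions up to a logarithmic factor and yields $\le C_4\,(\ln(\e+1/|J|))^{2d}\,|J|$, which for every fixed $\alpha\in(0,1)$ is at most $C_\alpha\,|J|^{\alpha}=C_\alpha\,|E-E'|^{\alpha}$ once $|E-E'|$ is small, the remaining range of $|E-E'|$ being absorbed into the constant via the uniform bound above. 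The genuine obstacle here is conceptual rather than computational: Theorem~\ref{th:TrClBounds1} on its own only delivers a volume-type bound with no gain as $|J|\downarrow0$, so the \emph{linear} $|J|$-decay of the Wegner estimate is indispensable; once both inputs are available, the radius splitting and optimisation are routine. The one technical point that requires attention is that the local Wegner bound must hold with a constant that is uniform in the volume $L$ (including $L=\infty$), in $\tau\in[0,1]$, and in the position of the unit cube $Q_a$.
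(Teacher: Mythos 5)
Your proposal is correct, but it takes a genuinely different route from the paper's proof. The paper never invokes a Wegner estimate here; instead it uses the Helffer--Sj\"ostrand formula with a cutoff function $\Xi_J$ scaled to the width $|J|$ of the interval (see \eqref{cutoff-specified}), so that $\supp(\zeta_{\id_J})\subseteq J\times[-2|J|,2|J|]$. After interpolating the trace norm against the operator norm (as in \eqref{TrCl-done}) and combining the resolvent equation, H\"older's inequality and the fractional-moment bounds, the $|J|$-power decay comes out of the elementary integral estimate $\int \d|\zeta_{\id_J}|(x,y)/|y|^{\delta+1/q}\le C|J|^{1/p-\delta}$; the off-diagonal exponential decay is then summed away without any radius split. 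Your argument replaces this scaling trick by a local Wegner estimate $\sup_{a,L}\E[\Tr(\chi_a\id_J(H^{(\tau)}_L))]\le C_W|J|$ (used through the Cauchy--Schwarz inequality $\|\chi_aP\chi_b\|_1\le(\Tr\chi_aP\chi_a)^{1/2}(\Tr\chi_bP\chi_b)^{1/2}$ for projections $P$) and then balances it against the exponential decay from Theorem~\ref{th:TrClBounds1} via the radius split. Both routes give H\"older continuity for every $\alpha<1$. The trade-off is that the paper's argument stays entirely within the machinery already developed in Section~\ref{sect:TraceClassEst}, whereas yours imports the local Wegner estimate as an extra ingredient — you correctly flag that this bound must be established uniformly in $L$ (including $L=\infty$), in $\tau\in[0,1]$ and in the cube position $a$, and that it is not formally recorded as a lemma in the paper (it is a local refinement of the estimate \cite{MR2362242} cited after assumption \ref{assV2}). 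Your approach is conceptually transparent for anyone comfortable with Wegner estimates, while the paper's avoids that dependence; neither is circular, and both are valid.
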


\begin{remark}
  Choosing representatives of the SSF as in Theorem \ref{th:SSFandTransOpi} and \eqref{eq:xi-T},
   we infer from Lemma~\ref{lem:HoelderCont} that for all $E,E'\in I$	
  \begin{equation}
 		\E \big[\big\vert \xi(E,H_{(L)},H_{(L)}^{\tau})-\xi(E',H_{(L)},H_{(L)}^{\tau}) \big\vert \big] 
		\leq C |E-E'|^{\alpha}.
	\end{equation}
\end{remark}

\noindent
Lemma~\ref{lem:HoelderCont} will be used in the proof of the next theorem which establishes 
pointwise positivity of the averaged SSF 
in the region of complete localisation for sufficiently positive perturbations. This will also be useful
in the following section.

\begin{theorem}
	\label{lem:SSFcont}
	Assume that the Lebesgue density $\rho$ of the single-site distribution, which is specified in \ref{assV1},
	satisfies 
$		\rho_-:= \essinf_{x\in[0,1]} \rho(x) > 0. $

	Let $\tau \in (0,1]$ and assume that the perturbation $W$ and the bump function $u_0$ satisfy 
	\beq
		\label{bounds:u0}
		W \geq C u_0\geq c \id_{B_\delta(x)}
	\eeq
	with constants $c,C, \delta>0$ and some $x\in\Lambda_1$. Then 
	\begin{equation}
		\E\l[\xi(E,H,H^{\tau})\r]>0 \qquad \text{ for a.e. } E\in \FMB\cap\, \{E'\in\R :\, \cN'(E')>0\},
	\end{equation}
	where $\cN'$ is the density of states of $H$.
\end{theorem}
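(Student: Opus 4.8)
The plan is to exploit the monotonicity furnished by $W\ge Cu_0\ge 0$, an integral representation of $\E[\xi(E,H,H^\tau)]$ of Birman--Solomyak type, and the quantitative scale-free unique continuation principle. I would first record that $\xi(E,H,H^\tau)=\theta(E,H,H^\tau)$ almost surely for a.e.\ $E\in\FMB$ (Theorem~\ref{th:SSFandTransOpi}), and that this quantity is non-negative: in finite volume $\theta(E,H_L,H_L^\tau)=\Tr\id_{(-\infty,E]}(H_L)-\Tr\id_{(-\infty,E]}(H_L^\tau)\ge 0$ since $H_L^\tau=H_L+\tau W\ge H_L$, and $\eqref{eq:SSFandIndex}$ carries this to the limit. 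Hence $g(E):=\E\bigl[\Tr T(E,H,H^\tau)\bigr]$ --- which by Lemma~\ref{lem:HoelderCont} is a H\"older-continuous function on compact subintervals of $\FMB$, and which agrees with $\E[\xi(E,H,H^\tau)]$ for a.e.\ $E$ --- is non-negative, and it suffices to prove $g(E)>0$ for a.e.\ $E$ with $\cN'(E)>0$.

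Next I would derive the identity of positive Borel measures on $\FMB$,
\[
  g(E)\,\d E \;=\; \int_0^\tau \E\Bigl[\Tr\bigl(W^{1/2}E_{H+tW}(\d E)W^{1/2}\bigr)\Bigr]\,\d t,
\]
where $E_K$ denotes the projection-valued spectral measure of a self-adjoint $K$. This follows by pairing both sides with $f'$ for $f$ a test function as in $\eqref{eq:Definition1SSF}$, using $\Tr\bigl(f(H^\tau)-f(H)\bigr)=\int_0^\tau\Tr\bigl(f'(H+tW)W\bigr)\,\d t$ (legitimate because $f'$ and $W$ are compactly supported, so the integrand is trace class), and exchanging expectation with the integrals by Fubini with the help of the uniform Wegner bound behind $\eqref{eq:AppAPriori}$. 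The essential point is that the right-hand side is \emph{positive} since $W\ge 0$; invoking $W\ge Cu_0\ge c\,\id_{B_\delta(x)}$ and positivity of $E_{H+tW}(\d E)$, it is bounded below by $c\int_0^\tau\E\bigl[\Tr\bigl(\id_{B_\delta(x)}E_{H+tW}(\d E)\bigr)\bigr]\,\d t$. The matter is thus reduced to a lower bound on this coupling- and disorder-averaged local trace.

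For that lower bound I would use the quantitative scale-free unique continuation principle \cite{MR3106507,Nakic:2015is}. The observation that makes it applicable is that the potentials of $H+tW=-\Delta+(V_0+tW+V_\omega)$ and of their Dirichlet restrictions $(H+tW)_L$ are bounded uniformly in $t\in[0,1]$, in $\omega$ (almost surely, since $\supp u$ is compact and $\omega_k\in[0,1]$) and in $L$. Applying the unique continuation estimate to $(H+tW)_L$ with the $\delta$-equidistributed array $\sum_k\id_{B_\delta(x+k)}$, taking traces and disorder expectations, dividing by $|\Lambda_L|$ and letting $L\to\infty$ as in Lemma~\ref{lem:UCP} --- matching the array to the covering condition $\sum_k u_k\ge C_{u,-}>0$ and using $\rho_-=\essinf_{[0,1]}\rho>0$ to keep the resulting unique-continuation/Wegner constant positive --- should yield $\int_0^\tau\E\bigl[\Tr\bigl(\id_{B_\delta(x)}E_{H+tW}(\d E)\bigr)\bigr]\,\d t\ge c'\,\cN(\d E)$ with $c'=c'(\tau)>0$. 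Combining the three displays gives $g(E)\ge c\,c'\,\cN'(E)$ for a.e.\ $E\in\FMB$, and therefore $\E[\xi(E,H,H^\tau)]>0$ for a.e.\ $E\in\FMB\cap\{\cN'>0\}$.

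The step I expect to be the main obstacle is exactly this last lower bound. Unlike the integrated density of states, the local spectral mass of $H+tW$ that the inequality involves sits \emph{at the support of the perturbation}, so it cannot be traded for a translation-invariant (ergodic) quantity: a crude comparison of $E_{H+tW}(\d E)$ with $E_H(\d E)$ near $\supp W$ leaves an $O(1)$ error that overwhelms the $O(\d E)$ of local mass. One is therefore forced to apply unique continuation to the perturbed operators themselves and to use the integration over the coupling strength $t\in[0,\tau]$ --- equivalently, after reducing by monotonicity to the perturbation $\tau Cu_0$ and noting that $H+su_0$ is the operator obtained from $H$ by shifting the single-site coupling $\omega_0\mapsto\omega_0+s/\lambda$, varying $\omega_0$ over a range of length $\propto\tau$ --- in order to average out local resonances. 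Making this ``reverse spectral averaging'' quantitative, with the covering condition and $\rho_->0$ supplying the positive constant, is the technical heart of the argument.
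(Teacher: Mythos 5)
Your overall strategy matches the paper's: start from a Birman--Solomyak representation of the $E$-average of $\E[\xi]$, bring in H\"older continuity of $\E[\Tr T(\,\pmb\cdot\,)]$ to pass from a small-interval average to a pointwise statement at Lebesgue points, and invoke the scale-free unique continuation principle to produce a positive lower bound. You also correctly identify --- and correctly diagnose as the central difficulty --- that the na\"ive application of unique continuation fails: the local trace $\E[\Tr(\id_{B_\delta(x)}\,\id_{I_\varepsilon}(H+tW))]$ sits at $\supp W$, so the array average $\frac{1}{L^d}\sum_k\E[\Tr(\id_{B_\delta(x+k)}\,\id_{I_\varepsilon}((H+tW)_L))]$ that the UCP/Wegner machinery actually controls converges, as $L\to\infty$, to the \emph{unperturbed} quantity $\E[\Tr(\id_{B_\delta(x)}\,\id_{I_\varepsilon}(H))]$, and tells you nothing about the cube near the perturbation. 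So your first formulation ("applying the UCP estimate to $(H+tW)_L$ \ldots should yield \ldots") does not actually close, and you say as much.

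The repair you sketch at the end --- exploit $W\geq Cu_0$ so that $H+su_0$ is $H$ with the single-site coupling shifted, $\omega_0\mapsto\omega_0+s$ --- is precisely the paper's key move, but you leave exactly the quantitative part open, and that is where nontrivial work remains. Concretely: after the change of variables one gets $\int_{I_\varepsilon}\d E'\,\E[\xi(E')]\ge s_0\rho_-\int_{s_0}^1\d\omega_0\,\E_{\ne 0}[\Tr(u_0\,\id_{I_\varepsilon}(H))]$, and to turn the truncated $\omega_0$-integral into the full expectation (so that Lemma~\ref{lem:UCP} applies to the ergodic $H$) one must subtract and then \emph{control} the boundary piece $J=\int_0^{s_0}\d\omega_0\,\E_{\ne 0}[\Tr(u_0\id_{I_\varepsilon}(H))]$. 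This $J$ has the coupling $\omega_0$ \emph{frozen} in $[0,s_0]$, so the ordinary Wegner estimate is unavailable; the paper first trades $H$ for $H_L$ via Theorem~\ref{th:TrClBounds3} (paying $\e^{-\mu L}$), and then applies the Wegner estimate of \cite{MR3106507} for a \emph{crooked} Anderson Hamiltonian, averaging only over the sublattice $A_L=(e_1+(3\Z)^d)\cap\Lambda_L$ while viewing the remaining $\omega_k$'s as a frozen bounded background. This yields $J\lesssim \varepsilon s_0 L^d + \e^{-\mu L}$, and one finally needs a careful joint choice $\varepsilon=L^{-4d/\alpha}$, $s_0=L^{-2d}$, $L\to\infty$ to make the H\"older error $\varepsilon^\alpha$, the Wegner error $\varepsilon s_0 L^d$ and the volume error $\e^{-\mu L}$ all subordinate to the UCP gain $\gamma s_0\,\cN(I_\varepsilon)/(2\varepsilon\rho_+)\approx \gamma s_0\,\cN'(E)/\rho_+$. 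Absent the crooked-Wegner control of $J$ and this three-parameter optimisation, the argument does not close, so the proposal as written has a genuine gap at the step you yourself flag as the "technical heart."
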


\begin{remark}
	\label{rem:SSFcont}
	It is proven in \cite{DGHKM2016} that under the assumptions of Theorem \ref{lem:SSFcont} we have 
	\begin{equation}
		\label{eq:dospos}
		\essinf_{E\in I} \cN'(E) > 0
	\end{equation}
	for every compact interval $I\subset \FMB\cap \Int\big(\sigma(H_0) + [0,C_{u,-}]\big)$, 
	where $C_{u,-}$ is the constant from \eqref{uk-bounds} and $\text{int} (A)$ stands for the interior of a set
	$A\in\Borel(\R)$. This implies that 
	\beq
		\E\l[\xi(E,H,H^{\tau})\r]>0 \qquad \text{ for a.e. } E\in \FMB\cap \Int\big(\sigma(H_0) + [0,C_{u,-}]\big). 
	\eeq
\end{remark}

%
\subsection{Application 2: Anderson orthogonality}\label{subsec:aoc}

We apply the Schatten-class estimates from Section \ref{sect:TraceClassEst} to analyse Anderson
orthogonality in the region of complete localisation. Let $\tau>0$ be fixed for now. 
The finite-volume operators $H_L$ and $H_L^\tau$ have discrete spectrum, and we denote their eigenvalues by
$\lambda_1^L\le \lambda_2^L\le\cdots$  and
$\mu_1^L\le \mu_2^L \le \cdots$, 
respectively, ordered by magnitude and repeated according to their multiplicities. The corresponding 
orthonormal bases of eigenfunctions are $(\varphi_k^L)_{k\in\N}$ and $(\psi_k^L)_{k\in\N}$, 
respectively. For $N\in\N$ we consider two non-interacting $N$-Fermion systems
in the box $\Lambda_L$ with single-particle Schr\"odinger operators $H_L$ and $H^\tau_L$. 
The corresponding $N$-particle operators are
\begin{equation}
  \boldsymbol{H}^{(\tau)}_{N,L} :=
  \sum_{j=1}^N \underbrace{\id\otimes\dotsm\otimes \id}_{(j-1) \text{times}} \otimes H_L^{(\tau)} 
  \otimes \underbrace{\id\otimes\dotsm\otimes \id}_{(N-j) \text{times}}.
\end{equation}
They act on the totally antisymmetrised tensor product $\H_{N,L}:=\bigwedge_{j=1}^N L^2(\Lambda_L)$ 
with standard scalar product $\langle\,\pmb\cdot\,, \,\pmb\cdot\,\rangle_{\H_{N,L}}$.
The respective ground states of $\boldsymbol{H}_{N,L}$ and $\boldsymbol{H}^{\tau}_{N,L}$ are then 
given by the totally antisymmetrised and normalised tensor products
$\boldsymbol{\Phi}_{N,L} := \varphi_1^L\wedge\dotsm\wedge\varphi_N^L$ and
$\boldsymbol{\Psi}_{N,L} := \psi_1^L\wedge\dotsm\wedge\psi_N^L$.
The \textit{modulus} of their scalar product is given by
\begin{equation}
	\label{intro:defS}
	S_{N,L} := \big|\big\<\boldsymbol{\Phi}_{N,L},\boldsymbol{\Psi}_{N,L}\big\>_{\H_{N,L}}\big|
	= \left|\det \begin{pmatrix}
          \<\varphi^L_1,\psi^L_1\>& \cdots & \<\varphi^L_1,\psi^L_N\>\\
          \vdots                  &        &  \vdots \\
          \<\varphi^L_N,\psi^L_1\>& \cdots &  \<\varphi^L_N,\psi^L_N\>
    \end{pmatrix} \right|,
\end{equation}
where the equality follows from the Leibniz formula for determinants, and the scalar product appearing 
in the matrix entries is the one on the single-particle Hilbert space $L^{2}(\Lambda_{L})$. We note that 
$S_{N,L}$ depends on the coupling constant $\tau$, but we suppress it in our notation.
Our subsequent analysis relies on 
rewriting $S_{N,L}$ in terms of a Fredholm determinant. We recall that, given an operator $K\in \S^1$ 
with eigenvalues $(b_n)_{n\in\N}$, listed according to their algebraic multiplicities, 
the Fredholm determinant 
$\det( \id- K) := \prod_{n\in\N} ( 1- b_n)$
is well defined by the trace-class assumption on $K$ \cite[Sect.\ XIII.17]{reed1980methods4}. 

For fixed $E\in\R$, which will be referred to as the \emph{Fermi energy} in this context, 
we define the $L$-dependent particle number 
\begin{equation}
	\label{eq:DefParticleNumberSeq}
	N_{L}(E) := \Tr \big(\id_{(-\infty,E]} (H_L)\big).
\end{equation}
The limit $L \to\infty$ is a realisation of the thermodynamic limit with particle density given by 
the integrated density of states $\lim_{L\to\infty} N_{L}(E)/|\Lambda_L| = \cN(E)$ of the Hamiltonian $H$.
We define the \emph{finite-volume ground-state overlap}
\begin{equation}
\label{eq:DefOverlap}
	 S_L(E):= \left\{\begin{array}{c@{\quad\text{~if}\quad}l} S_{N_{L}(E),L}, & N_{L}(E) \in\N, \\[.5ex] 
		 1, & N_{L}(E)=0.\end{array}\right.
\end{equation}
Following physics terminology, we say that \emph{Anderson orthogonality} occurs at an energy $E$, if $S_{L}(E)$ vanishes as $L\to\infty$. 
Despite of the relevance in physics, mathematical results of this phenomenon are only rather recent and refer only to scattering systems: if the Fermi energy belongs to the absolutely continuous spectrum of the operator $H$, then there is algebraic decay of $S_{L}(E)$ as $L\to\infty$ \cite{artAOC2014KOS, artAOC2014GKM, artAOC2015GKMO, MR3376020, MR3405952}.

We define the \emph{infinite-volume ground-state overlap} as
\begin{equation}
	\label{fredholm2}
	S(E):= \left\{\begin{array}{c@{\quad}l} \det \big( \id - T(E,H,H^\tau)^2 \big)^{{1/4}}, 
		&\text{~ if } \;T(E,H,H^\tau)\in \mathcal{S}^2, \\[.5ex]
		0,  &\text{~ otherwise.} \end{array} \right.
\end{equation}
We note that if $E\in \FMB$, then $T(E,H,H^\tau)\in\mathcal{S}^p$ almost surely for any $p>0$ by 
Corollary \ref{cor:TrClBounds2}. 

\begin{theorem}
	\begin{nummer}
	\item
		\label{th:IntMainResi}
		For all $E\in\FMB\cap \Int(\Sigma)$ the convergence
		\begin{equation}
			\label{eq:ResultNoAocStat1}
			\lim_{L\to\infty} \mathbb{E}\big[S_{L}(E)\big] = \mathbb{E}\big[S(E)\big]
		\end{equation}
		holds. If, in addition, $L_{n}/\ln n \to \infty$ as $n\to\infty$, then the pointwise convergence
		\begin{equation}
			\label{eq:ResultNoAocStatPointw}
			\lim_{n\to\infty} S_{L_n}(E) = S(E)
		\end{equation}
		holds almost surely. 
	\item
		\label{th:IntMainResiprime}
		Let $E\in\FMB$. Then 
		\begin{equation}
			\label{eq:ResultNoAocStat2}
			S(E)=0 \iff 1\in\sigma\big(T(E,H,H^\tau)^2\big)
		\end{equation}
		holds almost surely.
	\item
		\label{th:IntMainResii}
		Let  $I\subset \FMB$ be a compact interval.  
		Then we have
		\beq\label{AbsAoc2}
			\lim_{\tau\downarrow 0} \inf_{E\in I}\,\E\big[S(E) \big] =  1.
		\eeq
	\item
		\label{th:IntMainResiii}
		Suppose that the assumptions of Theorem \ref{lem:SSFcont} are fulfilled. Then
		\begin{equation}\label{AbsAoc}
			\PP\big[S(E)= 0\big] > 0 \qquad\text{ for a.e. } E\in \FMB \cap \,\{ E'\in\R: \cN'(E') >0\},
		\end{equation} 
		where $\cN'$ is the density of states of $H$.
	\end{nummer}
	\label{th:IntMainRes}
\end{theorem}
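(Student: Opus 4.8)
The plan is to derive the four parts from Corollary~\ref{cor:TrClBounds2}, Theorem~\ref{Thm:prod} and Theorem~\ref{th:SSFandTransOp}, applied throughout with $f=\id_{(-\infty,E]}$; this $f$ lies in $\mathcal F_I$ for any compact interval $I\subset\FMB$ containing $E$, and such an $I$ exists since $\FMB$ is open. Part~\itemref{th:IntMainResiprime} should be immediate: for $E\in\FMB$, Corollary~\ref{cor:TrClBounds2} gives $T:=T(E,H,H^\tau)\in\S^2$ almost surely, so $S(E)=\det(\id-T^2)^{1/4}=\prod_n(1-b_n)^{1/4}$, where $(b_n)_n$ are the eigenvalues of the nonnegative trace-class operator $T^2$, contained in $[0,1]$ because $\|T\|\le1$; this product vanishes iff some $b_n=1$, and, $T^2$ being compact, $1\in\sigma(T^2)$ iff $1$ is an eigenvalue of $T^2$. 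For part~\itemref{th:IntMainResii} I would use the Weierstrass product inequality $\det(\id-T^2)=\prod_n(1-b_n)\ge1-\sum_nb_n=1-\|T\|_2^2$ and $(1-x)^{1/4}\ge1-x$ to get $S(E)\ge1-\|T\|_2^2$ (trivially also when $T\notin\S^2$), then take expectations and invoke Corollary~\ref{cor:TrClBounds2} with $p=q=2$ and $\NTV(\id_{(-\infty,E]})=1$ to obtain $\E[S(E)]\ge1-C_\tau^2$ with $C_\tau$ uniform in $E\in I$ and $C_\tau\downarrow0$ as $\tau\downarrow0$; since $\E[S(E)]\le1$, this is \eqref{AbsAoc2}. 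For part~\itemref{th:IntMainResiii}, Theorem~\ref{lem:SSFcont} and the identity $\xi=\theta$ from Theorem~\ref{th:SSFandTransOpi} give $\E[\theta(E,H,H^\tau)]>0$ for a.e.\ $E$ in the relevant set, hence $\PP[\theta(E,H,H^\tau)\ge1]>0$ as $\theta$ is integer-valued; on the event $\{\theta\ge1\}$ the kernel $\ker(T-\id)$ is nontrivial, so $1\in\sigma(T^2)$ and $S(E)=0$ by part~\itemref{th:IntMainResiprime}.

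The substance lies in part~\itemref{th:IntMainResi}. Write $P_L:=\id_{(-\infty,E]}(H_L)=P_{N_L(E),L}$ and $Q_L:=Q_{N_L(E),L}$. Lemma~\ref{Fredholm:Repr} gives $S_L(E)^2=\det(P_LQ_LP_L|_{\ran P_L})=\det(Q_LP_LQ_L|_{\ran Q_L})$, and since the operators under these determinants have spectrum in $[0,1]$, $S_L(E)^2$ is bounded by the smallest eigenvalue of each; in particular $S_L(E)^2\le\|Q_Lu\|^2$ for any unit vector $u\in\ran P_L$, and $S_L(E)^2\le\|P_Lw\|^2$ for any unit vector $w\in\ran Q_L$. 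I would split on $\{S(E)=0\}$ and $\{S(E)>0\}$; by part~\itemref{th:IntMainResiprime}, on $\{S(E)>0\}$ we have $1\notin\sigma(T^2)$, so $\ker(T\mp\id)=\{0\}$ and $\theta(E,H,H^\tau)=0$. On this event the integer $\xi(E,H_L,H_L^\tau)$ converges to $\theta=0$ --- almost surely along sufficiently fast sequences by \eqref{eq:SSFandIndex}, and in probability along the full sequence by \eqref{th:SSFandTransOpstat3} --- so eventually (resp.\ with probability tending to $1$) $Q_L=\id_{(-\infty,E]}(H_L^\tau)$ and thus $P_L-Q_L=T_L:=\id_{(-\infty,E]}(H_L)-\id_{(-\infty,E]}(H_L^\tau)$. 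Since $T_L\to T$ in $\S^1$ in the corresponding mode by Theorem~\ref{Thm:prod} and the Fredholm determinant is continuous on $\S^1$, one gets $S_L(E)=\det(\id-T_L^2)^{1/4}\to\det(\id-T^2)^{1/4}=S(E)$.

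On $\{S(E)=0\}$ one has $1\in\sigma(T^2)$, so, $T\in\S^2$ being compact, I would pick a unit vector realizing a nontrivial kernel of $T\mp\id$; say $Tv=v$, which means $\id_{(-\infty,E]}(H)v=v$ and $\id_{(-\infty,E]}(H^\tau)v=0$ (the case $Tw=-w$ is symmetric, handled with the second determinant formula). Setting $u_L:=\id_{(-\infty,E]}(H_L)v/\|\id_{(-\infty,E]}(H_L)v\|\in\ran P_L$, I would use that a fixed energy $E$ is almost surely an eigenvalue of neither $H$ nor $H^\tau$, so the spectral projections converge strongly, $\id_{(-\infty,E]}(H_L^{(\tau)})\to\id_{(-\infty,E]}(H^{(\tau)})$; hence $u_L\to v$ in norm and $\|\id_{(-\infty,E]}(H_L^\tau)u_L\|\to\|\id_{(-\infty,E]}(H^\tau)v\|=0$. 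The remaining term $\|(Q_L-\id_{(-\infty,E]}(H_L^\tau))u_L\|$ would be controlled by noting that $Q_L-\id_{(-\infty,E]}(H_L^\tau)$ is, up to sign, a rank-$|\xi(E,H_L,H_L^\tau)|$ spectral projection of $H_L^\tau$ onto eigenvalues between $E$ and the eigenvalues $\mu_{N_L(E)}^L$, $\mu_{N_L(E)+1}^L$ of $H_L^\tau$; because $E\in\Int(\Sigma)$ forces $\cN$ to be strictly increasing near $E$, both of these eigenvalues converge to $E$ as $L\to\infty$, so the relevant interval shrinks to $[E-\eps_L,E+\eps_L]$ with $\eps_L\to0$, and combining this with $\|\id_{[E-\delta,E+\delta]}(H^\tau)v\|\to0$ as $\delta\downarrow0$ and strong convergence makes the term vanish. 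This yields $S_L(E)\le\|Q_Lu_L\|\to0=S(E)$, almost surely along the full sequence. Assembling the two cases along a fast sequence gives \eqref{eq:ResultNoAocStatPointw}; for \eqref{eq:ResultNoAocStat1} I would split $\E[S_L(E)]$ over $\{S(E)=0\}$ and $\{S(E)>0\}$, using dominated convergence on the first (where $0\le S_L(E)\le1$ and $S_L(E)\to0$ almost surely) and, on the second, that boundedness plus convergence in probability implies convergence in $L^1$.

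The hard part will be this last case, $\{S(E)=0\}$ in part~\itemref{th:IntMainResi}: transferring the infinite-volume eigenvector of $T^2$ at eigenvalue $1$ to a finite-volume vector that is asymptotically annihilated by the rank-$N_L(E)$ projection $Q_L$, although $Q_L$ differs from the ``natural'' finite-volume spectral projection $\id_{(-\infty,E]}(H_L^\tau)$ exactly by the $|\xi(E,H_L,H_L^\tau)|$ eigenfunctions of $H_L^\tau$ whose eigenvalues are squeezed towards $E$. Controlling this discrepancy is where the hypothesis $E\in\Int(\Sigma)$ genuinely enters, through strict monotonicity of the integrated density of states near $E$; the other inputs --- strong resolvent convergence of the finite-volume operators and the almost-sure absence of a fixed energy from the point spectrum --- are standard.
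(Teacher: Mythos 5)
Your proposal is correct and in two places takes a genuinely different route. For part~\itemref{th:IntMainResii}, you replace the paper's double Markov-inequality argument with the direct pointwise estimate $S(E)\ge 1-\|T(E,H,H^\tau)\|_2^2$, obtained from the Weierstrass product inequality $\prod_n(1-b_n)\ge 1-\sum_n b_n$ together with $y^{1/4}\ge y$ on $[0,1]$; taking expectations and invoking Corollary~\ref{cor:TrClBounds2} with $p=q=2$ then gives \eqref{AbsAoc2} more cleanly, without the auxiliary parameter $c$. For part~\itemref{th:IntMainResi}, your case split is on $\{S(E)=0\}$ versus $\{S(E)>0\}$, whereas the paper splits on $\{\theta(E,H,H^\tau)=0\}$ versus $\{\theta(E,H,H^\tau)\neq 0\}$; these partitions differ (both $\theta=0$ and $S(E)=0$ can hold simultaneously when $\dim\ker(T\mp\id)$ are equal and nonzero). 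On the event $\{S(E)>0\}$ your argument matches the paper's Case $\theta=0$. On $\{S(E)=0\}$ you propose a test-vector bound $S_L(E)^2\le\|Q_Lu_L\|^2$ with $u_L$ an approximation of some $v\in\ker(T(E,H,H^\tau)\mp\id)$, whereas the paper establishes strong operator convergence $(P_n-Q_n)^2\to T(E,H,H^\tau)^2$ and then invokes a spectral-approximation theorem for strongly convergent self-adjoint operators. Both routes hinge on the same technical core: bounding the rank-$|\xi(E,H_L,H_L^\tau)|$ discrepancy $Q_L-\id_{(-\infty,E]}(H_L^\tau)$ by the spectral projection of $H^\tau_L$ onto a shrinking window $[E,E+\eps)$, using $E\in\Int(\Sigma)$ (so that $\Tr\id_{[E,E+\eps)}(H^\tau_L)\to\infty$), strong resolvent convergence, and $\id_{\{E\}}(H^\tau)=0$ almost surely. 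The paper's version is more structural and is also what underlies the exponential-rate statement in Remark~\ref{rm:expfastconv}; yours is lighter, needing only a single witness vector.

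One small overstatement to flag: you conclude that on $\{S(E)=0\}$ one has $S_L(E)\to 0$ ``almost surely along the full sequence.'' But the step forcing $\xi(E,H_L,H_L^\tau)=\theta(E,H,H^\tau)$ eventually --- on which your control of $Q_L-\id_{(-\infty,E]}(H_L^\tau)$ rests --- holds almost surely only along super-logarithmic sequences (this is \eqref{eq:SSFandIndex}); along the full sequence one obtains only convergence in probability. This does not harm the theorem: \eqref{eq:ResultNoAocStatPointw} is asserted only along such sequences, and \eqref{eq:ResultNoAocStat1} still follows either from bounded convergence in probability (as you use on the other sub-event) or from the paper's dominated-convergence plus sub-subsequence device applied to the deterministic sequence $(\E[S_L(E)])_{L>0}$. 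But as written, the almost-sure claim for the full sequence is not established.
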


\begin{remarks}
\item 
The set of energies for which \eqref{AbsAoc} holds is not empty for our model, see Remark \ref{rem:SSFcont}.
\item 
	Theorem \ref{th:IntMainResii} shows that there is no relic of Anderson orthogonality when switching off the 
	perturbation. A similar phenomenon is expected to occur in the large-disorder limit. 
\item 
	Theorem \ref{th:IntMainResiii} shows that Anderson orthogonality is not absent almost surely in the 
	region of complete localisation, contrary to established wisdom 
	in the physics literature \cite{PhysRevB.65.081106}. 
	Only very recently, physical reasoning and numerical evidence was put forward that Anderson orthogonality 
	does occur in the region of complete localisation \cite{Khemani-NaturePhys15,PhysRevB.92.220201}. 
\item
	For sign-definite perturbations $W$, the condition \eqref{eq:ResultNoAocStat2} for Anderson orthogonality to 
	occur in the region of complete localisation is 
	equivalent to a non-vanishing index,
	\begin{equation}
	\label{AOCequiIndex}
		S(E)=0 \iff \theta(E,H,H^\tau) \neq 0
	\end{equation}
	almost surely for every $E\in\FMB$.
	This follows from the fact that for a sign-definite perturbation, $1$ and $-1$ cannot both be eigenvalues of $T(E,H,H^\tau)$ and therefore $1\in\sigma\big(T(E,H,H^\tau)^2\big) \iff \theta(E,H,H^\tau) \neq 0$.
\item
	Let $I \subset \FMB$ be compact. Then, Theorem \ref{th:IntMainRes} \itemref{th:IntMainResii} and
	\itemref{th:IntMainResiii} imply that there exists $\tau_I>0$ such that for all $0<\tau<\tau_I$ 
	and a.e.\ $E\in I \cap \,\{ E'\in\R: \cN'(E')\} >0$
	\beq
		0<\PP\big[S(E)=0\big]<1.
	\eeq
	Thus, $S(E)$ is not almost surely constant, and both presence and absence of Anderson orthogonality occur with positive probability for energies in $\FMB$.
	\item
	\label{rm:expfastconv}
	Let $E\in \FMB \cap \Int(\Sigma)$ such that 
	$\theta(E,H,H^\tau) = 0$. 
	In this case our proof of 
	Theorem~\ref{th:IntMainResi} provides exponential speed of convergence of the overlap in 
	\eqref{eq:ResultNoAocStatPointw}: we conclude from
	\eqref{eq:ProofIndexZero}  and \eqref{eq:ConvergencePi} that, given a sequence of 
	lengths with $L_{n}/\ln n \to \infty$ as $n\to\infty$, there exists a constant $c>0$ such that
	\beq
		\label{expfastconv}
		\lim_{n\rightarrow\infty} \e^{cL_n}\big|S_{L_n}(E) - S(E)\big| = 0
	\eeq
	holds almost surely. According to the physical arguments in \cite{PhysRevB.92.220201}, 
	exponential speed of convergence occurs with positive probability for $E\in \FMB$ even if 
	$\theta(E,H,H^\tau) \neq 0$. 
	More precisely, it is argued there that $\E[\ln S_{L}(E)] \sim -L$ for large $L$. 
  This phenomenon is named
	\emph{statistical Anderson orthogonality} in \cite{PhysRevB.92.220201} to distinguish it from the 
	usual algebraic decay for energies in the scattering regime, see also the next remark. 
\end{remarks}

\noindent
Similarly to Corollary \ref{phase-transition}, Theorem~\ref{th:IntMainResi} and~\itemref{th:IntMainResii} offer a way to detect spectral phases 
of random Schr\"odinger operators other than the region of complete localisation.

\begin{corollary}
	\label{rm:notFMB}
	Let $E \in \Int(\Sigma)$. Assume there exists a 
	null sequence of coupling constants $(\tau_{k})_{k\in\N} \subset (0,1)$ and a sequence of 
	lengths $(L_{n})_{n\in\N}$ with $L_{n}/\ln n \to \infty$ as $n\to\infty$ such that the $\tau_{k}$-dependent 	overlap obeys $\lim_{n \to \infty} S_{L_{n}}(E) =0$ almost surely for every $k\in\N$, 
	then $E \notin \FMB$. 
\end{corollary}

\begin{remark}
	The assumption of Corollary~\ref{rm:notFMB} is reasonable. This can be seen as follows:
	assume $W\ge 0$ and 
	suppose we knew there exists a spectral interval $J \subset \Sigma$ 
	such that $H$ has absolutely continuous spectrum in $J$ almost surely. Let $(L_{n})_{n\in\N}$ be a 
	sequence of lengths such that $L_{n} / \e^{n^{\alpha}} \to \infty$ as $n\to\infty$ for some $\alpha>1$. 
	Then, Theorem 2.2 in \cite{artAOC2015GKMO}, see also \cite{artAOC2014GKM}, applies realisationwise 
	to the operators $H$ and $H^{\tau}$ for any $\tau\in (0,1]$, and we infer that
	almost surely and for a.e.\ energy $E\in J$ 
	\begin{equation}
		\label{AOC}
		S_{L_{n}}(E) \le  L_{n}^{-\gamma(E)/2 + o(L_{n}^{0})} \qquad\text{as $n\to\infty$}.
	\end{equation}
	The decay exponent 
  \begin{equation}
  	\label{gamma}
		\gamma(E) :=\pi^{-2} \big\| \arcsin|(\id-\mathbb{S}_E)/2| \big\|_2^2 
	\end{equation}
	relates to the energy-dependent scattering matrix $\mathbb S_{E}$ and is strictly positive if the 
	perturbation $W$ causes non-trivial scattering at energy $E$. 
	We remark that the super-exponential growth 
	of the length scales $(L_{n})_{n\in\N}$ avoids the necessity of passing to a subsequence 
	in \eqref{AOC} as is done in 
	\cite[Thm.\ 2.2]{artAOC2015GKMO}. This is so because it implies a sufficiently fast $L^{1}$-convergence in 
	the proof of \cite[Lemma 3.3(i)]{artAOC2015GKMO} so that subsequences can be avoided in that lemma. 
\end{remark}

%
%
\section{Proofs of the results from Section \ref{sect:TraceClassEst}}
\label{sec:4}

\subsection{Helffer--Sj\"ostrand formula}

In this section we prove a spatially localised variant of the Helffer--Sj\"ostrand formula which allows for jump discontinuities.

We consider functions $f\in \TV_c(\R)$. Given such $f$, we choose a cutoff function 
$\Xi\in C_{c}^{\infty}(\R^{2})$ with $\Xi\equiv 1$ in a neighbourhood $N_f$ of $\supp(f)\times \{0\} \subset \R^2$. 
We then define the complex Borel measure $\zeta_{f}$ on $\R^{2}$ by 
\begin{equation}
	\label{eq:LocHelfSjoDef1}
	\d\zeta_f(x,y) := \d f(x)\, \d y\, \Xi(x,y) + \d x\,\d y\, f(x) \left(\partial_x+\i\partial_y\right) \Xi(x,y),
\end{equation}	
where $\d f$ denotes Lebesgue-Stieltjes integration with respect to $f$. We write $|\zeta_{f}|$ for the 
total variation measure of $\zeta_{f}$.

\begin{lemma}
	\label{lem:LocHelfSjo}
	Let $f\in \TV_c(\R)$ and $\zeta_{f}$ as in \eqref{eq:LocHelfSjoDef1}. Let $K$ be a self-adjoint operator and 
	let $A$, $B$ be bounded operators on the Hilbert space $\mathcal{H}$. If
	\beq
		\label{eq:LocHelfSjoAssumpt}
		\int_{\R^2} \d|\zeta_{f}|(x,y)\,   \| A \, R_{x+\i y}(K) \, B \| <\infty,
	\eeq
	then 
	\begin{equation}
		\label{eq:LocHelfSjoStat}
		A f(K) B = \frac{1}{2\pi} \int_{\R^{2}} \d\zeta_{f}(x,y) \, A \, R_{x+\i y}(K) \, B \, 	
	\end{equation}
	holds, where the right-hand side is a Bochner integral with respect to the operator norm. 
\end{lemma}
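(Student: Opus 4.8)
The plan is to deduce the generalized Helffer-Sj\"ostrand formula \eqref{eq:LocHelfSjoStat} from the classical one \eqref{eq:ClassHelfSjo2} by an approximation argument, with the finiteness assumption \eqref{eq:LocHelfSjoAssumpt} playing the role of a dominated-convergence hypothesis that lets us pass to the limit. First I would fix $f\in\TV_c(\R)$ and a cutoff $\Xi\in C_c^\infty(\R^2)$ with $\Xi\equiv1$ on a neighbourhood $N_f$ of $\supp(f)\times\{0\}$, and pick a standard mollifier $(\varrho_\veps)_{\veps>0}$ on $\R$, setting $f_\veps:=f*\varrho_\veps\in C_c^\infty(\R)$ for $\veps$ small enough that $\supp(f_\veps)\times\{0\}\subset N_f$ as well. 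For each such $\veps$ the classical formula \eqref{eq:ClassHelfSjo2} applies and gives $Af_\veps(K)B=\frac{1}{2\pi}\int_{\R^2}\d\zeta_{f_\veps}(x,y)\,AR_{x+\i y}(K)B$, where by \eqref{def:omega} one may take $\d\zeta_{f_\veps}(x,y)=\d x\,\d y\,\big[(\partial_x+\i\partial_y)\big((f_\veps(x)+\i y f_\veps'(x))\Xi(x,y)\big)\big]$. Expanding the derivative and using $f_\veps'(x)\,\d x=\d f_\veps(x)$ together with the product rule, I would rewrite this density so that it takes exactly the form \eqref{eq:LocHelfSjoDef1} with $f$ replaced by $f_\veps$, up to a term proportional to $\i y f_\veps''(x)\Xi(x,y)$ and a term $\i y f_\veps'(x)(\partial_x+\i\partial_y)\Xi$; since $\Xi\equiv1$ on $N_f$ its derivatives are supported away from the real axis, and there the factor $y$ is bounded below, so these auxiliary contributions are harmless — in fact, an integration by parts in $x$ moves the second $x$-derivative off $f_\veps$ and reproduces $\d f_\veps$. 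The cleanest bookkeeping is to verify directly that \eqref{eq:LocHelfSjoDef1}, for any $g\in C_c^1$, agrees with the classical choice \eqref{def:omega} as measures, using $\d g = g'\,\d x$ and one integration by parts; this reduces the $\veps>0$ case of the lemma to the textbook statement.

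Next I would let $\veps\downarrow0$. On the left, $f_\veps\to f$ pointwise at every continuity point of $f$ (hence $\d f(x)$-almost everywhere after discarding the atoms, which we handle separately) and $\|f_\veps\|_\infty$ stays bounded by $\|f\|_\infty$; since $f$ has compact support, bounded convergence for the spectral measures of $K$ gives $f_\veps(K)\to f(K)$ in the strong operator topology, so $Af_\veps(K)B\to Af(K)B$ strongly, and in particular weakly, which suffices to identify the limit. On the right, I need $\int_{\R^2}\d\zeta_{f_\veps}\,AR_{x+\i y}(K)B\to\int_{\R^2}\d\zeta_f\,AR_{x+\i y}(K)B$. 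Here the structure of \eqref{eq:LocHelfSjoDef1} helps: the second summand $\d x\,\d y\,f(x)(\partial_x+\i\partial_y)\Xi(x,y)$ is an absolutely continuous measure supported in $\supp(\partial\Xi)$, which by construction is disjoint from a neighbourhood of $\{y=0\}$, so there $\|AR_{x+\i y}(K)B\|\le\|A\|\,\|B\|/|y|$ is bounded and $f_\veps(x)\to f(x)$ for a.e.\ $x$ gives convergence of this piece by dominated convergence. For the first summand $\d f_\veps(x)\,\d y\,\Xi(x,y)$, I would use that $\d f_\veps=(\d f)*\varrho_\veps$ converges weakly-$*$ to $\d f$ as measures, with uniformly bounded total variation $\NTV(f)$; combined with the hypothesis \eqref{eq:LocHelfSjoAssumpt}, which says precisely that $(x,y)\mapsto\Xi(x,y)\,\|AR_{x+\i y}(K)B\|$ is $|\zeta_f|$-integrable, and with the fact that $y\mapsto AR_{x+\i y}(K)B$ is continuous away from $y=0$ and norm-bounded by $\|A\|\|B\|/|y|$, the integrals converge. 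A convenient way to organize this is to split the $y$-integration into $|y|>\delta$, where the integrand is continuous and bounded so weak-$*$ convergence of $\d f_\veps$ applies, and $|y|\le\delta$, whose contribution is controlled uniformly in $\veps$ by $\sup_{x}\int_{|y|\le\delta}\|AR_{x+\i y}(K)B\|\,\d y$ times $\NTV(f)$ — but note this last supremum need \emph{not} be small without \eqref{eq:LocHelfSjoAssumpt}, so instead one dominates by the $|\zeta_f|$-integrable majorant coming from the hypothesis, using $|\zeta_{f_\veps}|\le(\d|f_\veps|)\otimes(\d y\,|\Xi|)+\dots$ and the pointwise bound $|f_\veps|\le|f|*\varrho_\veps$. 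The right-hand side of \eqref{eq:LocHelfSjoStat} is then a norm-convergent (Bochner) integral because its integrand is strongly measurable and the hypothesis \eqref{eq:LocHelfSjoAssumpt} is exactly the $L^1$-bound required for Bochner integrability.

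The main obstacle I anticipate is the passage to the limit in the singular first summand near $y=0$: the mollified derivatives $\d f_\veps$ approximate $\d f$ only in the weak-$*$ sense, not in total variation, so one cannot simply dominate $\|\d f_\veps-\d f\|$; one must genuinely use the joint structure — weak-$*$ convergence of $\d f_\veps$ in $x$ together with the $|\zeta_f|$-integrability supplied by \eqref{eq:LocHelfSjoAssumpt} in the combined variables — and be careful that the majorant is built from $\d f$ (via $|f|*\varrho_\veps\le(|f|)*\varrho_\veps$ and Fubini) rather than from $\d f_\veps$. A secondary technical point is treating the atoms of $f$: at a jump point $x_0$, $f_\veps(x_0)\to\frac12(f(x_0^-)+f(x_0^+))$, which is the correct value for $f(K)$ to equal $\frac12(\id_{(-\infty,x_0]}+\id_{(-\infty,x_0)})(K)$ on that spectral slice — but since $K$ may have eigenvalues at $x_0$ this must be tracked, and it is cleanest to define $f$ (or its representative) so that the measure $\d f$ carries the jump and the spectral calculus convention matches; alternatively one observes that the set of $x_0$ that are eigenvalues of $K$ is at most countable, so one may first prove the identity for $f$ with no atoms at eigenvalues of $K$ and then remove this restriction by another approximation, or simply note that both sides of \eqref{eq:LocHelfSjoStat} depend on $f$ only through $\d f$ and the boundary values, so a single convention fixes everything. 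Once these two points are dispatched, everything else is routine: Fubini to identify \eqref{eq:LocHelfSjoDef1} with \eqref{def:omega} in the smooth case, bounded convergence for the spectral side, and dominated convergence against the $|\zeta_f|$-integrable majorant for the resolvent side.
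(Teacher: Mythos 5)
Your approach (mollify $f$ in the spectral variable, then pass to the limit in both sides) is genuinely different from the paper's, which instead leaves $f$ alone and truncates the $y$-integration out of a shrinking strip $\mathcal{C}_\varepsilon = \R\times[-\varepsilon,\varepsilon]$; the paper then recognises, via Fubini and the holomorphy of the Cauchy kernel, that the complementary integral equals $2\pi A f_\varepsilon(K) B$ with $f_\varepsilon$ the Poisson regularisation of $f$. That route never introduces the measure $\zeta_{f_\varepsilon}$ as an object in its own right, which, as explained below, is precisely where your argument runs into trouble.

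The first, and main, gap is in your limit on the right-hand side. You correctly flag the difficulty near $y=0$, but the proposed resolution does not close it. The hypothesis \eqref{eq:LocHelfSjoAssumpt} controls $\int\d|\zeta_f|\,\|AR_{x+\i y}(K)B\|$, i.e.\ it controls the function $g(x):=\int_{|y|\le\delta}\|AR_{x+\i y}(K)B\|\,\d y$ only $\d|f|$-almost everywhere, not pointwise and not locally uniformly. Since $|f_\varepsilon'|\le (\d|f|)*\varrho_\varepsilon$, the quantity you need to dominate is $\int\d|f|(t)\,(g*\tilde\varrho_\varepsilon)(t)$, and there is no reason the maximal function $\sup_\varepsilon (g*\tilde\varrho_\varepsilon)$ should be finite $\d|f|$-a.e.: the jump points of $f$ can sit right next to, or be approached by, parts of $\sigma(K)$ where $\|AR_{x+\i y}(K)B\|$ is not uniformly integrable in $y$, in which case $(g*\tilde\varrho_\varepsilon)(t)$ blows up as $\varepsilon\downarrow0$ even though $g(t)<\infty$. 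In other words the intermediate Bochner integrals $\int\d\zeta_{f_\varepsilon}\,AR(K)B$ are not a priori controlled by \eqref{eq:LocHelfSjoAssumpt}, and dominated convergence against a $\d f$-built majorant does not apply. The paper avoids this entirely because it keeps the original $\zeta_f$ throughout: the region $|y|\le\varepsilon$ is disposed of by \eqref{eq:LocHelfSjoAssumpt} directly, and the region $|y|>\varepsilon$ is computed exactly.

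The second gap is on the left-hand side. You assert that $f_\varepsilon(K)\to f(K)$ strongly by bounded convergence, but $f_\varepsilon\to f$ fails at the jump points of $f$ (a symmetric mollifier gives the midpoint value), so this requires the set of jump points to be a null set for every relevant spectral measure $\mu_{\varphi,\psi}:=\langle\varphi,A\id_{\,\pmb\cdot\,}(K)B\psi\rangle$. The resolutions you propose do not work: the left-hand side $f(K)$ is \emph{not} determined by $\d f$ alone (it depends on the pointwise values of $f$ on eigenvalues of $K$), and ``prove first for $f$ with no atoms at eigenvalues of $K$'' does not say how to remove that restriction. The correct mechanism -- and the key observation in the paper's proof -- is that \eqref{eq:LocHelfSjoAssumpt} itself forces $\mu_{\varphi,\psi}(\{\lambda\})=0$ at every jump point $\lambda$ of $f$: if $\delta f_\lambda\neq0$ and $\mu_{\varphi,\psi}(\{\lambda\})\neq0$, then $\|AR_{\lambda+\i y}(K)B\|\gtrsim|\mu_{\varphi,\psi}(\{\lambda\})|/|y|$ for small $|y|$, and the $\d f$-atom at $\lambda$ times $\int\d y/|y|$ contradicts the assumed finiteness. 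You would need this argument in your approach too, both to fix the left-hand side and (implicitly) to have any hope of controlling the right.
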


\begin{remark}
	Lemma \ref{lem:LocHelfSjo} can be extended to appropriate Besov spaces, $B_{p,q}^s$, $1\leq p,q\leq \infty$ 
	and $0<s<1$, by using Dynkin's characterization of Besov spaces  in terms of quasi-analytic 
	extensions \cite{zbMATH03808305}. 
\end{remark}

Before we prove the lemma, we apply it to spatially localised functions of random Schr\"odinger operators which obey the a priori estimate \eqref{eq:AppAPriori}.

\begin{corollary}
	\label{cor:HSrandom}
 	Let $G\subseteq\R^d$ be open, $a,b\in G$ and let $H^{\tau}_G$ be defined as in \eqref{eq:TheOperator} and 	
	\eqref{def:perturbedoperator}. Then, for $f\in \TV_c(\R)$, the equality
	\begin{equation}
		\label{eq:LocHelfSjorandom}
		\chi_{a} f(H^{\tau}_{G}) \chi_{b} = \frac{1}{2\pi} \int_{\R^{2}} \d\zeta_{f}(x,y) \, \chi_{a} \, 
		R_{x+\i y}(H^{\tau}_{G}) \, \chi_{b}	
	\end{equation}
	holds almost surely.
\end{corollary}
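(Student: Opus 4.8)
The plan is to verify, for every fixed open $G\subseteq\R^{d}$, every $a,b\in G$ and every $\tau\in[0,1]$, that the integrability hypothesis \eqref{eq:LocHelfSjoAssumpt} of Lemma~\ref{lem:LocHelfSjo} holds almost surely with $K=H^{\tau}_{G}$, $A=\chi_{a}$ and $B=\chi_{b}$; once this is done, the claimed identity \eqref{eq:LocHelfSjorandom} is just \eqref{eq:LocHelfSjoStat} specialised to this situation. Since $f\in\TV_c(\R)$ has compact support and the cutoff $\Xi\in C_{c}^{\infty}(\R^{2})$ in \eqref{eq:LocHelfSjoDef1} is compactly supported, fix a bounded interval $I\subset\R$ and a number $\delta>0$ with $\supp(\Xi)\subseteq I\times[-\delta,\delta]$; note also that $f$ is bounded with $\|f\|_{\infty}\le\NTV(f)$ and that the Lebesgue--Stieltjes measure $\d f$ has total variation $\NTV(f)<\infty$. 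Abbreviating $N_{x,y}:=\|\chi_{a}R_{x+\i y}(H^{\tau}_{G})\chi_{b}\|$, I will use throughout the deterministic a priori bound $N_{x,y}\le\|R_{x+\i y}(H^{\tau}_{G})\|\le|y|^{-1}$, valid for all $y\neq0$. Finally, from \eqref{eq:LocHelfSjoDef1} the total-variation measure obeys
\[
	\d|\zeta_{f}| \;\le\; \d|f|(x)\,\d y\,|\Xi(x,y)| \;+\; \d x\,\d y\,|f(x)|\,|(\partial_{x}+\i\partial_{y})\Xi(x,y)|,
\]
where $\d|f|$ denotes the total-variation measure of $\d f$, and I would estimate the two contributions to \eqref{eq:LocHelfSjoAssumpt} separately.

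The contribution of the second summand can be handled deterministically, i.e.\ for every realisation of the disorder and uniformly in $G$, $a$, $b$ and $\tau$. Indeed, since $\Xi\equiv1$ on a neighbourhood $N_{f}$ of $\supp(f)\times\{0\}$, the function $(\partial_{x}+\i\partial_{y})\Xi$ vanishes on $N_{f}$; hence on the set where simultaneously $f(x)\neq0$ and $(\partial_{x}+\i\partial_{y})\Xi(x,y)\neq0$ one has $|y|\ge\eps$ for some $\eps>0$, so that $N_{x,y}\le\eps^{-1}$ there. As $f$ and $(\partial_{x}+\i\partial_{y})\Xi$ are bounded and $\Xi$ has compact support, this contribution is bounded by a finite constant.

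For the first summand, which is the only one in which $y$ may approach $0$, the plan is to bound the expectation of the corresponding integral and invoke Tonelli's theorem. The key point is the interpolation estimate
\[
	\E\big[N_{x,y}\big] \;=\; \E\big[N_{x,y}^{1-s}\,N_{x,y}^{s}\big] \;\le\; |y|^{-(1-s)}\,\E\big[N_{x,y}^{s}\big] \;\le\; C_{s}\,|y|^{-(1-s)},
\]
where the first inequality uses $N_{x,y}\le|y|^{-1}$ together with $1-s>0$ and the second is the fractional-moment a priori bound \eqref{eq:AppAPriori} on the bounded interval $I$. Since $1-s\in(0,1)$, the function $y\mapsto|y|^{-(1-s)}$ is integrable over $[-\delta,\delta]$, so by Tonelli's theorem
\begin{align*}
	\E\!\left[\int_{\R^{2}}\d|f|(x)\,\d y\,|\Xi(x,y)|\,N_{x,y}\right]
	&= \int_{\R^{2}}\d|f|(x)\,\d y\,|\Xi(x,y)|\,\E\big[N_{x,y}\big]\\
	&\le \|\Xi\|_{\infty}\,\NTV(f)\int_{-\delta}^{\delta}C_{s}\,|y|^{-(1-s)}\,\d y \;<\; \infty.
\end{align*}
Hence this integral is finite almost surely, and together with the deterministic bound above this shows that \eqref{eq:LocHelfSjoAssumpt} holds almost surely; Lemma~\ref{lem:LocHelfSjo} then yields \eqref{eq:LocHelfSjorandom}. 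The only non-routine step is exactly this last estimate: the fractional-moment bound \eqref{eq:AppAPriori} is available only for exponents $s<1$, and it is the deterministic resolvent bound $\|R_{x+\i y}\|\le|y|^{-1}$ that converts it, via the displayed interpolation, into $\d y$-integrability of the first moment near $y=0$.
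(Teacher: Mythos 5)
Your proof is correct and follows essentially the same strategy as the paper: verify the hypothesis \eqref{eq:LocHelfSjoAssumpt} almost surely by interpolating $\|\chi_a R_{x+\i y}(H^\tau_G)\chi_b\| = N_{x,y}^{1-s}N_{x,y}^s$, bounding the first factor deterministically by $|y|^{-(1-s)}$ and the second in expectation via the a priori estimate \eqref{eq:AppAPriori}, then integrating against $|\zeta_f|$. The only cosmetic difference is that you handle the $(\partial_x+\i\partial_y)\Xi$ part of $\zeta_f$ separately by noting it is supported away from $y=0$, whereas the paper applies the same $|y|^{-(1-s)}$ interpolation uniformly to all of $|\zeta_f|$ in \eqref{eq:zeta-int}; both are valid.
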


\begin{proof}
	We verify that \eqref{eq:LocHelfSjoAssumpt} holds almost surely. There exists $\delta>0$ (independent of $f$)
	such that $\supp(\Xi) \subset \R \times [-\delta,\delta]$. For fixed $0<s<1$, we use the deterministic norm bound 
	$\| R_z(H^{\tau}_G)\|^{1-s}\leq 1/ |\text{Im} z|^{1-s}$, $z\in\C\setminus\R$, and estimate 
 	\begin{align}
		\label{eq:LocHelfSjo1}
		& \mathbb{E}\left[	\int_{\R^2} \d|\zeta_{f}|(x,y)\,  \| \chi_a\, R_{x+\i y}(H^{\tau}_G) \, \chi_b \| \right] 
		\leq  C_s  \int_{\R^2} \frac{\d|\zeta_{f}|(x,y)}{|y|^{1-s}},
	\end{align}
	where 
	$	C_s:= 
		\sup_{{(E,\eta)\in \supp(\zeta_f), \eta\neq 0}} \mathbb{E} 
		\left[ \Vert \chi_{a}R_{E+\i\eta}(H^{\tau}_G)\chi_{b} \Vert^{s} \right] <\infty $
   is finite by the a priori estimate \eqref{eq:AppAPriori}. The claim then follows from     
	\begin{align}
		\label{eq:zeta-int}
 		\int_{\R^2} \frac{\d|\zeta_{f}|(x,y)}{|y|^{1-s}} \le \frac{2}{s}\, \delta^{s} \big(\NTV(f) \|\Xi\|_{\infty} + 
		2 \|f\|_{1} \| |\nabla\Xi|\|_{\infty}\big). 
		\\[-1ex]	\tag*{\qedhere}
	\end{align}
\end{proof}

\begin{proof}[Proof of Lemma \ref{lem:LocHelfSjo}]
	We note that assumption \eqref{eq:LocHelfSjoAssumpt} implies that the right-hand side of 
	\eqref{eq:LocHelfSjoStat} is well defined as a Bochner integral with respect to the operator norm.  
	For $\varepsilon>0$, we split the integration into the set $\mathcal{C}_{\varepsilon} := \R \times [-\varepsilon,\varepsilon]$   
	and its complement in $\R^2$. Using dominated convergence and the integrability assumption 				             
	\eqref{eq:LocHelfSjoAssumpt}, we see that the integral over $\mathcal{C}_{\varepsilon} $
        	vanishes as $\varepsilon\to 0$. Therefore, we obtain
	that
	\begin{align}
		\label{eq:LocHelfSjo2}
		 \frac{1}{2\pi} \int_{\mathbb{R}^{2}}  \d\zeta_{f}(x,y) \,  A \, R_{x+\i y}(K) \, B 
		= \lim_{\varepsilon\downarrow 0}  A f_{\varepsilon}(K) B
	\end{align}
	in the operator-norm topology with 
	\begin{align}
		\label{eq:LocHelfSjo5}
 		f_{\varepsilon}(\lambda) 
		&:= \frac{1}{2\pi} \int_{\R^2  \setminus \mathcal{C}_{\varepsilon} }  \frac{\d\zeta_{f}(x,y)}{\lambda -x -\i y} 
			= \int_{\R} \d x \, \frac{1/\pi}{x^{2} + 1} \, f(\lambda+ \varepsilon x) 
	\end{align}
	for $\lambda\in\R$.  
	
	The second part of the proof deals with the new aspect that discontinuity points of $f$ challenge the convergence of $f_{\varepsilon}$ as $\varepsilon\downarrow 0$. Let 
	$\varphi,\psi \in\mathcal{H}$ and define the complex spectral measure $\mu_{\varphi,\psi} := \langle \varphi, A \id_{\raisebox{-.4ex}{$\boldsymbol\cdot$}}(K) B\psi\rangle$ of $K$. The functional calculus, \eqref{eq:LocHelfSjo5} and dominated convergence imply
\begin{equation}
	\label{eq:f-eps-weak}
	 \lim_{\varepsilon\downarrow 0} \langle \varphi, A f_{\varepsilon}(K) B\psi\rangle = 
	 \int_{\R}\d\mu_{\varphi,\psi}(\lambda) \, \lim_{\varepsilon\downarrow 0} \int_{\R} \d x\, 
	 \frac{1/\pi}{x^{2} + 1} \, f(\lambda+ \varepsilon x).
\end{equation}
We prove below that the set of discontinuity points of $f$ is a $\mu_{\varphi,\psi}$-null set. Using this, another application of dominated convergence in \eqref{eq:f-eps-weak} yields $\lim_{\varepsilon\downarrow 0} Af_{\varepsilon}(K)B = Af(K)B$ weakly, and the lemma follows. 

It remains to prove that $f$ is continuous $\mu_{\varphi,\psi}$-almost everywhere. 
Without loss of generality, we assume $\|\varphi\| = \|\psi\| =1$.
Since $f\in \TV_{c}(\R)$, it is right-continuous and has left limits at all points. Hence, the set 
$\mathcal{U} := \{\lambda\in\R: f \text{ not continuous in } \lambda \}$ consists of jump discontinuities only. Moreover, 
$\mathcal{U}$ is countable so that 
$ \mu_{\varphi,\psi}(\mathcal{U}) = \sum_{\lambda \in \mathcal{U}} \mu_{\varphi,\psi}(\{\lambda\})$.
We fix an arbitrary $\lambda\in\mathcal{U}$ and set $\delta f_{\lambda} := \lim_{\varepsilon\downarrow 0} [f(\lambda) - f(\lambda -\varepsilon)] \neq0$. We choose $y_{0} >0$ small enough such that $\Xi(\lambda,y)=1$ whenever $|y| \le y_{0}$. 
Assumption \eqref{eq:LocHelfSjoAssumpt} implies
\begin{align}
	\label{eq:ass-cons}
	\infty & >  \int_{\R^2} \d|\zeta_{f}|(x,y)\,   \| A \, R_{x+\i y}(K) \, B \| 
		\ge |\delta f_{\lambda}|  \int_{-y_{0}}^{y_{0}} \d y\, \| A \, R_{\lambda+\i y}(K) \, B \| \notag\\
	& \ge  |\delta f_{\lambda}|  \int_{-y_{0}}^{y_{0}} \frac{\d y}{|y|}\, |h(y)|,
\end{align}
 where $h(y) := \int_{\R}\d\mu_{\varphi,\psi}(\lambda') \, y/(\lambda' - \lambda -\i y)$. Dominated convergence implies 
that $\R \ni y \mapsto h(y)$ is continuous and that $\lim_{y\to 0} h(y) = \i\,\mu_{\varphi,\psi}(\{\lambda\})$.
Now, we assume that $\mu_{\varphi,\psi}(\{\lambda\}) \neq 0$. Then there exists $0<y_{1} \le y_{0}$ such that 
$|h(y)| \ge  |\mu_{\varphi,\psi}(\{\lambda\})|/2 $ whenever $|y| \le y_{1}$, and we conclude that \eqref{eq:ass-cons} 
yields a contradiction. Therefore we must have $\mu_{\varphi,\psi}(\{\lambda\})=0$, and $\mathcal{U}$ is a null set. This implies the desired continuity of $f$.
\end{proof}

%
\subsection{Proof of Theorem \ref{th:TrClBounds1} and Corollary \ref{cor:TrClBounds2}}
\label{pf:th:TrClBounds1}

We introduce a lower bound of the spectra of all considered random Schr\"odinger operators, 
\begin{align}
	\label{eq:DefE0}
	E_0 &:= \min \big\{\inf_{x\in\R^d}  \{V_0(x)\}, \inf_{x\in \R^d} \{V_0(x)  + W(x)\}\big\} \notag\\
	& \phantom:\leq \min\big\{ \min\{\sigma(H_G)\}, \min\{\sigma(H^\tau_G)\}\big\}.
\end{align}
Here, the inequality holds almost surely for all $\tau \in [0,1]$ and all $G\subseteq \R^d$ open. 

We begin with an operator-norm version of Theorem \ref{th:TrClBounds1}. Given a function $f\in \TV(\R)$, 
we use the notation $\supp(f')$ for the support of the complex measure defined by Lebesgue-Stieltjes 
integration with respect to $f$.

\begin{theorem}
	\label{th:TrClBounds1:OpNorm}
	Fix $0<s<1$ and a compact set $S \subset\FMB$. Then there exist finite constants $C,\mu>0$ such that  
	for all functions $f\in\BV_c(\R)$ with $\supp(f')\subseteq S$, for all $\tau\in[0,1]$, all open $G\subseteq\Rn$ 
	and all $a,b\in \R^d$ we have
	\begin{equation}
		\label{eq:OpBounds1Stat2}
		\mathbb{E} \left[ \left\| \chi_a\big( f(H_G) - f(H^\tau_G) \big) \chi_b \right\|  \right] 
		\leq C|\tau|^{s/2}\,\big( \|f\|_1+\NTV(f) \big) \e^{-\mu\left(|a|+|b|\right)}.
	\end{equation}	
\end{theorem}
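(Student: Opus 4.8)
The plan is to insert the localised Helffer--Sj\"ostrand representation of Corollary~\ref{cor:HSrandom}, then to use the resolvent identity together with the compact support of $W$, and finally to interpolate between a bound that is effective for small $|\Im z|$ (via the fractional-moment bound \eqref{eq:DefFMB}) and a bound that carries the gain $|\tau|$ and the separate decay in $a$ and $b$. For the setup, note that since $f\in\BV_c(\R)$ has compact support and $\supp(f')\subseteq S$, it is constant on each connected component of $\R\setminus\supp(f')$ and vanishes on the two unbounded ones; hence $\supp f\subseteq[\inf S,\sup S]$. Fix $\delta>0$ and a cutoff $\Xi\in C_c^\infty(\R^2)$, depending only on $S$, with $\Xi\equiv1$ on $[\inf S-1,\sup S+1]\times[-\delta/2,\delta/2]$ and $\supp\Xi\subseteq\R\times[-\delta,\delta]$, and split $\zeta_f=\zeta_f^{(1)}+\zeta_f^{(2)}$ as in \eqref{eq:LocHelfSjoDef1}: $\zeta_f^{(1)}:=\d f(x)\,\d y\,\Xi(x,y)$ is supported in $S\times[-\delta,\delta]$, while $\zeta_f^{(2)}:=\d x\,\d y\,f(x)(\partial_x+\i\partial_y)\Xi(x,y)$ is supported in $[\inf S,\sup S]\times\{\delta/2\le|y|\le\delta\}$ because $f$ vanishes where $\nabla\Xi$ is small in $|y|$. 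One has $|\zeta_f^{(1)}|(\R^2)\le2\delta\|\Xi\|_\infty\NTV(f)$ and $|\zeta_f^{(2)}|(\R^2)\le2\delta\|\nabla\Xi\|_\infty\|f\|_1$. By Corollary~\ref{cor:HSrandom}, almost surely $\chi_a(f(H_G)-f(H_G^\tau))\chi_b=\tfrac1{2\pi}\int_{\R^2}\d\zeta_f(x,y)\,B_z$ with $B_z:=\chi_a(R_z(H_G)-R_z(H_G^\tau))\chi_b$ and $z=x+\i y$, and one estimates the contributions of $\zeta_f^{(1)}$ and $\zeta_f^{(2)}$ separately.

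\emph{The $\zeta_f^{(2)}$ part.} On $\supp\zeta_f^{(2)}$ we have $|\Im z|\ge\delta/2$ and $|z|$ bounded in terms of $S$. The Combes--Thomas estimate \cite{MR1937430}, which holds uniformly in the open set $G$ and in $\tau\in[0,1]$, gives $\|\chi_a R_z(H^{(\tau)}_{G})\chi_c\|\le C\e^{-c|a-c|}$ there. Writing $R_z(H_G)-R_z(H_G^\tau)=\tau R_z(H_G)WR_z(H_G^\tau)$ and $W=\sum_{k\in\Lambda_W}\chi_k W\chi_k$ for a finite $\Lambda_W\subset\Zd$ whose unit cubes cover $\supp W$ (distinct cubes overlapping only on a null set), we obtain $\|B_z\|\le C|\tau|\e^{-\mu(|a|+|b|)}$ on $\supp\zeta_f^{(2)}$, so this part of $\chi_a(f(H_G)-f(H_G^\tau))\chi_b$ is bounded by $C|\tau|\,\|f\|_1\,\e^{-\mu(|a|+|b|)}\le C|\tau|^{s/2}(\|f\|_1+\NTV(f))\e^{-\mu(|a|+|b|)}$ since $|\tau|\le|\tau|^{s/2}$ for $|\tau|\le1$.

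\emph{The $\zeta_f^{(1)}$ part.} Here $x\in S$, but $|y|$ may be arbitrarily small. We use $S\subseteq\FMB(H)=\FMB(H^\tau)$ (Lemma~\ref{Lem:PertPersPot} and Appendix~\ref{app:FMB}), the openness of $\FMB$, and Remark~\ref{rem:FMBS} to fix an exponent $s_0\in(s/2,1)$ for which \eqref{eq:DefFMB} holds, with exponent $s_0$, for both $H$ and $H^\tau$, with constants uniform in the open set $G$, in $\tau\in[0,1]$, and for $x$ in a fixed compact neighbourhood of $S$ inside $\FMB$. For $z=x+\i y$ with $x\in S$ and $0<|y|\le\delta$ we derive two bounds on $B_z$. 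First, by the triangle inequality and $\|\chi_a R_z(H^{(\tau)}_{G})\chi_b\|^{1-s_0}\le|y|^{-(1-s_0)}$ together with \eqref{eq:DefFMB}, $\E[\|B_z\|]\le C|y|^{-(1-s_0)}\e^{-\mu_0|a-b|}$. Second, by the resolvent identity, the decomposition of $W$ as above, Cauchy--Schwarz, the bound $\|\chi_a R_z(H_G)\chi_k\|^2\le|y|^{-(2-s_0)}\|\chi_a R_z(H_G)\chi_k\|^{s_0}$ (and its analogue for $H^\tau$), and $|a-k|\ge|a|-c_W$, $|k-b|\ge|b|-c_W$ for $k\in\Lambda_W$, we get $\E[\|B_z\|]\le C|\tau|\,|y|^{-(2-s_0)}\e^{-\mu_0'(|a|+|b|)}$. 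Since $\|B_z\|$ is pointwise dominated by both of these bases, $\|B_z\|\le(\text{first base})^{1-s/2}(\text{second base})^{s/2}$, and H\"older's inequality for $\E$ with exponents $\tfrac1{1-s/2}$ and $\tfrac2s$ yields $\E[\|B_z\|]\le C|\tau|^{s/2}\,|y|^{-[(1-s_0)+s/2]}\,\e^{-\mu(|a|+|b|)}$, using that the exponent of $|y|$ is $(1-s_0)(1-s/2)+(2-s_0)s/2=(1-s_0)+s/2<1$ thanks to $s_0>s/2$ (and absorbing $\e^{-\mu_0(1-s/2)|a-b|}\le1$). Integrating this against $\d|\zeta_f^{(1)}|(x,y)=\d|f|(x)\,\d y\,|\Xi(x,y)|$ over $|y|\le\delta$ produces the convergent integral $\int_{-\delta}^\delta|y|^{-[(1-s_0)+s/2]}\d y$ times $\NTV(f)$, so the $\zeta_f^{(1)}$ part is bounded by $C|\tau|^{s/2}\NTV(f)\e^{-\mu(|a|+|b|)}$. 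Combining with the previous step gives \eqref{eq:OpBounds1Stat2}. (Throughout, Fubini--Tonelli justifies interchanging $\E$ with the $\zeta_f$-integrals, since all the resulting bounds are finite.)

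\emph{Main obstacle.} The crux is the power counting in $|y|$: the two resolvents together can blow up like $|y|^{-2}$, so more than one full power of $|y|^{-1}$ must be ``protected'' by the fractional-moment bound while still retaining a factor $|\tau|^{s/2}$ and the separate exponential decay in $a$ and $b$. This is exactly what the freedom to use \eqref{eq:DefFMB} with \emph{any} exponent in $(0,1)$ (Remark~\ref{rem:FMBS}) provides, via the choice $s_0>s/2$; a symmetric Cauchy--Schwarz split of a single $s$-th moment would force the $y$-integral to diverge. The only other point requiring care is to choose $\Xi$ independently of $f$ so that $\zeta_f^{(2)}$ stays bounded away from the real axis, where the deterministic Combes--Thomas bound already suffices and the randomness is not needed.
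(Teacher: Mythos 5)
Your proof is correct and follows the same overall strategy as the paper: insert the Helffer--Sj\"ostrand representation of Corollary~\ref{cor:HSrandom}, separate the region where $|y|$ is bounded away from $0$ (handled deterministically via Combes--Thomas) from the region where $x\in S$ (handled via the fractional-moment bound), and use the resolvent identity, the finite cover $\Gamma_W$ of $\supp W$, and Cauchy--Schwarz over the probability space to produce the factor $|\tau|^{s/2}$ and the separate decay in $a$ and $b$. The one genuine difference is how the singular power of $|y|$ is tamed near the real axis. The paper applies the deterministic bound $\|B_z\|^{1-s/2}\le (2/|y|)^{1-s/2}$ once, is left with $\|B_z\|^{s/2}$, applies the resolvent identity to that $s/2$-th power (so $|\tau|^{s/2}$ appears immediately), and then invokes the fractional-moment bound with exponent $s$ after Cauchy--Schwarz; the resulting $y$-integrand is $|y|^{-(1-s/2)}$, integrable for every $s\in(0,1)$ with no side conditions. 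You instead derive two separate expectation bounds on $\|B_z\|$ --- a $\tau$-free one with $|a-b|$-decay and a $\tau$-weighted one with $(|a|+|b|)$-decay --- and interpolate between them by H\"older, which introduces the auxiliary exponent $s_0$ and the constraint $s_0>s/2$ to keep $|y|^{-[(1-s_0)+s/2]}$ integrable. Both are valid; the paper's self-referential split is a bit more economical, while your version makes the trade-off between the $\tau$-gain, the $(|a|+|b|)$-decay and the $|y|$-singularity explicit. The remaining ingredients --- $\supp f\subseteq[\inf S,\sup S]$, the $S$-dependent (not $f$-dependent) cutoff, the stability $\FMB(H)=\FMB(H^\tau)$, and the flexibility in the fractional-moment exponent --- match the paper.
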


\begin{proof} 
	We fix $f\in\BV_c(\R)$ such that $\supp(f')\subseteq S$. Let $S_{-} := \inf S$, $S_{+} := \sup S$ and 
	observe $\supp(f) \subseteq [S_{-},S_{+}]$. We choose a cutoff function $\Xi=\Xi_{f}$ subject to 
	\begin{MyDescription}
	\item[(P1)]{ 
		\label{K1}		
		$\Xi\in C_c^{\infty}(\R^2)$ with $0\leq \Xi \leq 1$ and $\|\partial_{x}\Xi\|_{\infty},\|\partial_{y} 
		\Xi\|_{\infty} \leq 3$,}			
	\item[(P2)]{ 
		\label{K2}
		$\supp(\Xi)\subseteq [S_- -1,S_+ +1] \times [-1,1]$,} 
	\item[(P3)]{ 
		\label{K3}
		$\Xi\equiv 1$ on $[S_- -1/2,S_+ + 1/2] \times [-1/2,1/2]$.}
	\end{MyDescription}
	Now fix $G\subseteq\Rn$ open and $a,b\in \R^d$. Let $\zeta_h$ be the complex Borel measure defined 
	in \eqref{eq:LocHelfSjoDef1}. We apply Corollary~\ref{cor:HSrandom} to the operators $f(H_G)$ and $f(H^\tau_G)$, choose some $0<s<1$ and use the norm bound 
	$\| R_z(H_{G}^{(\tau)})\|^{1-s/2}\leq 1/|\text{Im}z|^{1-s/2}$. 
	This gives the estimate
	\begin{align}
		\label{combes-thomas:eq10}
		\mathbb{E} &\left[ \big\| \chi_a\big( f(H_G) - f(H^\tau_G) \big) \chi_b \big\|  \right] \notag\\
		&	\leq \frac {2^{1-s/2}} {2\pi} \E \l[ \int_{\mathbb{R}^2} \frac{\d|\zeta_{f}|(x,y)}{|y|^{1-s/2}}  \, 
			\big\|\chi_a \big(R_{x+\i y}(H_G)-R_{x+\i y}(H^\tau_G)\big)\chi_b \big\|^{s/2} \r].
	\end{align}
	We introduce the subset of lattice points
	\begin{equation}
		\label{def:GammaW}
 		\Gamma_{W} := \big\{n\in\Z^{d}: \dist\big(n,\supp(W)\big)\leq 1/2)\big\}
	\end{equation}
	needed as centres to cover 
	$\supp(W)$ by closed unit cubes and deduce from the resolvent equation and the Cauchy-Schwarz inequality
	\begin{align}
		\label{combes-thomas:eq2}
		\E \bigg[ \int_{\mathbb{R}^2} \frac{\d|\zeta_f|(x,y)}{|y|^{1-s/2}}\, &\big\|\chi_a 
			\big(R_{x+\i y}(H_G)-R_{x+\i y}(H^\tau_G)\big)\chi_b \big\|^{s/2} \bigg] \notag \\
		& \leq \|\tau W\|_\infty^{s/2} \sum_{c\in {\Gamma_{W}}}  \int_{\mathbb{R}^2} \frac{\d|\zeta_f|(x,y)}{|y|^{1-s/2}} \; \mathbb{E} \big[ \|\chi_a R_{x+\i y}(H_G)\chi_c \|^{s}\big]^{1/2}
		  \notag \\
		& \hspace{3.5cm} \times \, \mathbb{E} \big[ \|\chi_c R_{x+\i y}(H^\tau_G) \chi_b\|^{s}\big]^{1/2}.
	\end{align}
	We infer from \eqref{eq:LocHelfSjoDef1}, \ref{K2} and \ref{K3} that 
	\begin{equation}
		\label{def:suppomega}
		\supp ( \zeta_f ) \subseteq \Big(\supp(f')\times [-1,1]\Big)\cup \Big([S_- ,S_+ ] \times \big([-1,-1/2] 
		\cup [1/2, 1]\big)\Big) 
		=: Z_1\cup Z_2.
	\end{equation}
	On the set $Z_2$ we estimate the right-hand side of \eqref{combes-thomas:eq2} by the Combes-Thomas estimate 
	stated in \cite[Cor.\ 1]{MR1937430} for deterministic Schr\"odinger operators. Even though this result is 
	formulated for Schr\"odinger operators on $L^2(\R^d)$, the argument extends to Schr\"odinger operators on 
	$L^2(G)$ for arbitrary $G\subseteq \R^d$ open -- see also  \cite{Shen2014}. Thus, there exist constants 
	$C_{0},\mu_{0}>0$, which are independent of $G\subseteq\R^d$, $\tau\in [0,1]$ and $\omega\in \Omega$, such that
	\begin{equation}
		\label{CombesThomas:cont}
		\|\chi_a R_z(H^{\tau}_G)\chi_b \| \leq C_{0} \e^{-\mu_{0} |z|^{-1}|a-b|}
	\end{equation}
	for all $z\in\C$ subject to $\dist(z,\sigma(H_G^{\tau})) \geq 1/2$. For $(x,y)\in Z_2$ we have 
	\beq\label{eq:distspec}
 		|x+ \i y| \leq  {|x| + |y|} \leq  \max\{|S_-|, |S_+|\} + 1 =: C_{S}
	\eeq
	and therefore we obtain finite constants $C_1>0$ and $\mu_{1} := 2s\mu_{0}/C_{S}>0$ such that
	\beq
		\label{Combeseq1}
		\sup_{(x,y)\in Z_2} \Big\{ \mathbb{E} \big[ \|\chi_a R_{x+ \i y}(H_G)\chi_c \|^{s}\big] \,
		\mathbb{E} \big[ \|\chi_c R_{x+\i y}(H^\tau_G) \chi_b\|^{s}\big] \Big\}
		\leq C_1 \e^{-\mu_{1}(|a-c| + |c-b| )}.
	\eeq

	On the set $Z_1$ we use the fractional-moment bounds \eqref{eq:DefFMB} for $H$ and $H^\tau$, which can be applied 
	because of $\supp( f') \subseteq \FMB$ and Remark~\ref{FMBalways2}. 
	Hence, there exist finite constants $C_{2}>0$ and $\mu_{2}>0$ such that
	\beq
		\label{Combeseq2}
		\sup_{(x,y)\in Z_1} \Big\{ \mathbb{E} \big[ \|\chi_a R_{x+\i y}(H_G)\chi_c \|^{s}\big] \,
		\mathbb{E} \big[ \|\chi_c R_{x+\i y}(H^\tau_G) \chi_b\|^{s}\big] \Big\}
		\leq C_{2} \e^{-\mu_{2}(|a-c| + |c-b| )}.
	\eeq 

	Collecting the estimates in \eqref{combes-thomas:eq10}, \eqref{combes-thomas:eq2}, \eqref{Combeseq1} and
	\eqref{Combeseq2}, we obtain finite constants $C_{3}>0$ and $\mu:=\min\{\mu_{1},\mu_{2}\}/2>0$, 
	which depend on $s$ and $S$ but are independent of $G$ and $\tau\in[0,1]$ such that 
	\begin{align}
		\label{combes-thomas:eq3}
		\mathbb{E} \left[ \big\| \chi_a\big( f(H_G) - f(H^\tau_G) \big) \chi_b \big\|  \right] 
		& \leq C_{3} \|\tau W \|_\infty^{s/2} \sum_{c\in {\Gamma_{W}}} \e^{-\mu(|a-c| + |c-b|)}  
			\int_{\R^2} \frac{\d|\zeta_f|(x,y)}{|y|^{1-s/2}}  \notag\\
		& \leq  C |\tau|^{s/2} \big(\|f\|_1+\NTV(f)\big) \e^{-\mu\left(|a|+|b|\right)}.
	\end{align} 
	To obtain the last inequality, we applied \eqref{eq:zeta-int}  and the estimate 
	\begin{align}
		\label{eq:expproduct}
		\sum_{c\in {\Gamma_{W}}} \e^{-\mu(|a-c| + |c-b|)} \le \e^{-\mu(|a| + |b|)} \sum_{c\in {\Gamma_{W}}} 
		\e^{2\mu|c|}. 
		\\[-2ex]	\tag*{\qedhere}
	\end{align}
\end{proof}

\begin{proof}[proof of Theorem \ref{th:TrClBounds1}]
	We set $I_{-}:= \min I$, $I_{+} := \max I$ and assume without restriction that $I_{+} \ge E_{0}$, 
	the lower bound \eqref{eq:DefE0} for the spectrum of $H_{G}^{\tau}$ (otherwise the statement is trivial). 
	We fix $G\subseteq\Rn$ open and $a,b\in \R^d$. 
	Since $\|A\|_{p} \le \|A\|_{1}$ for any Schatten norm with $p \ge 1$, we restrict ourselves to $0<p \le 1$. 
	
 	Let $f\in \mathcal F_I$. As we want to apply Theorem \ref{th:TrClBounds1:OpNorm}, 
	which requires functions of compact support, we introduce $h:= f \id_{[E_{0}-1, \infty)}$. 
	Then we have  $h\in \text{BV}_c(\R)$, $\NTV(h)\leq 2 \NTV(f)$, 
 	$f(H^{\tau}_G) =  h(H^{\tau}_G)$
	for every $\tau\in[0,1]$ and $\supp(h') \subseteq I\cup \{E_0-1\} \subset\FMB$. The last inclusion holds 
	by the definition of $E_{0}$ and the Combes-Thomas estimate \cite[Cor.\ 1]{MR1937430}, 
	which extends to finite-volume operators.
	
	We obtain for any $0<r<p$
	\begin{multline}
		\label{trclbound1:eq1}
		\big\| \chi_a\left( f(H_G) - f(H^\tau_G) \right) \chi_b \big\|_p 
		= \big\| \chi_a\left( h(H_G) - h(H^\tau_G) \right) \chi_b \big\|_p\\
		\leq  \big\|\chi_a\left( h(H_G) - h(H^\tau_G) \right) \chi_b \big\|^{r/p} 
		\big\|\chi_a\left( h(H_G) - h(H^\tau_G) \right) \chi_b \big\|_{p-r}^{1- r/p}.
	\end{multline}
	The adapted triangle inequality \eqref{lem:conkav} and the deterministic a priori estimate from Lemma \ref{lem:aPriori}
	imply
	\begin{align}
		\label{trclbound1:eq2}
		\big\|\chi_a\left( h(H_G) - h(H^\tau_G) \right) \chi_b \big\|_{p'}^{p'} \leq 
		\|\chi_a h(H_G) \chi_b\|_{p'}^{p'} + \|\chi_a h(H^\tau_G)  \chi_b \|_{p'}^{p'} 
		\leq C_{p'}  \|h\|^{p'}_\infty \notag\\
	\end{align}
	for every $0< p' \le 1$, where $C_{p'}$ depends also on $I_{+}$, but is uniform in the disorder, 
	$G$ and $\tau \in [0,1]$, and independent of $h$. 
	We apply \eqref{trclbound1:eq2} with $p'=p-r$ to estimate the expectation of \eqref{trclbound1:eq1} by 
	\begin{align}
		\label{TrCl-done}
		\mathbb{E} \Big[ \big\| \chi_a\left( f(H_G) - f(H^\tau_G) \right) \chi_b \big\|_p \Big]
 		\le C_{p-r}^{1/p} \, \| h \|_{\infty}^{1-s_1}  \,
		\mathbb{E} \Big[ \big\| \chi_a \big( h(H_G) - h(H^\tau_G) \big) \chi_b \big\| \Big]^{s_1}. \notag\\
	\end{align}
	Here, we introduced $s_1:=r/p < 1$ and also used Jensen's inequality.

	Now, we choose $0<s_2<1$ and apply Theorem \ref{th:TrClBounds1:OpNorm} with $S= I \cup \{E_{0} -1\}$ 
	to the expectation on the right-hand side of \eqref{TrCl-done}. This yields finite constants $C_{1}, \mu_{1} >0$ 
	(which depend on $s_{2}$ and $I$) such that 
	\begin{equation}
		\label{pf:Thm3.1equality3}
		\mathbb{E} \Big[ \big\| \chi_a \big( h(H_G) - h(H^\tau_G) \big) \chi_b \big\| \Big] \leq C_{1}| \tau|^{s_2/2}\,
		(I_+- E_0 + 2) \NTV(f)\, \e^{-{\mu_{1}}\left(|a|+|b|\right)},
	\end{equation}
	where we used 
	\begin{equation}
 	\label{eq:1normTV}
		\|h\|_1+\NTV(h) \leq  2(I_+ - E_0+ 2) \NTV(f).
	\end{equation}
	Inserting \eqref{pf:Thm3.1equality3} into \eqref{TrCl-done} and observing 
	\begin{equation}
		\label{eq:inftynormTV}
 		\|h\|_{\infty} \leq \NTV(h) \leq 2 \NTV(f),
	\end{equation}
	we obtain 
	\begin{align}
		\label{pf:Thm3.1equality4}
		\E \l[ \big\| \chi_a\left( f(H_G) - f(H^\tau_G) \right) \chi_b \big\|_p\r]
		\leq C |\tau|^{s_1s_2/2} \NTV(f) \e^{-s_1\mu_{1}\left(|a|+|b|\right)}
	\end{align}
	with a suitable finite constant $C>0$. Since $s_{1},s_{2} \in (0,1)$ are both arbitrary, 
	the claim follows with $s:= s_{1}s_{2}$.
\end{proof}

\begin{proof}[Proof of Corollary \ref{cor:TrClBounds2}]
	The (quasi-) norm estimates
	$\mathbb{E}\left[ |X|^{q_1} \right]^{1/q_1} \leq \mathbb{E}\left[ |X|^{q_2} \right]^{1/q_2} 
		\quad \text{ and } \quad \|A\|_{p_1}\leq \|A\|_{p_2}$
	for random variables $X$ and Schatten-class operators $A$ hold for all $0< p_2\leq p_1 < \infty$ and all 
	$0< q_1\leq q_2 < \infty$. 
	Thus, we assume without loss of generality that $p\leq 1$, $q\geq 1$ such that $k:= q/p\in\N$.

	For $f\in\mathcal F_I$ and $G\subseteq \R^d$ open we abbreviate
	$T_f:= f(H_{G})- f(H^\tau_{G})$.
	Because of the adapted triangle inequality \eqref{lem:conkav} for $p\le 1$ we estimate
	$\|T_f\|_p^p \leq \sum_{a,b\in\Z^d}\|\chi_a T_f\chi_b \|_p^p$.
	A $k$-fold application of this inequality yields
	\begin{equation}
		\label{pf:CorTrBndNew2}
		\mathbb{E}\left[ \|T_f\|_p^{q} \right] = \mathbb{E}\big[ \|T_f\|_p^{pk} \big] 
		\leq \sum_{\substack{a_1,...,a_k\in\Z^d\\b_1,...,b_k\in\Z^d}} \mathbb{E}\Big[ 
		\prod_{l=1}^k \|\chi_{a_l}T_f\chi_{b_l}\|_p^p \Big].
	\end{equation}
	Next, we apply H\"older's inequality to the expectation in \eqref{pf:CorTrBndNew2} and obtain
	\begin{equation}
		\label{pf:CorTrBndNew3}
		\mathbb{E}\left[ \|T_f\|_p^{q} \right] \leq 
		\sum_{\substack{a_1,...,a_k\in\Z^d\\b_1,...,b_k\in\Z^d}}
		\prod_{l=1}^k  \mathbb{E}\big[\|\chi_{a_l}T_f\chi_{b_l}\|_p^{pk} \big]^{1/k}.
	\end{equation}
	Now, we choose $0<s<1$. Theorem \ref{th:TrClBounds1} implies the existence of finite constants $C,\mu>0$ 
	(depending only on $p, I, s$) such that 
	$\mathbb{E}\big[\|\chi_{a_l}T_f\chi_{b_l}\|_p \big] \leq C \tau^{s/2} \NTV(f) \e^{-\mu(|a_l|+|b_l|)}$
	for all $a_{l},b_{l}\in \Z^{d}$.
	The deterministic a priori estimate from Lemma \ref{lem:aPriori} provides the existence of a finite 
	constant $C_{1}$ (depending only on $p, I, s$, but not on $\omega$) such that 
	$\|\chi_{a_l}T_f\chi_{b_l}\|_p^{pk-1}  \leq C_{1} \NTV(f)^{pk-1}$.
	We thus obtain
	\begin{equation}
		\label{pf:CorTrBndNew4}
		\mathbb{E}\big[\|\chi_{a_l}T_f\chi_{b_l}\|_p^{pk} \big] 
		\leq C C_{1} \tau^{s/2} \NTV(f)^{pk}  \e^{- \mu (|a_l|+|b_l|)}
	\end{equation}
	for all $a_{l},b_{l}\in \Z^{d}$.
	Estimating the right-hand side of \eqref{pf:CorTrBndNew3} by \eqref{pf:CorTrBndNew4} and using $q=pk$, we arrive at
	$\mathbb{E}\left[ \|T_f\|_p^{q} \right] \leq C_{2} \tau^{s/2} \NTV(f)^q$,
	where the finite constant $C_{2}$ depends on $p, I, s$.
	In particular, the constant $C_{\tau}$ in the statement vanishes algebraically as $\tau\downarrow 0$.
\end{proof}

%
\subsection{Proof of Theorem \ref{th:TrClBounds3} and Theorem \ref{Thm:prod}}

\begin{proof}[Proof of Theorem \ref{th:TrClBounds3}]
	We will follow the strategy in the proofs of Theorems \ref{th:TrClBounds1:OpNorm} and \ref{th:TrClBounds1} 
	and use the notation introduced there. As it is done there, we assume without restriction that 
	$I_{+} \ge E_{0}$ and that $0<p\leq 1$. Moreover, we restrict ourselves to the case $b\in\R^d$ 
	with $Q_b\cap G\neq\emptyset$. 

	Let $a \in\R^{d}$, $\tau\in [0,1]$, $f\in \mathcal F_I$  and define again the 
	truncation $h:= f \id_{[E_{0}-1, \infty)}$. 
	We write $\zeta_h$ for the complex measure defined as in \eqref{eq:LocHelfSjoDef1} with a cutoff 
	function $\Xi=\Xi_h$ that satisfies \ref{K1} -- \ref{K3}, where $f$ is replaced by $h$.
	Proceeding along the lines of \eqref{TrCl-done} and 
	\eqref{combes-thomas:eq10}, we obtain for any $s,s' \in (0,1)$
	\begin{align}
		\label{th:trbounds3:eq1}
		\mathbb{E} \Big[\big\| \chi_a & \big(f(H^\tau_G) - f(H^\tau_{\wtilde{G}})\big)\chi_b \big\|_p  \Big]
		 \leq C_{1} \, \|h\|_{\infty}^{1-s'}\, \mathbb{E} \left[\big\| \chi_a\big(h(H^\tau_G) - 
			h(H^\tau_{\wtilde{G}})\big)\chi_b \big\|  \right]^{s'}\notag\\
		& \leq C_{2}\, \, \|h\|_{\infty}^{1-s'} \, \E \l[  \int_{\mathbb{R}^2} \frac{\d|\zeta_h|(x,y)}{|y|^{1-s/2}}\,
			\big\|\chi_a \big(R_{x+\i y}(H^\tau_G)-R_{x+\i y}(H^\tau_{\wtilde{G}})\big)\chi_b \big\|^{s/2} \r]^{s'} \notag\\
	\end{align}
	with finite constants  $C_{1}=C_{1,p,s',I_{+}}>0$ and $C_{2}=C_{2,p,s,s',I_{+}}>0$. 
	
	\emph{Case 1.} We assume $\dist(a,\partial G)>1$ and 
	$\dist(b,\partial G)>1$. Hence, we have $Q_{b} \subset G$ in this case.
	We apply the geometric 
	resolvent inequality, see e.g.\ \cite[Lemma 2.5.2]{MR1935594}, to the operator norm in the last line
	of \eqref{th:trbounds3:eq1}. Even though it is only stated for boxes there, the key estimate, 
	\cite[Lemma 2.5.3]{MR1935594}, covers our setup.  
	Hereby we use the assumption $\dist(\partial G, \partial \wtilde{G}) >1$.
	We obtain in analogy to \cite[Eq.\ (4.9)]{DGHKM2016} 
	\begin{equation}
		\big\|\chi_a \big(R_{z}(H^\tau_{G})-R_{z}(H^\tau_{\wtilde{G}})\big)\chi_b \big\| 
		\leq C_{3} \sum_{c\in \delta G^{\#}} \|\chi_a R_{z}(H^\tau_{\wtilde{G}}) \chi_c \| 
		\|\id_{\Lambda_2(c)} R_{z}(H^\tau_{G}) \chi_b\|,
	\end{equation}
	where $\Lambda_2(c):= c+\Lambda_2$ and
	\begin{equation}
		\label{eq:Gplusraute}
 		\delta G^{\#}:=\big\{n\in\Z^d:\, \dist(n,{\partial G})\leq 1\big\}
	\end{equation}
	denotes the set of lattice points needed as centres to cover a strip of width $1$ around 
	the boundary $\partial G$ of $G \subset \R^{d}$ by unit cubes. 
	The constant $C_3$ is uniform in $z\in \C$ on each compact subset of $\C$. It is also uniform 
	in $a\in\R^d$  and $b\in G$ with $\dist(b,\partial G)> 1$. 
	Thus, the expextation in the last line of \eqref{th:trbounds3:eq1} is seen to be bounded from above by
	\begin{align}
		\label{th:trbounds3:eq2}
		  C_{4} \sum_{c\in \delta G^{\#}} \int_{\mathbb{R}^2} \frac{\d|\zeta_h|(x,y)}{|y|^{1-s/2}} \,
			\E \Big[ \big\|\chi_a R_{x+\i y}(H^\tau_{\wtilde{G}})\chi_c\big\|^s\Big]^{1/2} 
		\E\Big[ \big\| \id_{\Lambda_2(c)} R_{x+\i y}(H^\tau_{G})\chi_b \big\|^s\Big]^{1/2}
	\end{align}
	with a constant $C_{4}= C_{4,s,I_+}$. 
	We decompose the support of $\zeta_{h}$ as in \eqref{def:suppomega} 
	and treat the product of the expectations on $Z_{1}$ with the fractional-moment estimate as in \eqref{Combeseq2}
	and on $Z_{2}$ with the Combes-Thomas estimate as in and \eqref{Combeseq1} (with $H_{G}$ replaced by
	$H^\tau_{\wtilde{G}}$ in both estimates).
	The remaining integral is estimated as in \eqref{eq:zeta-int}. We then arrive at
	\begin{align}
		\label{th:trbounds3:eq3.1}
	C_{5} \sum_{c\in \delta G^{\#}} \e^{-\mu(|a-c| + |c-b| )} \big(\|h\|_{1} + \NTV(h) \big) 
		\le C_{6} \NTV(f) \e^{-(\mu/2) [\dist(a, \partial {G}) + \dist(b, \partial {G})]} \notag\\[-1ex]
	\end{align}
	as an upper bound for \eqref{th:trbounds3:eq2} with finite constants $C_{5}, C_{6}, \mu >0$, all depending only on $s$ and $I$. 
	In the last step we used \eqref{eq:1normTV}, the estimate 
	\begin{equation}
		\label{eq:partialGsum}
 		\sum_{c\in \delta G^{\#}} \e^{-\mu(|a-c| + |c-b| )} \le 
		\e^{-(\mu/2) [\dist(a, \partial {G}) + \dist(b, \partial {G})-2]} 
		\sum_{c\in \Z^{d}} \e^{-(\mu/2) (|a-c| + |c-b| )}
	\end{equation}
	and that the sum on the right-hand side of \eqref{eq:partialGsum} is 
	bounded from above uniformly in $a, b \in\R^{d}$ by, e.g., the Cauchy-Schwarz inequality.
	Now, the claim follows upon inserting \eqref{th:trbounds3:eq3.1} as an upper bound for the expectation into  the last line of \eqref{th:trbounds3:eq1} and observing
	\eqref{eq:inftynormTV}. This finishes Case 1.

	\emph{Case 2.} We assume $\dist(a,\partial G) \leq  1$ or $\dist(b,\partial G) \leq  1$. 
	Hence, we have 
	\begin{equation}
		\label{eq:ab-dist}
 		|a-b| \ge \max\big\{\dist(a,\partial G), \dist(b,\partial G)\big\} -1
	\end{equation}
	in this case.	We estimate the operator norm on the right-hand side of \eqref{th:trbounds3:eq1} 
	by the triangle inequality and each of the resulting two terms by the fractional-moment estimate
	\eqref{eq:DefFMB}, see also Remark \ref{FMBalways2}. 
	The remaining integral is again estimated by \eqref{eq:zeta-int} and \eqref{eq:1normTV}, and we obtain
	the upper bound 
$C_{7}  \e^{-\mu|a-b|} \NTV(f)$
	for the expectation in the last line of \eqref{th:trbounds3:eq1} with finite constants $C_{7}, \mu >0$ depending only on $s$ and $I$.
	Now, the claim follows upon observing
	\eqref{eq:inftynormTV} and \eqref{eq:ab-dist}.
\end{proof}

\begin{proof}[Proof of Theorem \ref{Thm:prod}]
	As in the proof of Corollary~\ref{cor:TrClBounds2} we assume without loss of generality that 
	$p\le 1$, $q \ge 1$ and we define and $k:= q/p \in\N$. Let $f\in\mathcal F_I$. 
	The argument leading to \eqref{pf:CorTrBndNew3} in the proof of Corollary~\ref{cor:TrClBounds2} 
	implies in the present context the estimate
	\begin{equation}
		\mathbb{E}\left[ \|T_{f,L}\|_p^{q} \right] \leq 
		\sum_{\substack{a_1,...,a_k\in\Z^d\\b_1,...,b_k\in\Z^d}}
		\prod_{l=1}^k  \mathbb{E}\big[\|\chi_{a_l}T_{f,L}\chi_{b_l}\|_p^{pk} \big]^{1/k},
	\end{equation}
	where $T_{f,L} := \big(f(H_{L})- f(H^\tau_{L})\big) - \big(f(H)- f(H^\tau)\big)$.
	Below, we will decompose $T_{f,L} = T_{f,L}^{(1)} - T_{f,L}^{(2)}$ in two different ways. 
	The first way is with the terms
	\begin{equation}
 		\label{eq:decomp1}
		T_{f,L}^{(1)} := f(H_{L}) - f(H) \quad \text{and} \quad T_{f,L}^{(2)} := f(H_{L}^{\tau}) - f(H^{\tau}),
	\end{equation}
	and the second is with the terms
	\begin{equation}
 		\label{eq:decomp2}
		T_{f,L}^{(1)} := f(H_{L}) - f(H_{L}^{\tau}) \quad \text{and} \quad T_{f,L}^{(2)} := f(H) - f(H^{\tau}).
	\end{equation}
	In any case, the adapted triangle inequality \eqref{lem:conkav} and Minkowski's inequality on 
	$L^{k}(\Omega,\PP)$ imply 
 	$\mathbb{E}\big[\|\chi_{a_l}T_{f,L}\chi_{b_l}\|_p^{pk} \big]^{1/k} \le \sum_{j=1}^{2} 
		\mathbb{E}\big[\|\chi_{a_l}T_{f,L}^{(j)}\chi_{b_l}\|_p^{pk} \big]^{1/k}$.
	Moreover, for either choice of the decomposition, the deterministic a priori bound from 
	Lemma~\ref{lem:aPriori} implies $\|\chi_{a_l}T_{f,L}^{(j)}\chi_{b_l}\|_{p}^{pk-1} \le C_{1} \NTV(f)^{pk-1}$ 
	with a finite constant $C_{1} = C_{1,p,I}>0$ that does not depend on $\omega$ but only on the 
	lower bound $E_{0}$  from \eqref{eq:DefE0} for the spectra. We conclude that 
	\begin{equation}
		\label{eq:TeTraceConvergence1}
 		\mathbb{E}\left[ \|T_{f,L}\|_p^{q} \right] 
		\leq C_{1} \NTV(f)^{pk-1}
		\sum_{\substack{a_1,...,a_k\in\Z^d\\b_1,...,b_k\in\Z^d}}
		\prod_{l=1}^k  \Big(\sum_{j=1}^{2}\mathbb{E}\big[\|\chi_{a_l}T_{f,L}^{(j)}\chi_{b_l}\|_p \big]^{1/k}\Big).
	\end{equation}
	Next we split the summation over each pair $(a_{l},b_{l}) \in\Z^{d} \times \Z^{d}$ into two parts: 
	the box $\Lambda^2_{L/2}:=(\Lambda_{L/2}\times\Lambda_{L/2})\cap (\Z^{{d}} \times \Z^{d}) $ and its complement
	$\Lambda^{2,c}_{L/2}:= (\ Z^{{d}} \times \Z^{d})\setminus \Lambda^{2}_{L/2}$.
	For $(a_{l},b_{l}) \in\Lambda^2_{L/2}$ we compare infinite- and finite-volume operators, that is, 
	we choose the decomposition \eqref{eq:decomp1}. Theorem \ref{th:TrClBounds3} then provides 
	finite constants $C_2 = C_{2,p,I}>0$ and $\mu_2=\mu_{2,p,I}>0$ such that
	\begin{align}
		\label{eq:TeTraceConvergence2}
  	\sum_{(a,b)\in \Lambda^2_{L/2}} \sum_{j=1}^{2} &
  		\mathbb{E}\big[\|\chi_{a_l}T_{f,L}^{(j)}\chi_{b_l}\|_p \big]^{1/k}	\notag\\
 		& \leq C_{2} \NTV(f)^{1/k} \sum_{(a,b)\in \Lambda^2_{L/2}}  
			\e^{-\mu_2[(\dist(a,\partial\Lambda_{L})+\dist(b,\partial\Lambda_{L})]/k} \notag\\
		& \leq C_{2}\NTV(f)^{1/k}\, (L+1)^{2d}  \e^{-\mu_{2} L/(2k)}.
	\end{align}

	For $(a_{l},b_{l}) \in\Lambda^{2,c}_{L/2}$ we compare unperturbed and perturbed operators, that is, 
	we choose the decomposition \eqref{eq:decomp2}. We fix $0<s<1$. Theorem~\ref{th:TrClBounds1} then 
	provides finite constants $C_3 = C_{3,p,s,I}>0$ and $\mu_3 = \mu_{3,p,s,I}>0$ such that 
	\begin{align}
		\label{eq:TeTraceConvergence3}
  	\sum_{(a,b)\in \Lambda^{2,c}_{L/2}}\sum_{j=1}^{2} 
  	\mathbb{E}\big[\|\chi_{a_l}T_{f,L}^{(j)}\chi_{b_l}\|_p \big]^{1/k}
		&	\leq 2C_{3} \NTV(f)^{1/k}\, \sum_{\substack{a\in\mathbb{Z}^d\\ b\in \Z^{d} \setminus\Lambda_{L/2}}} 
			\e^{-\mu_3(|a|+|b|)/k}\notag\\
		& \leq C_{3}' \NTV(f)^{1/k} (L+1)^{d-1} \e^{-\mu_3 L/(4k)},
	\end{align}
	where $C_{3}' = C'_{3,p,s,I}>0$ is another finite constant. We conclude from \eqref{eq:TeTraceConvergence1}, 
	\eqref{eq:TeTraceConvergence2} and \eqref{eq:TeTraceConvergence3} that
	\begin{equation}
		\label{eq:TeTraceConvergence4}
 		\mathbb{E}\left[ \|T_{f,L}\|_p^{q} \right] \leq  C \NTV(f)^{pk} \e^{-\mu L}
	\end{equation}
	with finite constants $C =C_{p,q,s,I}>0$ and  $\mu = \mu_{p,q,s,I}>0$. This proves \eqref{thm:prod:eq1}.

	The almost-sure convergence \eqref{eq:ConvergencePi} for a super-logarithmically growing 
	sequence of lengths follows from \eqref{thm:prod:eq1} with Remark~\ref{fast-convergence}.
\end{proof}


\section{Proofs of the results from Section \ref{subsec:ssf}}
\label{sec:5}

\subsection{Proof of Theorem \ref{th:SSFandTransOp}}

\begin{proof}[Proof of Theorem \ref{th:SSFandTransOp}]
	We fix $\tau \in [0,1]$.
  Let $I\subset \FMB$ be a compact interval and $E\in I$. Theorem \ref{th:SSFandTransOpii} follows from 
  Theorem \ref{th:SSFandTransOpi}  and  Theorem \ref{Thm:prod} applied to the Fermi function $f=\id_{(-\infty,E]}$.
  
	It remains to prove the left equality in Theorem \ref{th:SSFandTransOpi}, because the right equality follows 
	already from Remark \ref{rem:right-equality-clear}.
	Let $I \subset \FMB$ be again a compact interval. Then, Corollary~\ref{cor:TrClBounds2} and Fubini's Theorem imply
	$\mathbb{E} \Big[\int_{I}\d E\, \big\|T(E,H, H^\tau)\big\|_{1} \Big] <\infty$.
	Hence, we have $\|T(\,\pmb\cdot\,, H, H^\tau)\|_{1} \in L^{1}(I)$ almost surely. Thus, the left inequality 
	in Theorem \ref{th:SSFandTransOpi} follows from Lemma~\ref{lem:SSFprep2ndproof} below, and the proof is 
	complete.
\end{proof}

It remains to prove the following deterministic lemma.

\begin{lemma}
 	\label{lem:SSFprep2ndproof}
 	Let $A$ and $B$ be two self-adjoint operators in a Hilbert space $\mathcal H$ which are bounded from below. 
	We assume that 
 	$\e^{-A}- \e^{-B}\in \mathcal{S}^{1}$ and that, for some open interval $I\subset \R$, the mapping 
 	$I \ni E\mapsto \|T(E,A,B)\|_{1}$
 	is an $L^{1}(I)$-function. Then the SSF and the trace of the shift operator coincide, i.e.\
 	\begin{equation}
	 	\label{eq:SSFprep2ndproofstat}
 		\xi(E,A,B)=\Tr\big( T(E,A,B) \big)\qquad \text{ for a.e. } E\in I.
 	\end{equation}
\end{lemma}


\begin{proof}[Proof of Lemma~\ref{lem:SSFprep2ndproof}]
The assumption $\e^{-A}- \e^{-B}\in \mathcal{S}^{1}$ in the lemma is only needed for the spectral 
	shift function to be well defined according to \eqref{eq:Definition1SSF}.
	
	We show that the function $E \mapsto \Tr\big(T(E,A,B)\big)$  satisfies \eqref{eq:Definition1SSF} for
  every $f\in C^{\infty}(\mathbb{R})$ with $\supp(f')\subseteq I$ and $\lim_{\lambda\to\infty}f(\lambda)=0$. 
  
  Let $f\in C^\infty(\R)$ be such a function. By assumption we have 
	$\R \ni E\mapsto |f'(E) |\|T(E,A,B) \|_1 \in L^1(\R)$.
	Hence, $\int_\R \d E\, f'(E) \, T(E,A,B)$ is a trace-norm convergent Bochner integral. 
	Moreover, the identity
	\beq
		\label{eq:bochner}
		f(B)- f(A)=\int_\R \d E\, f'(E) \,T(E,A,B)
	\eeq
	holds, as we argue below. Since the mapping $\S^1\ni B\mapsto \Tr (B)$ is a bounded linear 
	functional on $\S^1$, it interchanges with the Bochner integral, and \eqref{eq:bochner} implies 
	the assertion
	\beq
		\Tr \big( f(B)- f(A)\big) = \int_\R \d E\,f'(E) \Tr\big(  T(E,A,B) \big).
	\eeq
	It remains to prove \eqref{eq:bochner}. We fix $\phi \in \H$ and compute, using Fubini's theorem,
	\begin{equation}
 		\<\phi , f(A) \phi\> = - \int_{\mathbb{R}}\d\mu_{\phi,\phi}^{A}(\lambda)\int_{\lambda}^{\infty}\d E\, 
			f^{\prime}(E) = -\int_{\mathbb{R}}\d E\, f^{\prime}(E) \< \phi,\id_{(-\infty,E]}(A)\phi \>,
	\end{equation}
	where we introduced the spectral measure $\mu^{A}_{\phi,\phi}:=\< \phi, \id_{\,\pmb\cdot\,}(A)\phi\>$. 
	The analogous computation for $B$ gives
	\begin{align}
		\<\phi ,\big( f(B) - f(A) \big) \phi\>  
		= \< \phi, \bigg(\int_\R \d E\, f'(E) \, T(E,A,B) \bigg) \phi \>,
	\end{align}
	where we used again the continuity of the functional $\S^1\ni B\mapsto \langle\varphi, 
	B \varphi\rangle$, which therefore interchanges with the Bochner integral. The operator equality 		\eqref{eq:bochner} follows by polarisation. 
\end{proof}

\subsection{Proof of Lemma \ref{lem:HoelderCont} and Theorem \ref{lem:SSFcont} }

\begin{proof}[Proof of Lemma \ref{lem:HoelderCont}]
	We fix $E,E'\in I$ arbitrary with $E<E'$ and define $J:=[E,E']$. 
	The estimate \eqref{TrCl-done} with $f=h=\id_{J}$ implies
	\begin{align}
		\label{eq:PfLemSSFcont1}
	  \E \big[ \big\|  T(E,H_{(L)},H_{(L)}^{\tau}) 
	  & - T(E',H_{(L)},H_{(L)}^{\tau}) \big\|_{1}\big]\nonumber\\
		& \leq C_{\theta} \sum_{a,b\in\Z^{d}} \E \big[ \big\| \chi_a \big( 
			\id_J(H_{(L)}) - \id_J(H^{\tau}_{(L)}) \big)\chi_b \big\| \big]^{\theta},
	\end{align}
	where $\theta \in (0,1)$
	and the constant $C_{\theta}$ is independent of $L$ and $\tau \in [0,1]$.
	We will apply the Helffer--Sj\"ostrand formula to the operator $\id_{J}(H_{(L)}^{(\tau)})$ 
	and need an appropriate cutoff 
	function for this purpose. Let $g\in C_c^\infty(\R)$ be a smooth indicator function such that 
	$g(x):=1$ for $x\in [-1,1]$, $g(x):=0$ for $x\in \R \backslash [-2,2]$, $\|g\|_\infty\leq 1$ and 
	$\|g'\|_\infty\leq 2$. We define the centre $J_{c}:= (E+E')/2$ of the interval $J$ and the
	cutoff function $\Xi_{J}\in C^\infty_c(\R^2)$ by 
	\beq
		\label{cutoff-specified}
		\Xi_{J}(x,y)= g\Big( \frac{x-J_{c}}{|J|}\Big)g\Big( \frac{y}{|J|}\Big), \qquad (x,y) \in \R^{2}.
	\eeq
	Let $\zeta_{\id_J}$ be the complex measure defined in \eqref{eq:LocHelfSjoDef1} with the cutoff 
	function \eqref{cutoff-specified}.
	We note that 
 	$\supp (\zeta_{\id_{J}}) \subseteq J \times  [-2|J|,2|J|]$.
	We now apply Lemma \ref{lem:LocHelfSjo} to  the difference in \eqref{eq:PfLemSSFcont1} and obtain 
	for all $a,b\in\Z^{d}$
	\begin{equation}
		\label{eq:PfLemSSFcont1b}		
		\E \left[ \big\|\chi_a \big(\id_J(H_{(L)})-\id_J(H^{\tau}_{(L)})\big)\chi_b \big\| \right]
	  \leq \frac 1 {2\pi} \int_{\mathbb{R}^2} \d|\zeta_{\id_{J}}|(x,y) \, 
	  \E\left[\|\chi_a (R_z-R^{\tau}_z)\chi_b \|\right],
	\end{equation}
	where we abbreviated $z: =x+\i y$, $R_z:=R_z(H_{(L)})$ and $R^{\tau}_z:=R_z(H^{\tau}_{(L)})$. 
	Let $p,q \in (1,\infty)$ with $1/p+1/q=1$ and let $\delta\in(0,1/p)$. 
	Then, we estimate with the help of the resolvent equation
	\begin{align}
		\label{eq:PfLemSSFcont2c}		
		\E\big[\|\chi_a (R_z &- R^{\tau}_z) \chi_b \|\big]  \notag\\
		& = \E\left[\|\chi_a (R_z-R^{\tau}_z)\chi_b \|\right]^{1/p} \,
			\E\left[\|\chi_a (R_z-R^{\tau}_z)\chi_b \|\right]^{1/q} \nonumber\\
 		& \leq \frac{2^{\delta/p}(\tau\|W\|_{\infty})^{1/q}}{|y|^{\delta+ 1/q}} \, 
			\left(\E\big[\|\chi_a R_z\chi_b \|^{1-\delta}\big]  +  
			\E\big[\|\chi_a R^{\tau}_z\chi_b \|^{1-\delta}\big]\right)^{1/p} \nonumber\\
		& \quad \times\sum_{c\in\Gamma_W} \left(\E\big[\|\chi_a R_z\chi_c\|^{1-\delta} \big] \;
			\E\big[\|\chi_c R^{\tau}_z \chi_b \|^{1-\delta}\big]\right)^{1/(2q)}, 
	\end{align}
  where $\Gamma_{W}$ was defined in \eqref{def:GammaW}. For $(x,y) \in \supp (\zeta_{\id_{J}})$ 
  we have $x \in J \subseteq I \subset \FMB$ so that each expectation can be bounded with the 
  localisation estimate from 
  Lemma~\ref{Lem:PertPersPot}, which is uniform in $\tau\in [0,1]$, $L>0$ and $J\subseteq I$. 
	Treating the resulting $c$-sum of exponentials as in \eqref{eq:expproduct} and throwing away the 
	decay in $|a-b|$, we arrive at
	\beq
		\label{eq:PfLemSSFcont3a}
		\E\left[\|\chi_a (R_z-R^{\tau}_z)\chi_b \|\right] \leq C_{1} \frac{\e^{-\mu_{1}(|a|+|b|)}}{|y|^{\delta+ 1/q}}
	\eeq
	with finite constants $C_{1},\mu_{1}>0$ that depend only on $p$, $\delta$ and $I$. 
	This estimate needs to be inserted into \eqref{eq:PfLemSSFcont1b}, 
	so the following integral is relevant
	\begin{equation}
		\label{eq:PfLemSSFcont3b}
		\int_{\R^2} \frac{\d|\zeta_{\id_{J}}|(x,y)}{|y|^{\delta+ 1/q}} \leq C_{2} |J|^{p^{-1}-\delta}.
	\end{equation}
	To obtain the bound, we used $\delta <1/p$ and applied \eqref{eq:zeta-int} with the 
	particular choice \eqref{cutoff-specified} for the cutoff function $\Xi_{J}$. The constant 
	$C_{2}>0$ depends only on $p$, $\delta$ and $I$.
	Collecting the estimates \eqref{eq:PfLemSSFcont1}, \eqref{eq:PfLemSSFcont1b}, \eqref{eq:PfLemSSFcont3a} 
	and \eqref{eq:PfLemSSFcont3b}, we conclude that 
	\beq
		\label{eq:PfLemSSFcont4}
		\E \left[ \big\| T(E,H_{(L)},H_{(L)}^{\tau}) - T(E',H_{(L)},H_{(L)}^{\tau}) \big\|_{1}\right] 
		\le  C_{3} |J|^{\theta (p^{-1} -\delta)} 
	\eeq
	with a finite constant $C_{3} >0$ that depends on $\theta$, $p$, $\delta$ and $I$, 
	but not on $J \subseteq I$, $L>0$ or $\tau\in [0,1]$. The lemma follows, because the exponent 
	$\theta (p^{-1} -\delta)$ may take on any value in $(0,1)$.
\end{proof}

Our proof of Theorem \ref{lem:SSFcont} uses the scale-free unique continuation principle 
of  \cite{MR3106507, Nakic:2015is} applied to averaged 
local traces of non-negative functions of \emph{infinite-volume} ergodic random Schr\"odinger operators. 

\begin{lemma}
 	\label{lem:UCP}
	Let $E\in\R$ and let $\Gamma \subset\R^{d}$ be a Borel set with $\Int (\Gamma) \neq\emptyset$. 
	Then there exists a finite constant $\gamma>0$ such that for every non-negative, measurable function 
	$f: \R \rightarrow [0, \infty)$ with support $\supp(f) \subseteq (-\infty,E]$ the lower bound 
	\begin{equation}
		\label{eq:UCPconsequence}
 		\E\big[ \Tr \big(\id_{\Gamma} f(H)\big)\big] \ge \gamma \int_{\R}\d E'\, \mathcal{N}'(E') f(E')
	\end{equation}
	holds. If the operator 
	$\id_{\Gamma} f(H)$ is not trace class, we define its trace to be $+\infty$.
\end{lemma}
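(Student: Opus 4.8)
The plan is to derive the claimed infinite-volume bound from the quantitative scale-free unique continuation principle for spectral projections of \emph{finite-volume} random Schr\"odinger operators \cite{MR3106507, Nakic:2015is} via a thermodynamic-limit argument. Two reductions come first. Since $\sigma(H)\subseteq[E_{0},\infty)$ almost surely, with $E_{0}$ as in \eqref{eq:DefE0}, we have $f(H)=(f\,\id_{[E_{0},\infty)})(H)$, so one may assume $\supp(f)\subseteq[E_{0},E]$; and by the layer-cake representation $f=\int_{0}^{\infty}\id_{\{f>t\}}\,\d t$ together with Tonelli's theorem, it suffices to produce $\gamma>0$, independent of the set, with $\E[\Tr(\id_{\Gamma}\id_{B}(H))]\ge\gamma\,\cN(B)$ for every Borel set $B\subseteq[E_{0},E]$. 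Moreover $\Int(\Gamma)$ contains a closed ball $\overline{B_{\delta}(x_{0})}$ with $\delta<1/2$, and $\Tr(\id_{\Gamma}g(H))\ge\Tr(\id_{B_{\delta}(x_{0})}g(H))$ whenever $g\ge0$, so it is enough to bound $\nu(B):=\E[\Tr(\id_{B_{\delta}(x_{0})}\id_{B}(H))]$ from below. Since $\nu$ and $\cN$ are absolutely continuous Borel measures on $[E_{0},E]$ — both being consequences of the Wegner estimate \cite{MR2362242} — Lebesgue differentiation reduces matters to proving $\nu(J)\ge\gamma\,\cN(J)$ for all compact intervals $J\subseteq[E_{0},E]$ with $\gamma$ uniform in $J$.

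Fix such a $J$. Because $\supp(\rho)\subseteq[0,1]$ and $u,V_{0}$ are bounded with $u$ compactly supported, $V_{\omega}$ is bounded uniformly in $\omega$; hence \cite{MR3106507, Nakic:2015is} supplies a constant $\kappa>0$, depending only on $d$, $\delta$, $E$ and $\sup_{\omega}\|V_{\omega}\|_{\infty}$ — in particular neither on the scale $L$, nor on $J$, nor on $\omega$ — such that
\beq
	\kappa\,\id_{J}(H_{L})\le\id_{J}(H_{L})\Big(\sum_{j\in\Z^{d}:\,Q_{j}\subseteq\Lambda_{L}}\id_{B_{\delta}(j+x_{0})}\Big)\id_{J}(H_{L})
\eeq
as operators on $L^{2}(\Lambda_{L})$. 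Since $H_{L}$ has compact resolvent, all operators here are finite rank; taking the trace, then the expectation, then dividing by $|\Lambda_{L}|$ yields
\beq
	\frac{1}{|\Lambda_{L}|}\sum_{j\in\Z^{d}:\,Q_{j}\subseteq\Lambda_{L}}\E\big[\Tr\big(\id_{B_{\delta}(j+x_{0})}\id_{J}(H_{L})\big)\big]\ge\frac{\kappa}{|\Lambda_{L}|}\,\E\big[\Tr\big(\id_{J}(H_{L})\big)\big].
\eeq
By \eqref{eq:DOSFinVol} and standard properties of the integrated density of states, the right-hand side converges to $\kappa\,\cN(J)=\kappa\int_{J}\d E'\,\cN'(E')$ as $L\to\infty$.

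It remains to show that the left-hand side converges to $\nu(J)$, which is the heart of the argument. Discard the lattice points with $\dist(j,\partial\Lambda_{L})<\sqrt{L}$: there are only $O(L^{d-1/2})=o(|\Lambda_{L}|)$ of them, and each corresponding summand is bounded by a localised Wegner constant uniform in $j$ and $L$, so their total contribution vanishes after division by $|\Lambda_{L}|$. For a remaining (bulk) index $j$, $\Z^{d}$-covariance of $H$ gives $\E[\Tr(\id_{B_{\delta}(j+x_{0})}\id_{J}(H_{L}))]=\E[\Tr(\id_{B_{\delta}(x_{0})}\id_{J}(H_{\Lambda_{L}-j}))]$, where $\Lambda_{L}-j$ is a box containing $\Lambda_{2\sqrt{L}}$. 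By the Feynman--Kac representation the Dirichlet heat kernels increase under domain growth, so for every $t>0$ and every box $\Lambda'\supseteq\Lambda_{R}$ one has $\Tr(\id_{B_{\delta}(x_{0})}\e^{-tH_{\Lambda_{R}}})\le\Tr(\id_{B_{\delta}(x_{0})}\e^{-tH_{\Lambda'}})\le\Tr(\id_{B_{\delta}(x_{0})}\e^{-tH})$; taking expectations and applying dominated convergence with the deterministic on-diagonal heat-kernel bound shows that $\E[\Tr(\id_{B_{\delta}(x_{0})}\e^{-tH_{\Lambda'}})]\to\E[\Tr(\id_{B_{\delta}(x_{0})}\e^{-tH})]$ uniformly over boxes $\Lambda'\supseteq\Lambda_{R}$ as $R\to\infty$. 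A continuity theorem for Laplace transforms of measures, combined with the absolute continuity of $\nu$, upgrades this to $\E[\Tr(\id_{B_{\delta}(x_{0})}\id_{J}(H_{\Lambda'}))]\to\nu(J)$ uniformly over such $\Lambda'$ — exactly what is needed to conclude that $\frac{1}{|\Lambda_{L}|}\sum_{j}\E[\Tr(\id_{B_{\delta}(j+x_{0})}\id_{J}(H_{L}))]\to\nu(J)$. Together with the lower bound above this gives $\nu(J)\ge\kappa\,\cN(J)$, hence the lemma with $\gamma=\kappa$. The main obstacle is precisely this last step: the \emph{uniform-over-boxes} finite-to-infinite-volume comparison of \emph{localised} spectral-projection traces; the unique-continuation input, the trace and expectation manipulations, and the measure-theoretic reductions are comparatively routine.
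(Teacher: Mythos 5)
Your proposal is correct and follows essentially the same route as the paper: reduce to intervals via regularity/Wegner, invoke the finite-volume scale-free unique continuation principle of \cite{MR3106507, Nakic:2015is}, and pass to the infinite volume by a Laplace-transform argument using the Feynman--Kac representation (atomlessness of the limiting measure handles the interval endpoints). The only cosmetic differences are that you fix a ball $B_{\delta}(x_{0}) \subseteq \Int(\Gamma)$ and reduce to it, split the lattice sum into a bulk and a thin boundary layer, and close the finite-to-infinite comparison by domain monotonicity of Dirichlet heat kernels; the paper instead keeps $\Gamma \subseteq \Lambda_{1}$, rewrites the sum over translates of $\Gamma$ inside $\Lambda_{L}$ via $\Z^{d}$-covariance, and bounds the resulting error term by a Brownian-bridge estimate quoted from \cite[p.~341]{artRSO1989Kir} --- both ways of exploiting the same Feynman--Kac monotonicity.
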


\begin{proof}
	Even though a unique continuation principle for infinite-volume operators is known \cite{MR3519210}, 
	it seems more convenient here to use the one for finite-volume operators cited above. 
	By usual integration theory, the lemma follows if it is proven for indicator 
	functions $f$ of arbitrary Borel sets $B \subseteq (-\infty,E]$. 
	In turn, by the comparison theorem for measures, see e.g.\ \cite[Thm.\ II.5.8]{elstrodt2011measure}, 
	it is enough to prove it for semi-open intervals $I \subset (-\infty,E]$. 

	So let $I$ be such an interval.  
 	Since the probability for the endpoints of $I$ to coincide with an eigenvalue of $H$ is zero, we obtain 
	from the definition and self-averaging of the integrated density of states
	\begin{equation}
		\label{eq:IDOSlimit}
 		\int_{I} \d E' \mathcal{N}'(E') = \lim_{L\to\infty} \frac{1}{L^{d}} \, \E\big[ \Tr \big(\id_{I}(H_{L})\big)\big].
	\end{equation}
	The left-hand side of \eqref{eq:UCPconsequence} is monotone in $\Gamma$ and invariant under 
	$\Z^{d}$-translations of $\Gamma$.
	Therefore we assume without loss of generality that there exists an interior point of $\Gamma$ inside 
	the unit cube $\Lambda_{1}$ and that $\Gamma \subseteq \Lambda_{1}$.
	Then, the scale-free unique continuation principle
	for spectral projections, see \cite[Cor.]{Nakic:2015is} or \cite[Thm.\ 1.1]{MR3106507}, provides us with 
	the deterministic estimate 
	\begin{equation}
		\label{eq:UCP}
	 	\id_{I}(H_{L}) \le \frac{1}{\gamma}  \id_{I}(H_{L}) \, \id_{\Gamma_{L}} \id_{I}(H_{L})
	\end{equation}
	for $L\in\N$, where $\Gamma_{L}:= \Lambda_{L} \cap \big(\bigcup_{k \in\Z^{d}} (k + \Gamma)\big)$ 
	and the finite non-random constant $\gamma > 0$ depends only on $d$, 
	$\Gamma$, $E$ and 
 	$\mathcal{V} := \|V_{0}\|_{\infty} +  |\lambda| \, \|\sum_{k\in\Z^{d}} u_{k}\|_{\infty} < 
	\infty$,
	see Assumptions \ref{assK} and \ref{assV2}.
	Next, we insert \eqref{eq:UCP} into \eqref{eq:IDOSlimit}, exploit cyclicity of the trace and add and subtract 
	the desired $L$-independent term.  
	This yields
	\begin{align}
		\label{eq:UCPapplied}
 			\int_{I} \d E' \mathcal{N}'(E') & \le  \frac{1}{\gamma} \,  \E\big[ \Tr \big(\id_{\Gamma} \id_{I}(H)		\big)\big] \notag\\
		  & \quad + \frac{1}{\gamma} \,\liminf_{L\to\infty} 
				\left(  \frac{1}{L^{d}}  \E \big[\Tr\big( \id_{\Gamma_{L}} \id_{I}(H_{L}) \big) \big] - 
				\E \big[\Tr\big(\id_{\Gamma}\id_{I}(H) \big) \big] \right).
	\end{align}
	It remains to show that the error term in the second line of \eqref{eq:UCPapplied} 
	vanishes as $L\to\infty$. This is a question of convergence of measures. Since a given real number is not 
	an eigenvalue of $H$ with probability one, the limiting measure has no atoms, and the notion of 
	vague convergence is sufficient. Thus, we consider the Laplace transforms, and the lemma follows if 
	\begin{equation}
		\label{laplace-vanish}
 		\lim_{\substack{L\to\infty\\[.5ex] L \text{~odd}}} \frac{1}{L^{d}} \,\E \big[ \Tr \big( \id_{\Gamma_{L}} 
		(\e^{-t H_{L}} - \e^{-t H})\big) \big]   =0
	\end{equation}
	holds for every $t>0$ \cite[Thm.\ 2a in Sect.\ XIII.1]{MR0270403}. 
	Here, we reformulated the term with the $\Z^{d}$-ergodic infinite-volume Hamiltonian
	by using the the covariance relation for $H$ and the disjoint decomposition 
	$\Gamma_{L} =\bigcup_{k\in \Lambda_{L} \cap \Z^{d}} (k+ \Gamma)$, which holds for $L$ odd. 
	We analyse the semigroups in 
	the Feynman-Kac representation \cite{MR2105995}, see also 
	\cite[Sect.\ 6]{MR1756112},
	and obtain 
	\begin{align}
		\label{BB-rep}
		\bigg|\frac{1}{L^{d}} 
		\,\E \big[ \Tr \big( \id_{\Gamma_{L}} 
		(\e^{-t H_{L}} - \e^{-t H})\big) \big] \bigg| 
			\le \frac{\e^{t\mathcal{V}}}{(2\pi t)^{d/2}} \int_{\Lambda_{L}} \frac{\d x}{L^{d}} \; 
			\mathbf{E}_{0,x}^{t,x}( 1- \chi^{t}_{\Lambda_{L}} ),
	\end{align}
 	where 
	$\mathbf{E}_{0,x}^{t,x}$ denotes the Brownian-bridge expectation over continuous closed paths starting 
	at time $0$ at $x$ and returning to $x$ at time $t$, and $\chi^{t}_{\Lambda_{L}}$ is the indicator 
	function of the set of Brownian-bridge paths which stay inside $\Lambda_{L}$ for all times up to $t$. 
	But the integral on the right-hand side of \eqref{BB-rep} vanishes as $L \to\infty$ according to 
	\cite[p.\ 341]{artRSO1989Kir}.
\end{proof}

\noindent
We are now ready for the

\begin{proof}[Proof of Theorem \ref{lem:SSFcont}]
	Throughout the proof we abbreviate the SSF by 
	$\xi_{(L)} := \xi(\,\pmb\cdot\,,H_{(L)},H_{(L)}^{\tau})$ and choose representatives which coincide with 
	the trace of the corresponding spectral shift operator on $\FMB$. Since $\xi_{(L)}$ depends monotonously on the 
	perturbation and $W\geq C u_0$, we assume without loss of generality that $W=C u_0$. 

	By Lemma \ref{lem:HoelderCont}, the function $E \mapsto \E[\xi(E)]$ is H\"older continuous on compact intervals in 
	$\FMB$ for any $\alpha\in (0,1)$. Let $E\in \FMB$ and $\varepsilon_{0} >0$ such that 
	$I_{\varepsilon_{0}} := [E-\varepsilon_{0}, E+ \varepsilon_{0}] \subset \FMB$. Consider any 
	$\varepsilon \in (0,\varepsilon_{0}]$. Then
	\begin{equation}
		\label{eq:PfPosSSFcont1}
		\E \left[ \xi(E) \right] \geq \frac{1}{2\eps} \int_{I_{\varepsilon}}\d E'\, \E \left[ \xi(E') \right]  
		- C_{1} \eps^{\alpha},
	\end{equation}
	where the H\"older constant depends on $\alpha$, $E$ and $\varepsilon_{0}$, but not on $\varepsilon$. 
	We denote by $\E_{\neq 0}[\,\pmb\cdot\,]$ the averaging with respect to all random variables but $\omega_0$ 
	and infer from the Birman-Solomyak formula \cite{MR0315482} 
	\begin{align}
		\label{eq:PfPosSSFcont2}
 		\int_{I_{\varepsilon}} \d E'\,\E \left[ \xi(E') \right]  &= \int_{0}^1 \d s\, 
		\E \left[ \Tr\left( Cu_0 \,\id_{I_\eps}( H+s C u_0 )\right) \right]\nonumber  \\
		& = \int_{0}^{C} \d s \int_{0}^1 \d\omega_0\, \rho(\omega_0) \, \E_{\neq 0} 
			\big[ \Tr\big( u_0 \id_{I_\eps}(H+su_0) \big) \big].
	\end{align}
	We fix a parameter $s_{0} \in (0,\min\{ 1,C\}]$ to be determined later. 
	Performing the change of variables $ \omega_{0} \mapsto \omega_0 +s$ 
	and restricting first the $s$-integration to $[0,s_0]$ and then the $\omega_{0}$-integration to $[s_0,1]$, 
	we obtain the estimate
	\begin{align}
		\label{eq:PfPosSSFcont2b}		
		\int_{I_{\varepsilon}} \d E'\,\E \left[ \xi(E') \right] & \geq 
			s_0 \rho_- \int_{s_0}^{1} \d\omega_0\, \E_{\neq 0} \big[ \Tr\big( u_0 \id_{I_\eps}({H})\big) \big] \nonumber\\
		& \geq s_0 \rho_- \left( \frac{1}{\rho_+} \E \left[ \Tr\left( u_0 \id_{I_\eps}( H) \right) \right] 
			- \int_{0}^{s_0} \d\omega_0\, \E_{\neq 0} \left[ \Tr\left( u_0 \id_{I_\eps}( H) \right) \right] \right)\nonumber\\
		& \ge  s_0 \rho_- \left(\frac{\gamma \mathcal{N}(I_{\varepsilon})}{\rho_+} - J\right). 
	\end{align}
	Here, the last inequality follows from the right estimate in \eqref{bounds:u0} together with the unique 
	continuation principle stated in Lemma~\ref{lem:UCP}. Furthermore, we introduced 
	$\rho_+ := \esssup_{x\in[0,1]} \rho(x)<\infty$ and 
	$
		J := \int_{0}^{s_0} \d\omega_0\, \E_{\neq 0} \left[ \Tr\left( u_0 \id_{I_\eps}( H) \right) \right]. 
	$

	To estimate $J$ we first exchange the operator $H$ by its finite-volume restriction $H_L$. The error 
	arising from this modification of the operator can be bounded by Theorem \ref{th:TrClBounds3}, 
	which yields finite constants $C_2,\mu>0$ such that
	$\E \left[\left\vert  \Tr\left( u_0 \id_{I_\eps}( H) \right) - 
		\Tr\left( u_0 \id_{I_\eps}( H_L) \right)\right\vert \right]  
		\leq   C_2  \e^{-\mu L}$
	for every $L>0$. This estimate implies the bound 
	\begin{equation}
		\label{eq:PfPosSSFcont3a}
		J \leq s_0 \sup_{\omega_0\in [0,s_0]}  \E_{\neq 0}\left[ \Tr\left( u_0 \id_{I_\eps}(H_L) \right) \right] 
		+ \frac{C_{2}}{\rho_{-}} \, \e^{-\mu L}
	\end{equation}
	for every $L>0$. Let $e_1:=(1,0,...,0) \in\R^{d}$ be the unit vector along the first coordinate axis. 
	We define the subset of lattice points $A_L:= \big( e_{1}+ (3\Z)^d \big) \cap \Lambda_L$ and introduce the 
	random background operator
	$\wtilde{H}_L :=  H_{0,L} +  \lambda \sum_{k\in\Z^d\setminus A_L} \omega_k u_k$,
	where $H_{0,L}$ is the Dirichlet restriction of $H_{0}$ to $\Lambda_{L}$. 
	However, for any fixed realisation of coupling constants $\{\omega_k\}_{k\in \Z^{d}\setminus A_{L}}$, 
	we view $\wtilde{H}_L$ 
	as a non-random operator with a non-periodic background potential that is bounded uniformly in
	$\{\omega_k\}_{k\in \Z^{d}\setminus A_{L}}$. Now, the Dirichlet restriction $H_{L}$ takes the form 
	$H_L=\wtilde{H}_L + \lambda \sum_{k\in A_L} \omega_{k} u_{k}$
	and, after scaling by a factor $1/3$ and (if required) introducing an energy shift, it 
	constitutes a \emph{crooked Anderson Hamiltonian} in the sense of \cite{MR3106507}. 
	We apply the Wegner estimate \cite[Thm.\ 1.4]{MR3106507} and obtain a finite constant $C_3 >0$ such that
	\begin{align}
		\label{eq:PfPosSSFcont8}
		\sup_{\omega_k\in [0,1], \, k\in \Z^d\setminus A_L} \E_{A_L}\left[  \Tr\left( u_0 \id_{I_\eps}( H_L )\right) \right] 
		& \leq \|u_{0}\|_{\infty} \sup_{\omega_k\in [0,1],\, k\in \Z^d\setminus A_L} 
			\E_{A_L}\left[  \Tr\left( \id_{I_\eps}( H_L )\right) \right]	\nonumber\\
		& \leq C_3 2\eps  L^d, 
	\end{align}
	where $\E_{A_L}[ \,\pmb\cdot\, ]$ denotes the average over the couplings $\{\omega_k\}_{k\in A_L}$. The estimate  
	\eqref{eq:PfPosSSFcont8} holds for all $\varepsilon \in (0,\varepsilon_{1}] $ and all length-scales $L  \ge L_{1}$ 
	such that $L/3 \in \N$ is odd, and $\varepsilon_{1}$ and $L_{1}$ depend only on model parameters. We insert 
	\eqref{eq:PfPosSSFcont8} into \eqref{eq:PfPosSSFcont3a} and obtain 
	\begin{equation}
		\label{eq:PfPosSSFcont9}
		J \leq C_3  2\eps s_0 L^d + \frac{C_2}{\rho_-} \e^{-\mu L}.
	\end{equation}
	Combining \eqref{eq:PfPosSSFcont1}, \eqref{eq:PfPosSSFcont2b} and \eqref{eq:PfPosSSFcont9}, we conclude
	\begin{align}
		\label{eq:PfPosSSFcont10}
		\E \left[ \xi(E) \right] & \geq  \rho_- s_{0} \left( \frac{\gamma}{\rho_{+}}	
		\frac{\mathcal{N}(I_{\varepsilon})}{2 \varepsilon}  -  C_3  s_{0} L^d 
		 - \frac{C_2}{\rho_-}  \, \frac{\e^{-\mu L}}{2\varepsilon}  \right)  - C_1 \varepsilon^{\alpha}
	\end{align}
	for all $\varepsilon \in (0, \min\{\varepsilon_{0},\varepsilon_{1}\}]$,
	all $s_{0} \in (0,\min\{ 1,C\}]$ 
	and all $L \ge L_{1}$ such that $L/3 \in \N$ is odd.  

	Now, suppose that $E$ is a Lebesgue point of the integrated density of states $\mathcal{N}$, 
	which is the case Lebesgue-almost everywhere.  Then we have 
	$\mathcal{N}(I_{\eps})/2\eps \to \mathcal{N}'(E)$ as $\varepsilon\downarrow 0$. 
	Suppose also that $\mathcal{N}'(E)>0$. Then, there exists 
	$\varepsilon_{2} \in (0, \min\{\varepsilon_{0},\varepsilon_{1}\}]$ such that
	$\mathcal{N}(I_{\eps})/2\eps \ge \mathcal{N}'(E)/2$ for every $\varepsilon \in (0,\varepsilon_{2}]$. 
	Finally, we choose $L \ge L_{1}$ with $L/3 \in \N$ odd so large that $\eps := L^{-4d/\alpha} \le \varepsilon_{2}$ 
	and $s_0:=L^{-2d} \le \min\{1,C\}$. In this case \eqref{eq:PfPosSSFcont10} yields 
	\begin{align}
		\label{ssf-lb-final}
		\E \left[ \xi(E) \right] & \geq \frac{\gamma \rho_{-}}{2\rho_{+}} \, L^{-2d}  
		\Big( 	\mathcal{N}'(E)  -  O(L^{-d})  \Big).
	\end{align}
	The right-hand side of \eqref{ssf-lb-final} is strictly positive by possibly enlarging $L$ even further. 
\end{proof}


\section{Proofs of the results from Section \ref{subsec:aoc}}
\label{sec:6}

A crucial observation to prove Theorem \ref{th:IntMainRes} is the following Fredholm determinant representation of the ground-state overlap. 

\begin{lemma}
	\label{Fredholm:Repr}
	Let $N\in\N$, $L>0$ and define the orthogonal projections
	\begin{equation} 
		\label{eq:projections_till_N}
  	P:= P_{N,L} := \sum_{j=1}^N |\varphi_j^L\>\<\varphi_j^L|
	  \quad\text{and}\quad
  	Q:= Q_{N,L} := \sum_{k=1}^{N} \big|\psi_k^L\big\> \big\<\psi_k^L\big|,
	\end{equation}
	which we have written down in Dirac notation. Then
	\begin{align}
		\label{fredholm1}
		S_{N,L} &= \det\big( \id- (P - Q)^2\big)^{1/4}  \notag \\
    & =  \det\big( \id- P(\id - Q) P \big)^{1/2} 
    =  \det\big( \id- (\id - P) Q (\id- P) \big)^{1/2}.
	\end{align}    
\end{lemma}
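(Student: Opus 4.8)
The plan is to reduce the three displayed identities in \eqref{fredholm1} to the elementary fact that $S_{N,L}^{2}=\det(MM^{\ast})=\det(M^{\ast}M)$ for the $N\times N$ overlap matrix $M:=\big(\langle\varphi_{j}^{L},\psi_{k}^{L}\rangle\big)_{j,k=1}^{N}$, and then to identify the relevant operators on $\H$ with $MM^{\ast}$ and $M^{\ast}M$ after compression to a suitable finite-dimensional subspace. A preliminary remark makes all Fredholm determinants in \eqref{fredholm1} meaningful: since $P$ and $Q$ are rank-$N$ projections, the operators $P(\id-Q)P$, $PQP$, $(\id-P)Q(\id-P)$, $QPQ$ all have rank at most $N$ and $(P-Q)^{2}$ has rank at most $2N$; in particular they lie in $\S^{1}$, and the Fredholm determinant of $\id$ minus any of them equals the finite product of the numbers $1-b$ over its nonzero eigenvalues $b$.

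For \emph{Step 1}, a finite-matrix identity: from $S_{N,L}=|\det M|$ we get $S_{N,L}^{2}=\det(MM^{\ast})=\det(M^{\ast}M)$, and computing entries, $(MM^{\ast})_{jk}=\sum_{l=1}^{N}\langle\varphi_{j}^{L},\psi_{l}^{L}\rangle\langle\psi_{l}^{L},\varphi_{k}^{L}\rangle=\langle\varphi_{j}^{L},Q\varphi_{k}^{L}\rangle$, so $MM^{\ast}$ is the matrix of $PQP\big|_{\ran P}$ in the orthonormal basis $(\varphi_{j}^{L})_{j=1}^{N}$, and likewise $M^{\ast}M$ represents $QPQ\big|_{\ran Q}$ in the basis $(\psi_{k}^{L})_{k=1}^{N}$. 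Hence $S_{N,L}^{2}=\det\big(PQP\big|_{\ran P}\big)=\det\big(QPQ\big|_{\ran Q}\big)$. For \emph{Step 2}, the middle identity: with respect to $\H=\ran P\oplus\ker P$ the operator $\id-P(\id-Q)P=\id-P+PQP$ is block diagonal, equal to $PQP\big|_{\ran P}$ on $\ran P$ and to the identity on $\ker P$; since the off-diagonal part is $\S^{1}$, its Fredholm determinant is just the determinant of the finite block, so $\det\big(\id-P(\id-Q)P\big)=\det\big(PQP\big|_{\ran P}\big)=S_{N,L}^{2}$.

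For the last identity I would set $A:=Q(\id-P)$, so that $A^{\ast}A=(\id-P)Q(\id-P)$ and $AA^{\ast}=Q(\id-P)Q$; since $A^{\ast}A$ and $AA^{\ast}$ share their nonzero eigenvalues with multiplicities and the Fredholm determinant depends only on these, $\det\big(\id-(\id-P)Q(\id-P)\big)=\det\big(\id-Q(\id-P)Q\big)$, and the right-hand side equals $\det\big(QPQ\big|_{\ran Q}\big)=S_{N,L}^{2}$ by the argument of Step 2 with $P$ and $Q$ interchanged. For the first identity, put $B:=(\id-P)Q(\id-P)$ and note the two algebraic facts $AB=BA=0$ (because $P(\id-P)=0$) and $A+B=P-PQP+(Q-PQ-QP+PQP)=(P-Q)^{2}$. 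Hence $\id-(P-Q)^{2}=(\id-A)(\id-B)$, and multiplicativity of the Fredholm determinant gives
\[
\det\big(\id-(P-Q)^{2}\big)=\det(\id-A)\,\det(\id-B)=S_{N,L}^{2}\cdot S_{N,L}^{2}=S_{N,L}^{4}.
\]
Taking square and fourth roots — legitimate because $\|P-Q\|\le 1$ forces all three determinants into $[0,1]$ and $S_{N,L}\ge 0$ — yields all of \eqref{fredholm1}.

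I do not expect a genuine obstacle: the content is linear algebra together with two standard black boxes, namely multiplicativity of Fredholm determinants and the coincidence $\sigma(A^{\ast}A)\setminus\{0\}=\sigma(AA^{\ast})\setminus\{0\}$ with multiplicities. The only point requiring a little care is the bookkeeping in Step 2, i.e.\ keeping track of which finite-dimensional subspace each compression is supported on, so that the passage from a Fredholm determinant on the infinite-dimensional space $\H$ to an honest $N\times N$ determinant is clean; the identification $(MM^{\ast})_{jk}=\langle\varphi_{j}^{L},Q\varphi_{k}^{L}\rangle$ and its analogue for $M^{\ast}M$ are the crux of the argument and should be written out explicitly.
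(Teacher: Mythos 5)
Your argument is correct and is essentially the paper's proof: the same reduction $S_{N,L}^{2}=\det(MM^{\ast})=\det(M^{\ast}M)$, the same identification of $MM^{\ast}$ and $M^{\ast}M$ with $PQP\big|_{\ran P}$ and $QPQ\big|_{\ran Q}$, the same passage from $(\id-P)Q(\id-P)$ to $Q(\id-P)Q$ via shared nonzero singular values, and the same factorisation of $\det\bigl(\id-(P-Q)^{2}\bigr)$ through the identity $(P-Q)^{2}=P(\id-Q)P+(\id-P)Q(\id-P)$ (which the paper cites as Lemma~\ref{th:SimonsTrace1} and you rederive inline, together with the orthogonality of the two summands). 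One notational slip worth fixing: you set $A:=Q(\id-P)$ for the third identity and then silently redefine $A:=P(\id-Q)P$ when asserting $A+B=(P-Q)^{2}$ and $AB=BA=0$; the mathematics is right, but a fresh letter would avoid confusion.
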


\begin{proof}
  Consider the $N\times N$-matrix $M:= \big(\<\varphi^L_j, \psi^L_k\>\big)_{1\le j,k\le N}$. 
  Then, the matrix entries of $MM^*$, respectively $M^*M$, read 
  	$(MM^*)_{jl} 
  		= \< \varphi_j^L,P Q P\varphi_l^L\>$, respectively 
  	$(M^*M)_{jl} 
			= \< \psi_j^L, Q P Q\psi_l^L\>$,
  for $1\le j,l \le N$.
	Thus, $S_{N,L}$ can be written as
  \begin{equation}
    \label{Fredholm:eq4}
    S_{N,L}^2 = \det ( M M^*)
	      = \det \big( P Q P \big|_{\ran P}\big) 
	      = \det \big( \id - P Q^{c} P\big) ,
  \end{equation}
  where $Q^{c} := \id -Q$. Likewise, we have
  \begin{align}
    \label{Fredholm:eq5}
    S_{N,L}^2 = \det (M^*M)
	        =  \det \big( Q P Q  \big|_{\ran Q}\big) 
	      & =  \det\big(\id - QP^{c} Q \big) 
	       =  \det\big(\id - P^{c} Q P^{c} \big).
  \end{align}
  Here, the last equality follows from the fact that the non-zero singular values of $QP^{c}$ 
  coincide with the non-zero singular values of its adjoint $P^{c}Q$.
	The determinants in \eqref{Fredholm:eq4} and \eqref{Fredholm:eq5} are well-defined Fredholm 
	determinants because $PQ^{c} P$ and $QP^{c} Q$ are of finite rank. Multiplying the expressions in \eqref{Fredholm:eq4} and \eqref{Fredholm:eq5} yields
  \begin{align}
    \label{Fredholm:eq1}
    S_{N,L}^4 &= \det \big( \id - P Q^{c} P\big)  \det\big(\id - P^{c} Q P^{c} \big)
	   = \det \big( \id - PQ^{c} P -  P^{c} Q P^{c} \big).
  \end{align}
  Now, the lemma follows from the identity 
  \begin{equation}
 		\label{eq:square-products}
		(P-Q)^{2} = PQ^{c}P + P^{c}QP^{c}.
	\end{equation} 
\end{proof}

The remainder of this section is devoted to the proof of Theorem \ref{th:IntMainRes}. 
For $L>0$ and $E\in \R$ we abbreviate the index and the shift operator of $H_{(L)}$ and $H_{(L)}^\tau$ by 
 $\theta_{(L)}(E):=\theta(E,H_{(L)},H^\tau_{(L)})$ and 
 $T_{(L)}(E) := T(E,H_{(L)},H^\tau_{(L)})$,
respectively.

Note that $P_{N_{L}(E),L} = \id_{(-\infty, E]}(H_{L})$. If $\theta_{(L)}(E)=0$ holds, 
then we also have the identity $Q_{N_{L}(E),L} = \id_{(-\infty, E]}(H_{L}^{\tau})$. In this case we obtain
 \begin{equation}
 	\label{eq:Srep:xi0}
 	S_{L}(E) = \det\big(\id - T(E, H_{L},H_{L}^{\tau})^{2}\big)^{1/4}.
\end{equation}

\begin{proof}[Proof of Theorem \ref{th:IntMainRes}]
	We address the different parts of the theorem in the order \itemref{th:IntMainResiprime}, 
	\itemref{th:IntMainResii}, \itemref{th:IntMainResiii}, \itemref{th:IntMainResi}.
	
	\textsc{Part \itemref{th:IntMainResiprime}.} \quad 
	Let $E\in \FMB$. Then Corollary \ref{cor:TrClBounds2} implies that $T(E)\in\mathcal{S}^2$ almost surely, 
	and we have $S(E)= \det\big(\id- T(E)^2\big)^{1/4}$ almost surely. 
	Clearly, if $1$ is an eigenvalue of $T(E)^{2}$, then $S(E)=0$. Conversely, suppose that $1$ is not an 
	eigenvalue of $T(E)^{2}$. Observing \eqref{eq:T<1}, we denote by 
	$1 > \|T(E)\|^{2} = b_{1} \ge b_{2} \ge \ldots \ge 0$ 
	the non-increasingly ordered sequence of eigenvalues of $T(E)^{2}$. Then we obtain
	\begin{align}
		\label{eq:S-expand}
 		S(E)^{4} &= \exp\bigg\{ \sum_{n\in\N} \ln (1-b_{n})\bigg\}
			= \exp\bigg\{ -\sum_{n\in\N} \sum_{k\in\N} \frac{b_{n}^{k}}{k}\bigg\} \notag\\
		& \ge \exp\bigg\{ - \sum_{n\in\N} b_{n}\sum_{k\in\N} b_{1}^{k-1}\bigg\} 
			= \exp\bigg\{ - \frac{\|T(E)\|_{2}^{2}}{1- \|T(E)\|^{2}}\bigg\}
	 		>0. 
	\end{align}

	\textsc{Part \itemref{th:IntMainResii}.} \quad 
	We fix a compact interval $I \subset\FMB$, $E\in I$, a constant $c>0$ and define the event 
	$\Omega_{0} := \{ \|T(E)\|_{1} <1\}$. Markov's inequality implies
	\begin{align}
		\label{eq:FinalAbsOfAndOrth11}
		\mathbb{E}\big[ S(E) \big] 
		& \geq \e^{-c/4}\ \mathbb{P}\left[ S(E)^4\geq \e^{-c} \right]
			\ge  \e^{-c/4}\ \mathbb{P}\big[ \Omega_{0} \cap \{ S(E)^4\geq \e^{-c}\}  \big] \notag\\
		& \ge \e^{-c/4}\ \mathbb{P}\bigg[ \Omega_{0} \cap \bigg\{ \exp \Big(- \frac{\|T(E)\|_{2}^{2}}{1- \|T(E)\|^{2}}
			\Big) \geq \e^{-c} \bigg\} \bigg],
	\end{align}
	where we used \eqref{eq:S-expand} to obtain the last inequality. 
	Since $\|T(E)\| \le \|T(E)\|_{1}< 1$ on $\Omega_{0}$ and, thus, 
	$\|T(E)\|_{2}^{2} \le \|T(E)\| \|T(E)\|_{1} < \|T(E)\|_{1}$, we conclude that the fraction inside 
	the exponent in the last line of \eqref{eq:FinalAbsOfAndOrth11} is bounded from above by 
	$\|T(E)\|_{1}/(1- \|T(E)\|_{1})$. Therefore we get
	\beq
		\mathbb{E}\big[ S(E) \big] \ge  \e^{-c/4}\ \mathbb{P}\Big[  \|T(E)\|_1 \leq \frac{c}{c+1} \Big].
	\eeq
	We apply Markov's inequality again and obtain
	\begin{equation}
		\label{eq:FinalAbsOfAndOrth13}
		\mathbb{P}\Big[ \|T(E)\|_1 \leq \frac{c}{c+1} \Big] = 1-\mathbb{P}\Big[  \|T(E)\|_1  > \frac{c}{c+1} \Big]	
		\geq 1- \frac{c+1}{c} \,\mathbb{E}\big[ \|T(E)\|_1 \big]
	\end{equation}
	so that 
	\begin{equation}
		\label{eq:FinalAbsAndOrth14}
		\mathbb{E}\big[ S(E) \big] \geq \e^{-c/4} \Big( 1-  \frac{c+1}{c} \,\mathbb{E}\big[ \|T(E)\|_1 \big] \Big).
	\end{equation}
	Corollary \ref{cor:TrClBounds2} gives $\sup_{E \in I} \E\big[\|T(E)\|_1\big]\to 0$ as $\tau\to 0$. 
	Hence, the claim follows by letting $c\downarrow 0$.

	\textsc{Part \itemref{th:IntMainResiii}.} \quad 
	We restrict ourselves to $ W\geq 0$, the proof for $ W\leq 0$ follows along the same lines. 
	Since $\theta(E,H,H^\tau) = \xi(E,H,H^\tau)$ for a.e.\ $E\in\FMB$ almost surely by 
	Theorem \ref{th:SSFandTransOpi}, the equivalence \eqref{AOCequiIndex} implies
	\beq\label{eq:equivprob}
		\PP\big[S(E)=0\big] = \PP \big[ \theta(E,H,H^\tau) >0\big] = \PP\big[ \xi(E,H,H^\tau) >0 \big]
	\eeq
	for a.e.\ $E\in\FMB$. Hence, the claim follows from Theorem \ref{lem:SSFcont}.

	\textsc{Part \itemref{th:IntMainResi}.} \quad 
	We fix $E \in\FMB \cap \Int(\Sigma)$.
	The almost-sure statement \eqref{eq:ResultNoAocStatPointw} implies \eqref{eq:ResultNoAocStat1} 
	by dominated convergence and a subsubsequence argument applied to the $\mathbb{R}$-valued 
	sequence $\left(\mathbb{E}\left[S_L(E)\right]\right)_{L>0}$. 
	Thus it remains to prove \eqref{eq:ResultNoAocStatPointw}. 

	Let $(L_n)_{n\in\mathbb{N}}$ be a sequence with $L_{n}/\ln n \to\infty$ as $n\to\infty$. 
	Since the quantities $\theta_{(L_{n})}(E)$ are all integer-valued, \eqref{eq:SSFandIndex} implies 
	that there is a random variable $n_{0}:= n_{0}(E):\Omega \rightarrow\N$ such that almost surely
	\begin{equation}
		\label{theta-const}
  	\theta(E)=\theta_{L_n}(E) \qquad \text{for all } n\geq n_0.
	\end{equation}

 	\emph{Case 1: $\theta(E)=0$.} \quad
	Let $n \ge n_{0}$. Because of \eqref{theta-const}, \eqref{eq:trShift-index} and 
	\eqref{eq:xi-T} for $L=L_{n}$ we have the representation \eqref{eq:Srep:xi0} for the finite-volume overlap
	in this case. The continuity of the Fredholm determinant with respect to the trace norm 
	\cite[Thm.\ 3.4]{simon2005trace} implies
	\begin{align}
		\label{eq:ProofIndexZero}
		\big|S_{L_{n}}(E)^{4} - S(E)^{4}\big| & =
		\big|\det\big( \id - T_{L_n}(E)^2\big) - \det\big(\id - T(E)^2\big)\big| \notag \\
		& \leq 2 \| T_{L_n}(E) - T(E) \|_1\exp\big( \|T_{L_n}(E)\|_2^2+ \|T(E)\|_2^2 +1\big).
	\end{align}
	Theorem \ref{Thm:prod} now yields the almost-sure convergence
	\begin{equation}
		\lim_{n\to\infty} \| T_{L_n}(E) - T(E) \|_1\exp\big( \|T_{L_n}(E)\|_2^2+ \|T(E)\|_2^2 +1\big) =  0.
	\end{equation}
	This and \eqref{eq:ProofIndexZero} imply the desired convergence \eqref{eq:ResultNoAocStatPointw} 
	in the case $\theta(E)=0$. 

	\emph{Case 2: $\theta(E) \neq 0$.} \quad
	We assume without loss of generality that $\theta(E)>0$. The other case follows along the same lines. 
	Because of the equivalence \eqref{AOCequiIndex} we have $S(E)=0$ in this case, which, in turn, is equivalent 
	$1\in \sigma\big(T(E)^2\big)$ by \eqref{eq:ResultNoAocStat2}.
	We abbreviate $P_{n}:= P_{N_{L_n}(E), L_n}$ and $Q_{n}:= Q_{N_{L_n}(E), L_n}$. 
	Let us assume the strong convergence 
	\beq
		\label{eq:strongconv}
		\mathrm{s}\,\text{-}\!\!\lim_{n\rightarrow\infty} ( P_n -  Q_n)^2 =  T(E)^2
		\qquad\text{almost surely}
	\eeq
	for the time being. Since $1\in \sigma\big(T(E)^2\big)$, it implies almost surely the existence of a sequence of 
	eigenvalues $(\alpha_n)_{n\in\N}$ with $\alpha_n\in \sigma \big( (P_n -Q_{n})^{2}\big)$ and 
	$\alpha_n\to 1 $ as $n\to\infty$ \cite[Thm.\ VIII.24(a)]{reed1980methods1}. Moreover, 
	$0 \leq (P_n -Q_{n})^2 \leq 1$, and we conclude 
	(note that $N_{L_{n}}(E) \neq 0$ for $n$ large enough because $E \in \Int(\Sigma)$, 
	whence the first line in \eqref{eq:DefOverlap} applies)
	\beq
  	S_{L_n}(E)^{4} = \det\big( \id - (P_n -  Q_n)^2\big) \leq 1- \alpha_n \rightarrow 0
	\eeq
	as $n \to\infty$ almost surely. This is the assertion in the case $\theta(E) \neq 0$. 

	Thus, it remains to prove \eqref{eq:strongconv}. For this it suffices to show strong convergence 
	$P_n - Q_{n} \to T(E)$ as 
	$n\to\infty$. Let $\eta\in L^2(\R^d)$. We add and subtract the term $T_{L_n}(E)$ and estimate
	\begin{equation}
		\label{eq:pfAOC1} 
		\big\| \big( P_n -   Q_n - T(E) \big) \eta \big\| 
		\leq \big\| \big(\id_{(-\infty,E]}(H^\tau_{L_n}) - Q_n\big)\eta\big\|
		+	\big\|\big( T_{L_n}(E) - T(E)\big)\eta \big\|,
	\end{equation}
  where we used that $ P_{n} = \id_{(-\infty,E]}(H_{L_n})$ by definition of the particle number
  \eqref{eq:DefParticleNumberSeq}.
	The second term on the right-hand side of \eqref{eq:pfAOC1} converges to $0$ as $n\to\infty$ almost surely
	by Theorem~\ref{Thm:prod}.
  As to the first term on the right-hand side we recall from \eqref{theta-const}, 
	\eqref{eq:trShift-index} and $\Tr P_{n} = \Tr Q_{n}$ that 
 	$\theta(E) = \theta_{L_{n}}(E) = \Tr\big( T_{L_{n}}(E) \big) 
		= \Tr\big( Q_n - \id_{(-\infty,E]}(H_{L_{n}}^{\tau}) \big)$
	almost surely for $n\ge n_{0}$. Since we assumed that $\theta(E)>0$ and $Q_n$ is the orthogonal projection on the eigenspaces of all eigenvalues up to the $N_{L_n}(E)$th eigenvalue of $H^\tau_{L_n}$, we obtain that
	$\mu_{N_{L_n}(E)-\theta(E)}\leq E\leq \mu_{N_{L_n}(E)-\theta(E)+1} \leq\mu_{N_{L_n}(E)}$
	and that
  \beq\label{eq:pfAOC1c}
  	Q_n - \id_{(-\infty,E]}(H^\tau_{L_n}) =
  		\displaystyle\sum_{k= 0}^{\theta(E)-1} |\psi^{L_{n}}_{N_{L_n}(E)-k}\> \<\psi^{L_{n}}_{N_{L_n}(E)-k}|
   \eeq
  hold almost surely for $n \ge n_{0}$. 
  Since $E \in\Int(\Sigma) $ and $\Sigma = \sigma_{\text{ess}}(H^{\tau})$, we obtain for any $\eps>0$ by Fatou's lemma and strong resolvent convergence of $H_{L_{n}}^{\tau}$ to $H^{\tau}$ that
	$\lim_{n\to\infty} \Tr \big(\id_{[E,E+\eps)}(H_{L_{n}}^{\tau}) \big)
		\ge \Tr\big(\id_{[E,E+\eps)}(H^{\tau}) \big)= \infty$
	almost surely.
 Since $\theta(E)$ is a finite number,  this implies that for any $\eps>0$ there is an $n_1>0$ such that
 \beq
 \displaystyle\sum_{k= 0}^{\theta(E)-1} |\psi^{L_{n}}_{N_{L_n}(E)-k}\> \<\psi^{L_{n}}_{N_{L_n}(E)-k}|\leq 
 \id_{[E,E+\eps)}(H^{\tau})
 \eeq
 for all $n\geq n_1$ almost surely. 
 This yields for any fixed $\varepsilon>0$ the bound
  \beq
  	\label{eq:pfaoc2}
  	\limsup_{n\to\infty} \big\| \big( Q_{n} - \id_{(-\infty,E]}(H^\tau_{L_n}) \big)\eta\big\|
  	\leq \limsup_{n\to\infty} \big\| \id_{[E,E+\eps)} (H^\tau_{L_n})\eta \big\|
  \eeq
  almost surely.  
	Finally, we fix a null sequence $(\varepsilon_{l})_{l\in\mathbb{N}} \subset (0,\infty)$. 
	Lemma \ref{lm:multiEigenvalues} implies that $E,E+\varepsilon_{l}$, $l\in\mathbb{N}$, are no eigenvalues of 
	$H^\tau$ almost surely.  Therefore we have strong convergence $\id_{[E,E+\varepsilon_l)}(H^\tau_L)
	\to \id_{[E,E+\varepsilon_l)}(H^\tau)$ as $L\to\infty$ almost surely \cite[Thm.\ VIII.24(b)]{reed1980methods1}. 
	This and \eqref{eq:pfaoc2} lead to
	\begin{align}
  	\label{eq:pfaoc4}
		\limsup_{n\to\infty} \big\| \big( Q_{n} - \id_{(-\infty,E]}(H^\tau_{L_n}) \big)\eta\big\|
  	 \leq  \limsup_{l\to\infty} \big\| \id_{[E,E+\eps_l)} (H^\tau)\eta \big\| 
  	 = \big\| \id_{\{E\}} (H^\tau)\eta \big\|=0
  \end{align}
	almost surely, where the last equality follows again from $E$ being almost surely not an eigenvalue of $H^\tau$. 
	This proves \eqref{eq:strongconv}.
\end{proof}


\begin{appendix}

\section{Stability of fractional-moment bounds}
\label{app:FMB}

Here, we prove that the fractional-moment bounds given in Definition \ref{DefFMB} are stable under 
local perturbations. Among others, we show that $\FMB(H)=\FMB(H +W)$ for per\-turbations 
$W\in L_c^\infty(\R^d)$.

\begin{lemma}
	\label{Lem:PertPersPot}
	Let  $I\subset\FMB$ be a compact interval. Then for any fixed $0<s<1$ there exist constants $C,\mu>0$ 
	such that for all $G\subseteq\Rn$ open and $a,b \in \Rn$
	\begin{equation}
		\label{eq:PertPersPotStat}
		\sup_{\tau\in [0,1]}\sup_{E\in I,\eta\neq 0} \mathbb{E}\left[\|\chi_a R_{E+i\eta}(H^\tau_G) 
		\chi_b\|^s \right] \leq C \e^{-\mu|a-b|}
	\end{equation}
	holds. In particular, $\FMB(H)=\FMB(H^\tau)$ for every $\tau \in [0,1]$.
\end{lemma}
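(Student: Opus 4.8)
The plan is to use the second resolvent identity together with the spatial locality of $W$; since $\tau$ is not assumed small, a Neumann-series expansion in $\tau$ is unavailable, so the exponential decay has to be extracted from the compact support of $W$ rather than from smallness of the coupling. Write $R:=R_{E+\i\eta}(H_G)$ and $R^\tau:=R_{E+\i\eta}(H^\tau_G)$, and decompose $W=\sum_{c\in\Gamma_W}W_c$ into finitely many pieces $W_c:=\chi_{B_c}W$, where $(B_c)_{c\in\Gamma_W}$ is a measurable partition of $\supp W$ with $B_c\subseteq Q_c$ and $\Gamma_W$ is the finite set from \eqref{def:GammaW}; then $\chi_c W_c=W_c=W_c\chi_c$, $\|W_c\|_\infty\le\|W\|_\infty$, and the support of the perturbation $\tau\chi_G W\chi_G$ of $H_G$ stays inside $\bigcup_{c\in\Gamma_W}Q_c$ for every open $G$. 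By Remark~\ref{FMBalways} and its continuum version Lemma~\ref{lem:PossFrac} -- whose proof uses only the a priori bounds \eqref{eq:AppAPriori}, which hold for every exponent in $(0,1)$ uniformly in $\tau\in[0,1]$ and $G$ -- it suffices to prove \eqref{eq:PertPersPotStat} for one fixed small exponent $s_0\in(0,1/4)$, and the upgrade to all $s\in(0,1)$ then preserves uniformity in $\tau$.

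\emph{Step 1 (anchored bound).} I first show that there are $C,\mu>0$ so that for every $c\in\Gamma_W$, every $b\in\R^d$, and uniformly in $\tau,E,\eta,G$ one has $\mathbb{E}\big[\|\chi_c R^\tau\chi_b\|^{s}\big]\le C\e^{-\mu|c-b|}$ for all $s\in(0,1/2)$. Using the resolvent identity in the form $R^\tau=R-\tau R^\tau W R=R-\tau\sum_{c'\in\Gamma_W}R^\tau W_{c'}R$ together with the elementary subadditivity $|x+y|^s\le|x|^s+|y|^s$ and submultiplicativity $\|XY\|^s\le\|X\|^s\|Y\|^s$ for $0<s\le1$,
\[
\|\chi_c R^\tau\chi_b\|^{s}\le\|\chi_c R\chi_b\|^{s}+\tau^{s}\|W\|_\infty^{s}\sum_{c'\in\Gamma_W}\|\chi_c R^\tau\chi_{c'}\|^{s}\|\chi_{c'}R\chi_b\|^{s}.
\]
Taking expectations and decoupling the correlated resolvents $R$ and $R^\tau$ via Cauchy--Schwarz, the first term is bounded by the fractional-moment bound \eqref{eq:DefFMB} for $H$ (valid at every exponent in $(0,1)$ by Remark~\ref{FMBalways}); in the sum, $\mathbb{E}[\|\chi_c R^\tau\chi_{c'}\|^{2s}]^{1/2}$ is bounded by the a priori estimate \eqref{eq:AppAPriori} because $|c-c'|$ is bounded by a constant depending only on $W$, while $\mathbb{E}[\|\chi_{c'}R\chi_b\|^{2s}]^{1/2}\le C\e^{-(\mu/2)|c'-b|}$ by \eqref{eq:DefFMB}; since $|c'-b|\ge|c-b|-C_W$ for a $W$-dependent constant $C_W$ and $\Gamma_W$ is finite, Step~1 follows, for any $s$ with $2s<1$.

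\emph{Step 2 (general $a,b$).} Expanding instead at the other end, $R^\tau=R-\tau RWR^\tau=R-\tau\sum_{c\in\Gamma_W}RW_cR^\tau$, and proceeding as in Step~1, after taking expectations and applying Cauchy--Schwarz,
\[
\mathbb{E}\big[\|\chi_a R^\tau\chi_b\|^{s_0}\big]\le\mathbb{E}\big[\|\chi_a R\chi_b\|^{s_0}\big]+\tau^{s_0}\|W\|_\infty^{s_0}\sum_{c\in\Gamma_W}\mathbb{E}\big[\|\chi_a R\chi_c\|^{2s_0}\big]^{1/2}\mathbb{E}\big[\|\chi_c R^\tau\chi_b\|^{2s_0}\big]^{1/2}.
\]
The first term is controlled by \eqref{eq:DefFMB} for $H$, the factor $\mathbb{E}[\|\chi_a R\chi_c\|^{2s_0}]^{1/2}$ again by \eqref{eq:DefFMB} for $H$ (here $2s_0<1$), and $\mathbb{E}[\|\chi_c R^\tau\chi_b\|^{2s_0}]^{1/2}$ by Step~1 with exponent $2s_0<1/2$. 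Summing the resulting product of exponentials over the finite set $\Gamma_W$ and using $|a-c|+|c-b|\ge|a-b|$ yields \eqref{eq:PertPersPotStat} for $s_0$, and the upgrade noted above gives it for all $s\in(0,1)$. Finally $\FMB(H)\subseteq\FMB(H^\tau)$ is immediate from \eqref{eq:PertPersPotStat}, and the reverse inclusion follows by running the same argument with $H$ and $H^\tau$ interchanged: this is legitimate since along the segment $H^\tau+t(-\tau W)=H^{(1-t)\tau}$, $t\in[0,1]$, the a priori bounds \eqref{eq:AppAPriori} hold uniformly.

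The main obstacle is exactly that $\tau$ cannot be regarded as a small parameter, so the decay must be produced geometrically from $\supp W$; this forces the two-step bootstrap -- anchoring the base point in the bounded set $\Gamma_W$, so that the spatially non-decaying a priori bound costs only a harmless constant -- and requires careful tracking of the fractional exponents, which are doubled at each use of the Cauchy--Schwarz inequality (hence the restriction to small $s_0$ before the final upgrade).
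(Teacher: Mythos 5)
Your proof is correct, and it takes a genuinely different route from the paper's. The paper uses the single resolvent identity $R^\tau=R-\tau R W R^\tau$ and, in the perturbative term $I_2:=\mathbb{E}\big[\|\chi_a R\,W\,R^\tau\chi_b\|^s\big]$, estimates the factor $\mathbb{E}\big[\|\chi_c R^\tau\chi_b\|^{2s}\big]$ directly by the a priori bound \eqref{eq:AppAPriori} -- which yields decay only in $|a|$; it then recovers the missing decay in $|b|$ by noting the operator identity $R\,W\,R^\tau=R^\tau\,W\,R$ and repeating the estimate with the roles of $a$ and $b$ exchanged, finally taking the geometric mean of the two bounds. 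Your argument instead replaces the a priori bound by a separate preliminary estimate (your Step~1), obtained from a second application of the resolvent equation expanded from the other side, which gives decay in $|c-b|$ directly; plugging this into Step~2 then produces the full $|a-b|$ decay without any symmetrisation. What each approach buys: the paper's is shorter (one resolvent expansion rather than a two-step bootstrap) and closes for all $s<1/2$ before the final upgrade, whereas yours needs $s_0<1/4$ because the exponent is doubled twice by Cauchy--Schwarz; on the other hand, yours does not rely on spotting the algebraic identity $R W R^\tau = R^\tau W R$, and the anchored bound of Step~1 is of some independent interest since it makes explicit that the decay originates at $\supp W$. Both routes then appeal to Lemma~\ref{lem:PossFrac} (with the observation of Remark~\ref{PossFracUni} about uniformity in $\tau$) to extend to arbitrary $s\in(0,1)$, and your observation about the segment $H^\tau+t(-\tau W)=H^{(1-t)\tau}$ correctly justifies the reverse inclusion $\FMB(H^\tau)\subseteq\FMB(H)$ needed for the final equality $\FMB(H)=\FMB(H^\tau)$, a point the paper leaves implicit.
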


\begin{proof}
	Let $z:=E+i\eta$ with $E\in I$ and $\eta\neq 0$. For the moment, we also fix $0<s<1/2$.  
	The resolvent equation
	$R_{z}(H_{G}^{\tau}) = R_{z}(H_{G}) -\tau R_{z}(H_{G}) W R_{z}(H_{G}^{\tau})$
	yields the upper bound
	\begin{align}
		\mathbb{E} \left[ \Vert\chi_{a} R_{z}(H_{G}^{\tau})\chi_{b}\Vert^s \right]  
		& \le \mathbb{E} \left[ \Vert\chi_{a}R_{z}(H_{G})\chi_{b}\Vert^s \right]
			+ \tau^{s} \mathbb{E} \left[\Vert\chi_{a} R_{z}(H_{G}) W R_{z}(H_{G}^{\tau})\chi_{b}\Vert^s\right] \notag\\
		& =: I_{1}+ \tau^{s} I_{2}.
	\end{align}
	While the first term $I_{1}$ is independent of $\tau$ and can directly be estimated by 
	the fractional-moment bound \eqref{eq:DefFMB}
	for the unperturbed operator $H_{G}$, we estimate $I_2$ as
	\begin{align}
		I_{2} & \leq \|W\|_\infty^{s} \sum_{c\in\Gamma_W} \mathbb{E} \left[ \Vert\chi_{a}R_{z}(H_{G})\chi_{c}\Vert^s 
			\Vert\chi_{c} R_{z}(H_{G}^{\tau}) \chi_{b}\Vert^s \right] \notag\\
		& \leq  \|W\|_\infty^{s} \sum_{c\in\Gamma_W}  
			\mathbb{E} \left[ \Vert\chi_{a} R_{z}(H_{G}) \chi_{c}\Vert^{2s}\right]^{1/2} 
			\mathbb{E} \left[\Vert\chi_{c} R_{z}(H_{G}^{\tau}) \chi_{b}\Vert^{2s} \right]^{1/2} ,
	\end{align}
	where $\Gamma_{W}$ is defined in \eqref{def:GammaW}. Now, we recall Remark~\ref{FMBalways} and 
	estimate the first expectation
	with the fractional-moment bound for the unperturbed operator 
	$H_{G}$ and the exponent $2s <1$. And we estimate the second expectation with the a priori estimate
	\eqref{eq:AppAPriori} which is uniform in $\tau \in [0,1]$. This yields			
	\begin{equation}
		\label{AppLm1eq1}
		I_{2} \leq C_1 \sum_{c\in\Gamma_W}  \e^{-\mu_1|c-a|} \leq C_2 \e^{-\mu |a|}
	\end{equation}
	with finite constants $C_{1}, C_{2}>0$ that are independent of $G \subseteq\R^{d}$ open, $a \in\R^{d}$ 
	and $\tau \in [0,1]$. For the second inequality we used that $\Gamma_{W}$ is finite due to the 
	compact support of $W$. Since $R_{z}(H_{G}) W R_{z}(H_{G}^{\tau}) = R_{z}(H_{G}^{\tau}) W R_{z}(H_{G})$, 
	we obtain along the same lines that $I_2\leq C_2 \e^{-\mu |b|}$. Multiplying this inequality with 
	\eqref{AppLm1eq1}, we infer
	\beq
		\label{AppLm1eq2}
		I_2\leq C_2 \e^{-(\mu /2)(|a|+|b|)} \le C_{2} \e^{-(\mu /2)(|a - b|)}.
	\eeq
	Together with the bound for $I_{1}$, which decays exponentially in $|a-b|$, we obtain the assertion 
	for $s<1/2$. For general $0<s<1$, the result then follows from Remark~\ref{FMBalways} where no 
	additional $\tau$-dependence is generated.
\end{proof}

We finish this appendix with another simple stability result. It shows that
the probability for a given energy to be an 
eigenvalue is zero. This property does not rely on ergodicity, it follows already from the existence of the a priori bounds 
\eqref{eq:AppAPriori}.

\begin{lemma}\label{lm:multiEigenvalues}
	Let $E\in\R$ and $\tau\in [0,1]$ be given. Then $E$ is not an eigenvalue of $H^\tau$ almost surely.  
\end{lemma}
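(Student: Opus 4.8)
The plan is to show that the spectral projection $\id_{\{E\}}(H^{\tau})$ vanishes almost surely, which is exactly the assertion that $E$ is not an eigenvalue of $H^{\tau}$. First I would localise the statement: since the unit cubes $Q_a$, $a\in\Z^{d}$, tile $\R^{d}$ up to a Lebesgue-null set, one has $\sum_{a\in\Z^{d}}\chi_a=\id$ on $L^{2}(\R^{d})$, and consequently $\id_{\{E\}}(H^{\tau})=\sum_{a,b\in\Z^{d}}\chi_a\,\id_{\{E\}}(H^{\tau})\,\chi_b$ with strong convergence of the double sum. As $\Z^{d}\times\Z^{d}$ is countable, it therefore suffices to prove that, for each fixed pair $a,b\in\Z^{d}$, the block $\chi_a\,\id_{\{E\}}(H^{\tau})\,\chi_b$ vanishes almost surely.

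For the block estimate I would use the elementary spectral identity
\[
	\mathrm{s}\,\text{-}\!\!\lim_{\eta\downarrow 0}\,(-\i\eta)\,R_{E+\i\eta}(H^{\tau}) = \id_{\{E\}}(H^{\tau}),
\]
which follows from dominated convergence applied to the spectral measure of $H^{\tau}$, because the multiplier $\lambda\mapsto -\i\eta/(\lambda-E-\i\eta)$ is bounded by $1$ and converges pointwise to $\id_{\{E\}}(\lambda)$ as $\eta\downarrow 0$. Evaluating this limit along $\eta=1/n$ and using that the operator norm is lower semicontinuous under strong operator convergence (i.e.\ $\|A\|\le\liminf_n\|A_n\|$ whenever $A_n\to A$ strongly, since $\|A\varphi\|=\lim_n\|A_n\varphi\|$), one obtains, pointwise in $\omega$,
\[
	\big\|\chi_a\,\id_{\{E\}}(H^{\tau})\,\chi_b\big\|
	\le \liminf_{n\to\infty}\frac1n\,\big\|\chi_a\,R_{E+\i/n}(H^{\tau})\,\chi_b\big\|.
\]

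Finally I would fix some $s\in(0,1)$ and a bounded interval $I$ with $E\in I$, raise the last inequality to the power $s$ (using that $t\mapsto t^{s}$ is increasing and continuous, hence commutes with $\liminf$), take expectations, and invoke Fatou's lemma together with the $\tau$-uniform a priori bound~\eqref{eq:AppAPriori} applied with $G=\R^{d}$:
\[
	\E\big[\big\|\chi_a\,\id_{\{E\}}(H^{\tau})\,\chi_b\big\|^{s}\big]
	\le \liminf_{n\to\infty}\frac1{n^{s}}\,\E\big[\big\|\chi_a\,R_{E+\i/n}(H^{\tau})\,\chi_b\big\|^{s}\big]
	\le \liminf_{n\to\infty}\frac{C_{s}}{n^{s}}=0.
\]
Hence $\chi_a\,\id_{\{E\}}(H^{\tau})\,\chi_b=0$ almost surely, and the localisation step of the first paragraph finishes the proof. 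I do not anticipate a genuine obstacle; the only points that need a little care are the justification of the strong limit of $(-\i\eta)R_{E+\i\eta}(H^{\tau})$, the lower semicontinuity of the operator norm under strong convergence, and the measurability of $\omega\mapsto\|\chi_a R_{E+\i/n}(H^{\tau})\chi_b\|$ needed for Fatou's lemma --- all of which are routine.
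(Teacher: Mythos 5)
Your proposal is correct and follows essentially the same route as the paper's own proof: both pass to the blocks $\chi_a\,\id_{\{E\}}(H^{\tau})\,\chi_b$, use the strong limit $-\i\eta R_{E+\i\eta}(H^{\tau})\to\id_{\{E\}}(H^{\tau})$ together with the $\tau$-uniform a priori bound \eqref{eq:AppAPriori} to show each block vanishes almost surely, and conclude via countability of $\Z^d$. You merely spell out the lower-semicontinuity/Fatou step that the paper's one-line display leaves implicit.
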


\begin{proof}
Since, for $x\in\R$, $-i\eps (x-E-i\eps)^{-1}\to \id_{\{E\}}(x)$ as $\eps\to 0$, the spectral theorem implies  the strong convergence  $-i\eps R_{E+i\eps}(H^{\tau}) \to \id_{\{E\}}(H^{\tau})$ as $\eps\to 0$. Using this, we obtain for some $0<s<1$ and any 
$a, b\in\Z^d$ that
\begin{equation}
\mathbb{E} \big[ \|\chi_a\id_{\{E\}}(H^{\tau}) \chi_b \|^s \big] \leq \liminf_{\eps \to 0} \eps^s\,  \mathbb{E} \big[ \|\chi_a R_{E+i\eps}(H^{\tau})\chi_b \|^{s} \big] = 0,
\end{equation}
where the last equality follows from the a priori bound \eqref{eq:AppAPriori}.
Since $\Z^d$ is countable, $\id_{\{E\}}(H^{\tau})= 0$ almost surely. 
\end{proof}


\section{Deterministic a priori estimate for Schr\"odinger operators}
\label{app:Schatten}

This appendix contains a local Schatten-class bound for functions of (non-random) 
Schr\"odinger operators. 
Estimates of this type appeared in \cite[Sect.\ B.9]{artSEM1982Sim}, \cite[App.\ A]{artRSO2006AizEtAl2} and \cite{MR2303305}. 
We include the proofs for completeness, following closely \cite{MR2303305}.

We consider the non-random Schr\"odinger operator 
\begin{MyDescription}
\item[(D)] {\quad $H:=-\Delta+U$ with a bounded potential $U\in L^{\infty}(\R^{d})$,} \label{AssD}
\end{MyDescription}
acting in $L^{2}(\R^{d})$.
To formulate the deterministic a priori bound we introduce $E_{0} := \inf_{x\in\Rn} U(x)$ 
so that $H_G\geq E_0$ for all $G\subseteq \R^d$ open.

\begin{lemma}
	\label{lem:aPriori}
	Assume {\upshape (D)}. Let $p>0$ and $m\in\N$ with $m>d/(2p)$. Then there exists a finite constant $C$ 
	such that for all open $G\subseteq \R^d$, all $a,b\in\mathbb{R}^{d}$ and all $g\in L^\infty(\R)$
	we have the estimate
	\begin{equation}
		\label{eq:aPrioriStat}
		\|\chi_a g(H_{G})\chi_b\|_p \leq C \|g\|_\infty \big( \max\{0, \sup\supp(g) -E_{0}+1\}  \big)^m   .
	\end{equation}
\end{lemma}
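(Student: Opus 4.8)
The plan is to split off a bounded functional-calculus factor carrying the $\|g\|_\infty$- and energy-dependence, and then reduce the assertion to a single estimate uniform in $G$ and in the cube. If $\sup\supp(g)=+\infty$ the right-hand side of \eqref{eq:aPrioriStat} is infinite and there is nothing to prove; if $\sup\supp(g)<E_0$ then $g(H_G)=0$ for every open $G$, since $\sigma(H_G)\subseteq[E_0,\infty)$, and again there is nothing to prove. So we may assume $\sup\supp(g)\ge E_0$; then $\Lambda:=\max\{0,\sup\supp(g)-E_0+1\}=\sup\supp(g)-E_0+1\ge 1$. By the spectral theorem and $\sigma(H_G)\subseteq[E_0,\infty)$, the operator $g(H_G)(H_G-E_0+1)^m$ is bounded with norm $\le\sup_{\lambda\in\supp(g)\cap[E_0,\infty)}|g(\lambda)|\,(\lambda-E_0+1)^m\le\|g\|_\infty\Lambda^m$. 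Writing $g(H_G)=\bigl[g(H_G)(H_G-E_0+1)^m\bigr](H_G-E_0+1)^{-m}$ and applying \eqref{singular:eq1}, I obtain $\|\chi_a g(H_G)\chi_b\|_p\le\|g\|_\infty\Lambda^m\,\|(H_G-E_0+1)^{-m}\chi_b\|_p=\|g\|_\infty\Lambda^m\,\|\chi_b(H_G-E_0+1)^{-m}\|_p$. It thus suffices to show $\sup_{b\in\R^d}\sup_{G}\|\chi_b K_G^{-m}\|_p<\infty$, where $K_G:=H_G-E_0+1=-\Delta_G+(U-E_0+1)$ with $1\le U-E_0+1\le 1+\|U\|_\infty-E_0$, so $K_G\ge 1$.

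To control $\|\chi_b K_G^{-m}\|_p$ I would compare $K_G$ with the free Laplacian by heat-kernel domination. The Feynman--Kac formula together with the pointwise domination of the Dirichlet heat kernel by the free one gives $0\le\e^{-tK_G}(x,y)\le\e^{-t}(4\pi t)^{-d/2}\e^{-|x-y|^2/(4t)}$ uniformly in $G$, and the subordination formula $K_G^{-\alpha}=\Gamma(\alpha)^{-1}\int_0^\infty t^{\alpha-1}\e^{-tK_G}\,\d t$ then yields a pointwise kernel bound $0\le K_G^{-\alpha}(x,y)\le\Phi_\alpha(x-y)$, where $\Phi_\alpha\ge 0$ is the kernel of $(-\Delta+1)^{-\alpha}$ on $\R^d$, which decays exponentially at infinity and, for $2\alpha>d$, is also integrable and bounded. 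Invoking the positive-kernel domination principle for Schatten classes of order $q\ge 2$ (see, e.g., \cite{simon2005trace}) together with the classical bound $\|f(x)g(-\i\nabla)\|_q\le(2\pi)^{-d/q}\|f\|_{L^q}\|g\|_{L^q}$ for $q\ge 2$ (see \cite{artSEM1982Sim}), applied with $f=\chi_b$ and $g=(|\,\cdot\,|^2+1)^{-\alpha}$, I get $\|\chi_b K_G^{-\alpha}\|_q\le\|\chi_b(-\Delta+1)^{-\alpha}\|_q\le(2\pi)^{-d/q}\bigl\|(|\,\cdot\,|^2+1)^{-\alpha}\bigr\|_{L^q}<\infty$, uniformly in $b$ and $G$, precisely when $2\alpha q>d$. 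For $p\ge 2$ this settles the claim with $\alpha=m$ and $q=p$, since $2mp>d$ by hypothesis.

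For $0<p<2$ I would raise the exponent into the admissible range by inserting partitions of unity. Fix an integer $N\ge 2/p$, so that $Np\ge 2$, write $K_G^{-m}=(K_G^{-m/N})^N$ and expand $\chi_b K_G^{-m}=\sum_{c_1,\dots,c_{N-1}\in\Z^d}\chi_b K_G^{-m/N}\chi_{c_1}K_G^{-m/N}\chi_{c_2}\cdots\chi_{c_{N-1}}K_G^{-m/N}$ using $\sum_{c\in\Z^d}\chi_c=\id$. By the quasi-triangle inequality \eqref{lem:conkav} when $p\le 1$, or the triangle inequality when $1\le p<2$, followed by Hölder's inequality for Schatten norms with $N$ factors in $\S^{Np}$, each summand is dominated by $\prod_{j=1}^{N}\|\chi_{c_{j-1}}K_G^{-m/N}\chi_{c_j}\|_{Np}$ with $c_0:=b$ and the final factor one-sided. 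Each such factor is $\le C_0\e^{-\mu|c_{j-1}-c_j|}$ uniformly in $G$: for $|c_{j-1}-c_j|\le 1$ this follows from the previous paragraph with $q=Np\ge 2$ and $\alpha=m/N$ (note $2(m/N)(Np)=2mp>d$), while for $|c_{j-1}-c_j|\ge 2$ the kernel $\Phi_{m/N}$ on the relevant cubes stays off its diagonal singularity and decays exponentially, so the Hilbert--Schmidt norm---hence, since $Np\ge 2$, the $\S^{Np}$ norm---is exponentially small. Summing the resulting product of exponentials over $c_1,\dots,c_{N-1}\in\Z^d$ gives a finite bound uniform in $b$ and $G$, which completes the proof.

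I expect the transfer from the free to the Dirichlet-plus-potential operator to be the real difficulty: since $U$ is only bounded, the naive factorisation through $K_G^{-m}(-\Delta_G+1)^m$ is unavailable (the required commutators involve derivatives of $U$), which is why the argument has to go through heat-kernel domination and the positive-kernel Schatten-domination principle---the latter valid only for $q\ge 2$, hence the extra partition-of-unity step needed to bring the general case $p>0$ down to that range.
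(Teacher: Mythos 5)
Your opening factorisation, splitting off the bounded factor $g(H_G)(H_G - E_0 + 1)^m$, is the paper's step \eqref{eq:aPriori1}, and for small $p$ you then insert partitions of unity exactly as in \eqref{eq:aPriori2} (the paper uses $m$ first-power resolvent factors, you use $N$ fractional-power factors; these are equivalent devices). Where you genuinely depart is in how the single-factor bound $\|\chi_{c} K_G^{-\alpha}\chi_{c'}\|_q$, with $K_G := H_G - E_0 + 1$, is obtained uniformly in $G$. The paper routes this through a Schatten Combes--Thomas estimate, Lemma \ref{lem:SchattenCombesThomas}, built on the square-root trick: the form inequality $H_G \ge -\Delta + E_0$ makes $(-\Delta + E_0 - E)^{1/2}(H_G - E)^{-1/2}$ a contraction, so $\|\chi_a (H_G - E)^{-1/2}\|_{p'}$ is dominated by the free quantity, finite for $p' > d$ by Kato--Seiler--Simon, and interpolation \eqref{singular:eq2} against the operator-norm Combes--Thomas estimate of \cite{MR1937430} delivers the off-diagonal decay. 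You replace this mechanism by Feynman--Kac heat-kernel domination $0 \le K_G^{-\alpha}(x,y) \le \Phi_\alpha(x-y)$ and a positive-kernel Schatten domination principle for $q \ge 2$, so the free operator enters as a literal kernel majorant; the Kato--Seiler--Simon input is the same. Both routes are sound, and both need the Feynman--Kac comparison to handle arbitrary $G$.

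The one soft spot in your version is the invocation of ``positive-kernel domination in $\mathcal{S}^q$, $q \ge 2$.'' This is elementary when $q = 2n$ is an even integer, by expanding $\Tr\big((A^*A)^n\big)$ as an integral of a nonnegative $n$-fold product of kernels; for general $q \ge 2$ the statement in \cite{simon2005trace} is a membership assertion with no quantitative norm bound, so the uniformity in $b$ and $G$ you need does not come for free. This is repairable within your scheme: enlarge $N$ so that some even integer $2n$ lies in the interval $\bigl(Nd/(2m),\,Np\bigr]$ --- possible because its length $N(p - d/(2m)) \to \infty$ under the hypothesis $m > d/(2p)$ --- and combine the monotonicity $\|\cdot\|_{Np} \le \|\cdot\|_{2n}$ with the even-integer domination and the Kato--Seiler--Simon bound in $\mathcal{S}^{2n}$. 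The paper's square-root route sidesteps this delicacy altogether, which is its main practical advantage; yours is more geometric and makes the comparison with $-\Delta+1$ explicit at the kernel level, at the cost of this extra care with admissible Schatten exponents.
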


\noindent
The constant $C$ in \eqref{eq:aPrioriStat} depends on the potential $U$, but only through 
$\inf\sigma(H) -E_{0}$.

\begin{proof}
	We use the abbreviation $H:=H_G$. Without loss of generality we assume $E_{0} \le \sup\supp(g) < \infty$ 
	(because otherwise the statement is trivial) and $0<p \le 1$ (because $\|\pmb\cdot\|_{p} \le \|\pmb\cdot\|_{1}$ 
	for $p \ge 1$). Let $m\in\N$ such that $m> d/(2p)$. We insert the $m$-th power of the resolvent 
	and apply H\"older's inequality
	\begin{align}
		\label{eq:aPriori1}
		\big\|\chi_a g(H)\chi_b\big\|_p^{p} &= \| \chi_a(H-E_0+ 1)^{-m}(H-E_0+1)^m g(H)\chi_b\|_p^p \notag \\
		& \leq \| \chi_a(H-E_0+ 1)^{-m}\|_p^p \, \| (H-E_0+1)^m g(H)\chi_b\|^p \notag \\
		& \leq (\sup\supp(g)-E_0+1)^{mp} \, \|g\|_{\infty}^p \, \| \chi_a(H-E_0+ 1)^{-m}\|_p^p.
	\end{align}
	Next we set $a_0:=a$ and estimate with the adapted triangle inequality \eqref{lem:conkav}
	\begin{align}
		\label{eq:aPriori2}
	  \|\chi_a(H  -E_0+1)^{-m}\|_p^p  
		& =  \Big\| \chi_{a_0}(H-E_0+1)^{-1}\Big(\sum_{a_1\in\Zd}\chi_{a_1} \Big)(H-E_0+1)^{-1}  \notag\\
		& \hspace{.7cm} \times \cdots \Big(\sum_{a_{m-1}\in\Zd} \chi_{a_{m-1}}\Big) (H-E_0+1)^{-1} 
			\Big(\sum_{a_m\in\Zd} \chi_{a_m}\Big) \Big\|_p^p \notag \\
		& \leq \sum_{a_1,...,a_m\in \Zd}  \Big\| \prod_{l=1}^{m} \big(\chi_{a_{l-1}}(H-E_0+1)^{-1}\chi_{a_l}\big) \Big\|_p^p.
	\end{align}
	H\"older's inequality for Schatten classes and the Combes-Thomas estimate for Schatten norms 
	in the form of \cite[Lemma A.2]{DGHKM2016} (which equally holds for $H_{G}$) with $pm>d/2$ and $E=E_{0}-1$ imply 
	\begin{align}
		\label{eq:aPriori20}
 		\Big\| \prod_{l=1}^{m} (\chi_{a_{l-1}}(H-E_0+1)^{-1}\chi_{a_l}) \Big\|_p^p
		& \leq \prod_{l=1}^{m} \big\Vert \chi_{a_{l-1}}(H-E_0+1)^{-1}\chi_{a_l} \big\Vert_{pm}^{p} \notag \\
	  & \leq  C_{1}  \prod_{l=1}^{m} \e^{-\mu_{1}|a_{l-1}-a_l|}
	\end{align}
	with finite constants $\mu_{1}=\mu_{1,p,m} >0$ and $C_{1}=C_{1,p,m} >0$. The latter also depends on the 
	potential $U$, but only through $\inf\sigma(H) -E_{0}$.
	Inserting \eqref{eq:aPriori20} into \eqref{eq:aPriori2} and repeatedly using
	\begin{equation}
		\sum_{a_1\in\Zd} \e^{-\mu_{1}|a_0-a_1|} \e^{-\mu_{1}|a_1-a_2|} \leq C_{2} \e^{-\mu_{1}/2 |a_0-a_2|},
	\end{equation}
	where $C_{2}=C_{2,p,m} >0$ is a finite constant, 
	we obtain the lemma from \eqref{eq:aPriori1}.
\end{proof}

\end{appendix}


\begin{thebibliography}{DGHKM17}
\providecommand{\url}[1]{{\tt #1}}
\providecommand{\urlprefix}{URL }
\providecommand{\eprint}[2][]{e-print {#2}}

\bibitem[AENSS06]{artRSO2006AizEtAl2}
M.~Aizenman, A.~Elgart, S.~Naboko, J.~H. Schenker and G.~Stolz, Moment analysis
  for localization in random {S}chr\"odinger operators, {\em Invent. Math.\/}
  {\bf 163}, 343--413 (2006).

\bibitem[ASFH01]{artRSO2001AizEtAl}
M.~Aizenman, J.~H. Schenker, R.~M. Friedrich and D.~Hundertmark, Finite-volume
  fractional-moment criteria for {A}nderson localization, {\em Commun. Math.
  Phys.\/} {\bf 224}, 219--253 (2001).

\bibitem[AW15]{AizWarBook}
M.~Aizenman and S.~Warzel, {\em Random operators: Disorder effects on quantum
  spectra and dynamics\/}, Graduate Studies in Mathematics, vol. 168, Amer.
  Math. Soc., Providence, RI, 2015.

\bibitem[And67]{artAOC1967And}
P.~W. Anderson, Infrared catastrophe in fermi gases with local scattering
  potentials, {\em Phys. Rev. Lett.\/} {\bf 18}, 1049--1051 (1967).

\bibitem[ASS94]{artIND1994AvSeSi}
J.~Avron, R.~Seiler and B.~Simon, The index of a pair of projections, {\em J.
  Funct. Anal.\/} {\bf 120}, 220--237 (1994).

\bibitem[BS75]{MR0315482}
M.~{\relax Sh}. {Birman} and M.~Z. {Solomyak}, {R}emarks on the spectral shift
  function, {\em {J. Sov. Math.}\/} {\bf 3}, 408--419
  (1975) [Russian original: {\em {Zap. Nauchn. Sem. Leningr. Otd. Mat. Inst. Steklov.
  (LOMI)}\/} {\bf 27}, 33--46 (197\noopsort{5}2)].

\bibitem[BNSS06]{MR2303305}
A.~{Boutet de Monvel}, S.~Naboko, P.~Stollmann and G.~Stolz, Localization near
  fluctuation boundaries via fractional moments and applications, {\em J. Anal.
  Math.\/} {\bf 100}, 83--116 (2006).

\bibitem[BHL00]{MR1756112}
K.~Broderix, D.~Hundertmark and H.~Leschke, Continuity properties of
  {S}chr\"odinger semigroups with magnetic fields, {\em Rev. Math. Phys.\/}
  {\bf 12}, 181--225 (2000).

\bibitem[CL90]{carlac1990random}
R.~Carmona and J.~Lacroix, {\em Spectral theory of random {S}chr\"odinger
  operators\/}, Birkh\"auser, Boston, 1990.

\bibitem[CHK07a]{MR2362242}
J.-M. Combes, P.~D. Hislop and F.~Klopp, An optimal {W}egner estimate and its
  application to the global continuity of the integrated density of states for
  random {S}chr\"odinger operators, {\em Duke Math. J.\/} {\bf 140}, 469--498
  (2007).

\bibitem[CHK07b]{MR2352262}
J.-M. Combes, P.~D. Hislop and F.~Klopp, Some new estimates on the spectral
  shift function associated with random {S}chr\"odinger operators, in {\em
  Probability and mathematical physics\/}, CRM Proc. Lecture Notes, vol.~42,
  Amer. Math. Soc., Providence, RI, 2007, pp. 85--95.

\bibitem[CHN01]{MR1824200}
J.~M. Combes, P.~D. Hislop and S.~Nakamura, The {$L^p$}-theory of the spectral
  shift function, the {W}egner estimate, and the integrated density of states
  for some random operators, {\em Commun. Math. Phys.\/} {\bf 218}, 113--130
  (2001).

\bibitem[DPLDS15]{PhysRevB.92.220201}
D.-L. Deng, J.~H. Pixley, X.~Li and S.~Das~Sarma, Exponential orthogonality
  catastrophe in single-particle and many-body localized systems, {\em Phys.
  Rev. B\/} {\bf 92}, 220201--1--5 (2015).

\bibitem[DGHKM17]{DGHKM2016}
A.~Dietlein, M.~Gebert, P.~Hislop, A.~Klein and P.~M{\"u}ller, A bound on the
  averaged spectral shift function and a lower bound on the density of states
  for random {S}chr{\"o}dinger operators on {$\mathbb R^d$}, {\em Int. Math.
  Res. Not. (IMRN)\/} {\bf 2018}, 6673--6697 (2018).

\bibitem[{Dyn}83]{zbMATH03808305}
E.~M. {Dyn'kin}, {A constructive characterization of the Sobolev and Besov
  classes}, {\em {Proc. Steklov Inst. Math.}\/} {\bf 155}, 39--74
  (1983) [Russian original: {\em {Trudy Mat. Inst. Steklova}\/} {\bf 155}, 41--76
  (198\noopsort{3}1)].

\bibitem[Els05]{elstrodt2011measure}
J.~Elstrodt, {\em Ma\ss- und {I}ntegrationstheorie\/}, 4th ed., Springer,
  Berlin, 2005.

\bibitem[Fel71]{MR0270403}
W.~Feller, {\em An introduction to probability theory and its applications.
  {V}ol. {II}\/}, 2nd ed., Wiley, New York, 1971.

\bibitem[FP15]{MR3314510}
R.~L. Frank and A.~Pushnitski, Trace class conditions for functions of
  {S}chr\"odinger operators, {\em Commun. Math. Phys.\/} {\bf 335}, 477--496
  (2015).

\bibitem[Geb15]{MR3405952}
M.~Gebert, The asymptotics of an eigenfunction-correlation determinant for
  {D}irac-{$\delta$} perturbations, {\em J. Math. Phys.\/} {\bf 56},
  072110--1--18 (2015).

\bibitem[GKM14]{artAOC2014GKM}
M.~Gebert, H.~K{\"u}ttler and P.~M{\"u}ller, Anderson's orthogonality
  catastrophe, {\em Commun. Math. Phys.\/} {\bf 329}, 979--998 (2014).

\bibitem[GKMO16]{artAOC2015GKMO}
M.~Gebert, H.~K{\"u}ttler, P.~M{\"u}ller and P.~Otte, The exponent in the
  orthogonality catastrophe for {F}ermi gases, {\em J. Spectr. Theory\/} {\bf
  6}, 643--683 (2016).

\bibitem[GBLA02]{PhysRevB.65.081106}
Y.~Gefen, R.~Berkovits, I.~V. Lerner and B.~L. Altshuler, {A}nderson
  orthogonality catastrophe in disordered systems, {\em Phys. Rev. B\/} {\bf
  65}, 081106--1--4 (2002).

\bibitem[GK03]{MR1937430}
F.~Germinet and A.~Klein, Operator kernel estimates for functions of
  generalized {S}chr\"odinger operators, {\em Proc. Amer. Math. Soc.\/} {\bf
  131}, 911--920 (2003).

\bibitem[GK06]{Germinet2006}
F.~Germinet and A.~Klein, New characterizations of the region of complete
  localization for random {S}chr{\"o}dinger operators, {\em J. Stat. Phys.\/}
  {\bf 122}, 73--94 (2006).

\bibitem[HM10]{MR2596053}
P.~D. Hislop and P.~M{\"u}ller, The spectral shift function for compactly
  supported perturbations of {S}chr\"odinger operators on large bounded
  domains, {\em Proc. Amer. Math. Soc.\/} {\bf 138}, 2141--2150 (2010).

\bibitem[HKNSV06]{MR2200269}
D.~Hundertmark, R.~Killip, S.~Nakamura, P.~Stollmann and I.~Veseli{\'c}, Bounds
  on the spectral shift function and the density of states, {\em Commun. Math.
  Phys.\/} {\bf 262}, 489--503 (2006).

\bibitem[HS02]{MR1945282}
D.~Hundertmark and B.~Simon, An optimal {$L^p$}-bound on the {K}rein spectral
  shift function, {\em J. Anal. Math.\/} {\bf 87}, 199--208 (2002).

\bibitem[KNS15]{Khemani-NaturePhys15}
V.~Khemani, R.~Nandkishore and S.~L. Sondhi, {Nonlocal adiabatic response of a
  localized system to local manipulations}, {\em Nature Phys.\/} {\bf 11},
  560--565 (2015).

\bibitem[Kir87]{artSSF1987Kir}
W.~Kirsch, Small perturbations and the eigenvalues of the {L}aplacian on large
  bounded domains, {\em Proc. Amer. Math. Soc.\/} {\bf 101}, 509--512 (1987).

\bibitem[Kir89]{artRSO1989Kir}
W.~Kirsch, Random {S}chr\"odinger operators. {A} course, in {\em Schr\"odinger
  operators ({S}\o nderborg, 1988)\/}, Lecture Notes in Phys., vol. 345,
  Springer, Berlin, 1989, pp. 264--370.

\bibitem[KM07]{Kirsch:2007il}
W.~Kirsch and B.~Metzger, The integrated density of states for random
  {S}chr\"odinger operators, in {\em Spectral theory and mathematical physics:
  a {F}estschrift in honor of {B}arry {S}imon's 60th birthday\/}, Amer. Math.
  Soc., Providence, RI, 2007, pp. 649--696.

\bibitem[Kle13]{MR3106507}
A.~Klein, Unique continuation principle for spectral projections of
  {S}chr\"odinger operators and optimal {W}egner estimates for non-ergodic
  random {S}chr\"odinger operators, {\em Commun. Math. Phys.\/} {\bf 323},
  1229--1246 (2013).

\bibitem[KOS15]{MR3376020}
H.~K. Kn{\"o}rr, P.~Otte and W.~Spitzer, Anderson's orthogonality catastrophe
  in one dimension induced by a magnetic field, {\em J. Phys. A\/} {\bf 48},
  325202--1--17 (2015).

\bibitem[KOS14]{artAOC2014KOS}
H.~K{\"u}ttler, P.~Otte and W.~Spitzer, Anderson's orthogonality catastrophe
  for one-dimensional systems, {\em Ann. Henri Poincar\'e\/} {\bf 15},
  1655--1696 (2014).

\bibitem[Mah00]{mahan-book00}
G.~D. Mahan, {\em Many Particle Physics\/}, 3rd ed., Plenum, New York, 2000.

\bibitem[McC67]{McCarthy}
C.~A. McCarthy, {$c_{p}$}, {\em Israel J. Math.\/} {\bf 5}, 249--271 (1967).

\bibitem[NTTV15]{Nakic:2015is}
I.~Naki{\'c}, M.~T{\"a}ufer, M.~Tautenhahn and I.~Veseli{\'c}, {Scale-free
  uncertainty principles and {W}egner estimates for random breather
  potentials}, {\em C. R. Acad. Sci. Paris, Ser. I\/} {\bf 353}, 919--923
  (2015).

\bibitem[OT90]{RevModPhys.62.929}
K.~Ohtaka and Y.~Tanabe, {Theory of the soft-x-ray edge problem in simple
  metals: historical survey and recent developments}, {\em Rev. Mod. Phys.\/}
  {\bf 62}, 929--991 (1990).

\bibitem[PF92]{pastfig1992random}
L.~Pastur and A.~Figotin, {\em Spectra of random and almost-periodic
  operators\/}, Grundlehren der Mathematischen Wissenschaften, vol. 297,
  Springer, Berlin, 1992.

\bibitem[Pus09]{art2009PushnIndex}
A.~Pushnitski, Operator theoretic methods for the eigenvalue counting function
  in spectral gaps, {\em Ann. Henri Poincar\'e\/} {\bf 10}, 793--822 (2009).

\bibitem[RS80]{reed1980methods1}
M.~Reed and B.~Simon, {\em Methods of modern mathematical physics. {I}.
  Functional analysis\/}, 2nd ed., Academic Press, New York, 1980.

\bibitem[RS78]{reed1980methods4}
M.~Reed and B.~Simon, {\em Methods of modern mathematical physics. {IV}.
  {A}nalysis of operators\/}, Academic Press, New York, 1978.

\bibitem[She14]{Shen2014}
Z.~Shen, An improved {C}ombes--{T}homas estimate of magnetic {S}chr{\"o}dinger
  operators, {\em Arkiv Mat.\/} {\bf 52}, 383--414 (2014).

\bibitem[Sim82]{artSEM1982Sim}
B.~Simon, Schr\"odinger semigroups, {\em Bull. Amer. Math. Soc. (N.S.)\/} {\bf
  7}, 447--526 (1982). 
  Erratum: {\em Bull. Amer. Math. Soc. (N.S.)\/} {\bf 11}, 426 (1984).

\bibitem[Sim05a]{MR2105995}
B.~Simon, {\em Functional integration and quantum physics\/}, 2nd ed., AMS
  Chelsea Publishing, Providence, RI, 2005.

\bibitem[Sim05b]{simon2005trace}
B.~Simon, {\em Trace ideals and their applications\/}, 2nd ed., Mathematical
  Surveys and Monographs, vol. 120, Amer. Math. Soc., Providence, RI, 2005.

\bibitem[Sob93]{MR1208792}
A.~V. Sobolev, Efficient bounds for the spectral shift function, {\em Ann.
  Inst. Henri Poincar\'e\/} {\bf 58}, 55--83 (1993).

\bibitem[Sto01]{MR1935594}
P.~Stollmann, {\em Caught by disorder\/}, Progress in Mathematical Physics,
  vol.~20, Birkh\"auser, Boston, 2001.

\bibitem[TV16]{MR3519210}
M.~Tautenhahn and I.~Veseli{\'c}, Sampling inequality for {$L^2$}-norms of
  eigenfunctions, spectral projectors, and {W}eyl sequences of {S}chr\"odinger
  operators, {\em J. Stat. Phys.\/} {\bf 164}, 616--620 (2016).

\bibitem[Ves08]{MR2378428}
I.~Veseli{\'c}, {\em Existence and regularity properties of the integrated
  density of states of random {S}chr\"odinger operators\/}, Lecture Notes in
  Mathematics, vol. 1917, Springer, Berlin, 2008.

\bibitem[Yaf92]{yafaev1992mathscattering}
D.~R. Yafaev, {\em Mathematical scattering theory. General theory\/}, Amer.
  Math. Soc., Providence, RI, 1992.

\end{thebibliography}

\newcommand{\noopsort}[1]{} \newcommand{\singleletter}[1]{#1}

\end{document}